\documentclass[12pt]{article}
\usepackage[usenames,dvipsnames,svgnames,table]{xcolor} 
\usepackage{kotex}
\usepackage{graphicx}
\usepackage{epsfig}
\usepackage{amssymb, amsmath, mathtools,amsthm}
\usepackage{verbatim}
\usepackage{natbib}
\usepackage[affil-it]{authblk}
\usepackage{mathrsfs}
\usepackage{here}
\usepackage{makecell}
\usepackage{tikz}
\usepackage[shortlabels]{enumitem}
\usepackage{yfonts}
\usepackage{comment}
\usepackage{longtable}
\usepackage{tikz,amsmath, amssymb,bm,color}
\usetikzlibrary{circuits.logic.US,circuits.logic.IEC,fit}
\usepackage{multirow}
\usepackage{adjustbox}
\usepackage{algorithmicx}
\usepackage[ruled]{algorithm}
\usepackage{algpseudocode}

\usepackage{caption}
\usepackage{arydshln,leftidx}
\usepackage{booktabs}
\usepackage{rotating}
\usepackage{lipsum}
\usepackage{appendix}
\usepackage{titlesec}
 \usepackage{etoolbox, chngcntr}
\AtBeginEnvironment{appendices}{%
 \titleformat{\section}{\bfseries\Large}{\appendixname~\thesection.}{0.5em}{}%
 \titleformat{\subsection}{\bfseries}{\thesubsection}{0.5em}{}%
\counterwithin{equation}{section}
}

\usepackage[colorlinks=true, citecolor=black, urlcolor=black]{hyperref}
\usepackage[colorinlistoftodos, textsize=scriptsize]{todonotes} 

%
%
%
%

%
%
%
%
\newtheorem{theorem}{Theorem}[section]
\newtheorem{lemma}{Lemma}[section]
\newtheorem{cor}{Corollary}[section]
\newtheorem{prop}{Proposition}[section]
\newtheorem{remark}{Remark}[section]
%
%
%

\DeclareMathOperator*{\argmax}{argmax}
%

%

\newcommand{\bsDelta}{\boldsymbol{\Delta}}

\newcommand*\interior[1]{\mathring{#1}}

\newcommand{\bfs}{ \mathbf{s}}
\newcommand{\bfu}{ \mathbf{u}}
%
%

%
\newcommand{\calA}{\mathcal{A}}

\newcommand{\calC}{\mathcal{C}}
\newcommand{\calD}{\mathcal{D}}

\newcommand{\calF}{\mathcal{F}}
\newcommand{\calG}{\mathcal{G}}

\newcommand{\calK}{\mathcal{K}}

\newcommand{\calM}{\mathcal{M}}
\newcommand{\calP}{\mathcal{P}}
\newcommand{\calQ}{\mathcal{Q}}

\newcommand{\calS}{\mathcal{S}}

\newcommand{\calU}{\mathcal{U}}

\newcommand{\calZ}{\mathcal{Z}}
%
%

\newcommand{\kommentar}[1]{}
\def\Real{\hbox{I\kern-.1667em\hbox{R}}}
\def\Reals{\hbox{\scriptsize I\kern-.1667em\hbox{R}}}

\newcommand{\bsu}{\boldsymbol{u}}

\newcommand{\bss}{\boldsymbol{s}}

\newcommand{\bsbeta}{\boldsymbol{\beta}}

\newcommand{\bsgamma}{\boldsymbol{\gamma}}
\newcommand{\bssig}{\boldsymbol{\sigma}}

\newcommand{\bsLambda}{\boldsymbol{\Lambda}}
\newcommand{\bsTheta}{\boldsymbol{\Theta}}
\newcommand{\bsPhi}{\boldsymbol{\Phi}}
\newcommand{\diag}{\mbox{diag}}
\newcommand{\bsSig}{\boldsymbol{\Sigma}}

\newcommand{\bsOmega}{\boldsymbol{\Omega}}
\newcommand{\bszeta}{\boldsymbol{\zeta}}
\newcommand{\bsPsi}{\boldsymbol{\Psi}}
\newcommand{\bsxi}{\boldsymbol{\xi}}

\newcommand{\bfX}{{\bf X}}
\newcommand{\bfA}{{\bf A}}
\newcommand{\bfB}{{\bf B}}
\newcommand{\bfx}{{\bf x}}
\newcommand{\bfV}{{\bf V}}

\newcommand{\bfS}{{\bf S}}
\newcommand{\bfU}{{\bf U}}
\newcommand{\bfE}{{\bf E}}
\usepackage{dsfont}
\DeclareMathOperator{\trace}{tr}
\DeclareMathOperator{\frob}{F}
\newcommand{\I}{\textbf{I}}
\DeclareMathOperator{\edge}{E}
\DeclareMathOperator{\hessian}{H}
\DeclareMathOperator{\vectorize}{vec}
\DeclareMathOperator*{\minimize}{Minimize}
\DeclareMathOperator{\bigO}{O}

\newcommand{\diff}{\text{d}}
\newcommand{\bea}{\begin{eqnarray*}}
   \newcommand{\eea}{\end{eqnarray*}}
\newcommand{\bean}{\begin{eqnarray}}
\newcommand{\eean}{\end{eqnarray}}
\newcommand{\benu}{\begin{enumerate}}
\newcommand{\eenu}{\end{enumerate}}

\newcommand{\bbR}{\mathbb{R}}

\newcommand{\bbP}{\mathbb{P}}

\newcommand{\cM}{\mathcal{M}}

\newcommand{\Exp}{\text{Exp}}
\newcommand{\N}{\text{N}}
\newcommand{\eigmin}{\lambda_{\min}}
\newcommand{\eigmax}{\lambda_{\max}}
\newcommand{\sgij}{\sigma_{ij}}
\newcommand{\sgii}{\sigma_{ii}}
\newcommand{\deltaij}{\Delta_{\calZ,ij}}
\newcommand{\deltaii}{\Delta_{\calZ,ii}}
\newcommand{\omegaij}{\omega_{\calZ,ij}}

\newcommand{\psiij}{\psi_{ij}}

\newcommand{\bfzero}{\mathbf{0}}

\newcommand{\dsone}{\mathds{1}}
\newcommand{\zij}{z_{ij}}

\makeatletter
\newcommand*{\rom}[1]{\expandafter\@slowromancap\romannumeral #1@}
\makeatother


\parindent=15pt
\textheight 22cm 
\textwidth  16cm 
\oddsidemargin 0mm 
\topmargin  5mm
\headheight 0mm

\title{Covariance Structure Estimation with Laplace Approximation}
\author[1]{Bongjung Sung}
\author[2]{Jaeyong Lee}
\affil[1,2]{Department of Statistics\\ Seoul National University}

\begin{document}\setlength {\marginparwidth }{2cm}

\maketitle 
 
\begin{abstract}
 Gaussian covariance graph model is a popular model in revealing underlying dependency structures among random variables. A Bayesian approach to the estimation of covariance structures uses priors that force zeros on some off-diagonal entries of covariance matrices and put a positive definite constraint on matrices. In this paper, we consider a spike and slab prior on off-diagonal entries, which uses a mixture of point-mass and normal distribution. The point-mass naturally introduces sparsity to covariance  structures so that the resulting posterior from this prior renders covariance structure learning. Under this prior, we calculate posterior model probabilities of covariance structures using Laplace approximation. We show that the error due to Laplace approximation becomes asymptotically marginal at some rate depending on the posterior convergence rate of covariance matrix under the Frobenius norm. With the approximated posterior model probabilities, we propose a new framework for estimating a covariance structure. Since the Laplace approximation is done around the mode of conditional posterior of covariance matrix, which cannot be obtained in the closed form, we propose a block coordinate descent algorithm to find the mode and show that the covariance matrix can be estimated using this algorithm once the structure is chosen. Through a simulation study based on five numerical models, we show that the proposed method outperforms graphical lasso and sample covariance matrix in terms of root mean squared error, max norm, spectral norm, specificity, and sensitivity. Also, the advantage of the proposed method is demonstrated in terms of accuracy compared to our competitors when it is applied to linear discriminant analysis (LDA) classification to breast cancer diagnostic dataset.
\end{abstract}

\section{Introduction}\label{sec:intro}
The sparse covariance matrix estimation problem based on a high-dimensional dataset, where the sample size $n$ exceeds the dimension of variable $p$, has been grown in importance in multivariate data analysis since the problem is crucial to uncover underlying dependency structures among $p$-dimensional variables. The sparse covariance matrix estimation plays a key role in many multivariate statistical inferences, such as principal component analysis (PCA), linear discriminant analysis (LDA), and time series analysis. In the estimation, Gaussian covariance graph model provides an excellent tool, assuming multivariate normal distribution on data. Under the normality, the covariance matrix induces a bi-directed graph and the absence of an edge between two variables is equivalent to zero on the covariance between them. Hence, inference of covariance structure, or equivalently graphical structure, can lead to the estimation of covariance matrix. But the inference requires introducing sparsity to the structure, as the structure is determined by the zeros in the covariance matrix.

In the frequentist literature, introducing sparsity to the structure has been mostly done using regularization methods. \cite{bien2011sparse} and \cite{yuan2007} considered $\ell_1$-type penalty on the negative log-likelihood. They introduced sparsity to the structure of the model using sparse and shrinkage estimators. On the other hand, \cite{bickel2008covariance}, \cite{rothman2009generalized}, and \cite{cai2011adaptive} considered thresholding method. However, the estimated covariance matrices from these methods sometimes contradicted with a positive definite constraint on covariance matrices. Other works in the frequentist context include banding and tapering, e.g., \cite{wu2003}, \cite{bickel2008regularization}, and \cite{kauf2012}. Their estimators are often used for Gaussian covariance graph model and are well supported by the asymptotic statistical properties.

The Bayesian methods for introducing sparsity to the covariance matrix have been also developed. The G-inverse Wishart prior, considered by \cite{silva2009hidden}, was often used in the Bayesian framework. However, the method had a limitation in applications as the dimension gets higher because of posterior intractability. \cite{khare2011wishart} extended G-inverse Wishart prior to a broader class and proposed a blocked Gibbs sampler to sample covariance matrices from the resulting posterior, but \cite{khare2011wishart} considered only decomposable graphs. \cite{wang15} also considered covariance graph model for the inference of covariance matrices and also provided a blocked Gibbs sampler that is applicable to all graphs but under the continuous spike and slab prior, which uses the mixture of two Gaussian distributions on off-diagonal entries, one with sufficiently small variance and the other with variance substantially far from $0$, and the exponential prior on diagonal entries. But the method proposed by \cite{wang15} had inherent difficulty in introducing sparsity to the structure of the model due to the absolute continuity of prior. Furthermore, none of \cite{silva2009hidden}, \cite{khare2011wishart}, and \cite{wang15} did not attain any asymptotic statistical properties. \cite{lee2021} were able to introduce sparsity to the covariance structure using beta-mixture shrinkage prior and attained the posterior convergence rate of covariance matrix under the Frobenius norm. However, compared to frequentist literature, Bayesian literature still lacks in the methods for inference of sparse covariance matrices, and most of the proposed methods are not theoretically well supported. Furthermore, they have a limit in introducing sparsity to the covariance structures due to the absolute continuity of prior.

To fill the gap in the literature of Bayesian inference, we propose a method for estimating covariance structures under spike and slab prior which uses a mixture of point-mass and normal distribution and exponential prior on off-diagonal entries and diagonal entries of covariance matrices, respectively, which is a modified version of the prior in \cite{wang15}. To overcome the limitations of the current Bayesian inference on sparse covariance matrices we have described, we propose a method that uses Laplace approximation to compute posterior probabilities of covariance structures, generates MCMC samples of graphs using Metropolis-Hastings algorithm proposed by \cite{liu2019empirical} and chooses the model either by median probability model (MPM) or maximum a posteriori (MAP). We estimate covariance matrix by the model of conditional posterior of covariance matrix given the structure.

Because of the enforced zero due to point-mass prior, a blocked Gibbs sampler for sampling covariance matrix from the resulting posterior or reversible jump MCMC (RJMCMC) for computing posterior model probabilities of covariance structures is not suitable in this case. To be specific, the enforced zero due to point-mass makes it difficult to derive conditional posteriors induced from the proposed prior if one considers the blocked Gibbs sampler proposed by \cite{wang15}, hence bthe locked Gibbs sampler is not applicable to this case. Furthermore, if point-mass is introduced, there are $2^{\binom{p}{2}}$ models that RJMCMC has to visit over, which makes practical implementation extremely difficult as $p$ grows and the estimated posterior model probability of the covariance structure unreliable. We show that the error by Laplace approximation becomes asymptotically marginal at some rate depending on the posterior convergence rate of covariance matrix under the Frobenius norm. 

One of the advantages of using the suggested prior in this paper is that the posterior is always twice continuously differentiable. In the estimation of sparse precision matrices, \cite{banerjee2015bayesian} proposed the Bayesian version of graphical lasso, which uses Laplace approximation to compute posterior model probabilities of graphical structures and chooses the final model by MPM. \cite{banerjee2015bayesian} considered Laplace prior on off-diagonal entries. Since Laplace prior is not differentiable at the median, Laplace approximation was applicable to only regular models. The term regularity may follow the terminology used by \cite{yuan2005} and \cite{banerjee2015bayesian}. \cite{yuan2005} and \cite{banerjee2015bayesian} put $\ell_1$-type penalty on the entries of the concentration matrix, which is the graphical lasso. When an off-diagonal entry is set as zero due to graphical lasso, the integrand in Laplace approximation becomes non-differentiable. If the model includes such an off-diagonal entry as a free variable, \cite{yuan2005} and \cite{banerjee2015bayesian} referred to such model as a non-regular model, and a regular model otherwise. \cite{banerjee2015bayesian} showed that the posterior probability of regular models is not smaller than that of their non-regular counterparts so that one can consider only regular models if one is to choose a model by MPM. However, one had to determine the regularity of each model and it turned out that the ratio of regular models among all possible models is extremely low. Moreover, the constraint regularity led the estimated matrix to be extremely sparse. On the contrary, since the posterior induced from the prior in this paper is always twice continuously differentiable, we do not have to consider the regularity. Thus, we do not have to suffer from the drawback of \cite{banerjee2015bayesian} we just mentioned. Also, even though \cite{banerjee2015bayesian} put a limit on the number of edges of a graph by Frequentist graphical lasso, there were still too many models to search over and one had to judge the regularity of each model which yields inefficiency in choosing the model. However, by using the algorithm in \cite{liu2019empirical}, we do not have to search over all models. Furthermore, since we force zeros on off-diagonal entries, sparsity can be naturally introduced to the structure of covariance matrix compared to those with continuous spike and slab prior in \cite{wang15} and \cite{lee2021}. Hence, our prior is more useful in introducing sparsity to the covariance structure.

The paper is organized as follows. In section \ref{sec:prior}, we introduce notations and preliminaries necessary for this paper and the prior considered in this paper. Then, in section \ref{sec:post}, we describe Laplace approximation to calculate posterior model probabilities of covariance structures and Metropolis-Hastings algorithm to generate MCMC samples of graphs from the resulting approximated posterior. We choose the final model from the MCMC samples either by MPM and MAP. Also, we show that the error by Laplace approximation becomes asymptotically negligible at some rate depending on the posterior convergence rate of covariance matrix and propose a block descent algorithm to find the mode of conditional posterior of covariance matrix as the Laplace approximation is done around it, which cannot be obtained in the closed form. We describe how to estimate the covariance matrix when the covariance structure is given using this algorithm. Finally, in section \ref{sec:simul}, we provide the simulation results for five numerical models and breast cancer diagnostic dataset. The conclusion will be discussed in \ref{sec:discuss}. 

\section{Prior and Posterior Convergence Rate}\label{sec:prior}
\subsection{Notations and Preliminaries}

Consider the graph $G=V\times V$ where $V$ is the set of $p$ nodes and $E\subseteq V \times V$ is the set of edges. Let $\mathcal{Z}=(\zij)$ be a $p(p-1)/2$ edge inclusion vector, or equivalently a covairance (graphical) structure indicator, i.e., $\zij=1$ if $(i,j)$ or $(j,i)\in E$ and $\zij=0$ otherwise, for $i<j$. Note that in this paper we consider bi-directed graphs, especially Gaussian covariance graph model, and thus $z_{ij}=1$ if and only if $z_{ij}$ for all $i,j\in V$ with $i<j$. Denote the sum of entries in $\mathcal{Z}$ by $\#\calZ$, the number of edges in $G$. Suppose two positive numerical sequences $a_n$ and $b_n$ are given. If $a_n/b_n$ is bounded as $n\rightarrow \infty$, we denote this by $a_n\lesssim b_n$ $(b_n\gtrsim a_n)$, or equivalently $b_n=\bigO(a_n)$. If $a_n\lesssim b_n$ and $b_n\lesssim a_n$ both hold, we write $a_n\asymp b_n$. 

Let $\cM$ be the set of all $p\times p$ real symmetric matrices. We denote the set of all $p\times p$ positive definite matrices by $\cM^+\subset \cM$. Suppose $\bfA=\left(a_{ij}\right)$ and $\bfB=\left(b_{ij}\right)\in \cM$. Denote the Hadamard product of $\bfA$ and $\bfB$ by $\bfA\circ\bfB$ and the Kronecker product by $\bfA\otimes\bfB$. Let $\eigmin\left(\bfA\right)$ and $\eigmax\left(\bfA\right)$ denote the smallest and the largest eigenvalues of $\bfA$, respectively. Also, let $||\bfA||_{\infty}=\max_{i,j}|a_{ij}|$, $||\bfA||_{\frob}=\sqrt{\sum_{i,j=1}^pa_{ij}^2}$, and denote the spectral norm by $||\cdot||_2$. Note the following facts hold: 
\begin{align}
    &||\bfA||_{\infty}\leq ||\bfA||_2\leq ||\bfA||_{\frob}\leq p||\bfA||_{\infty},\\
    &||\bfA\bfB||_{\frob}\leq ||\bfA||_2||\bfB||_{\frob}.
\end{align}
We write $\bfB\prec\bfA$ ($\bfA\succ\bfB$) if $\bfA-\bfB\in \calM^+$.

\subsection{Prior Setting}
Suppose we observe $n$ independent random samples $X_1,\dots,X_n|\bsSig$ from $\N_p(\bfzero,\bsSig)$, where $\bsSig=(\sgij)\in \cM^+$ and $\bsSig$ is sparse, i.e., many of $\sigma_{ij}$ are zeros. For Bayesian inference on the sparsity of $\bsSig$ in this paper, we consider a spike and slab prior on off-diagonal entries, which uses a mixture of point-mass and Gaussian distribution, and exponential prior on diagonal entries as follows:
\begin{align}
    \pi^u\left(\sgij\right)&=(1-q)\delta_0+q\N(\sgij|0,v^2), \quad 1\leq i<j \leq p,\\
    \pi^u\left(\sgii\right)&=\Exp\left(\sgii|\lambda/2\right), \quad 1\leq i \leq p,
\end{align}
\noindent where $v$ is some positive constant substantially far from $0$, $\lambda>0$, $q\in(0,1)$, and $\delta_0$ is point-mass. Here we consider $\lambda/2$ to be the rate parameter of the exponential distribution. By (3) and (4), the prior on the entries of $\Sigma$ is defined as 
\begin{align}
    \pi^u\left(\Sigma\right)=\prod_{i<j}((1-q)\delta_0+q\N(\sgij|0,v^2))\prod_{i=1}^p\Exp\left(\sgii|\lambda/2\right).
\end{align}

The prior (5) can be equivalently defined through a hierarchical model with an edge inclusion vector $\mathcal{Z}=(\zij)$,
\begin{align*}
    \pi^u(\Sigma|\calZ)&=\prod_{z_{ij}=1}\N(\sgij|0,v^2)\prod_{i=1}^p \Exp(\sgii|\lambda/2), \\
    \pi^u(\calZ)&=\prod_{i<j}\pi^{\zij}(1-\pi)^{1-\zij}.
\end{align*}
Thus, one can interpret prior (3) as $\sgij=0$ if $z_{ij}=0$ with probability $1-q$ and $\sgij\sim\N(\cdot|0,v^2)$ if $z_{ij}=1$ with probability $q$ for $i<j$. Hence, $\calZ$ can be seen as a covariance structure indicator, or equivalently a graphical structure indicator, following the terminology used by \cite{banerjee2015bayesian}. Note $q$ is the edge acceptance probability; and the larger $q$ is, the less model becomes sparse. We may choose not too small $q$ and $v$ to avoid the extreme sparsity of the model. Compared to the prior in \cite{wang15}, our prior naturally introduces sparsity to the structure of model with the point-mass at zero. Define the set
\begin{align}
    \mathcal{U}(\tau)=\{\calC\in \cM^+: 1/\tau \leq \lambda_{\min}\left(\calC\right)\leq \lambda_{\max} \left(\calC\right) \leq \tau \},
\end{align}
where $\tau>1$. We restrict $\Sigma$ on (6) and so consequently we have the following prior
\begin{align}
\pi(\Sigma)\propto \pi^u\left(\bsSig\right)\dsone(\Sigma \in \calU(\tau)).
\end{align}
Note that this restriction was frequently used in many statistical inferences on sparse covariance matrices. The restriction $\calU(\tau)$ is useful in deriving asymptotic statistical properties for our proposed method under regular conditions, though we consider $\tau$ as $\infty$ in a practical implementation.  

Compared to prior in \cite{wang12} or \cite{banerjee2015bayesian}, which considered Laplace prior that is not differentiable at its median on off-diagonal entries, prior in (7) is always twice continuously differentiable. Thus, if we are to compute the posterior model probabilities of $\calZ$ using Laplace approximation, prior on off-diagonal entries in (7) takes the advantage over Laplace prior because it is twice continuously differentiable and we do not need to check the regularity of the model. 

\section{Posterior Computation}\label{sec:post}
\subsection{Posterior}
Suppose $\bfX_n=(X_1,\dots,X_n)^t$ follows $\N_p(\bfzero,\bsSig)$ and $\bsSig$ follows prior (7). The likelihood of $\bfX_n$ is given $\bsSig$ as follows. 
\begin{align}
\begin{split}
    \pi(\bfX_n|\bsSig)&=\prod_{i=1}^n\frac{1}{|2\pi\bsSig|^{1/2}}\exp\left(-\frac{1}{2}X_i^t\bsSig^{-1}X_i\right)\\
    &\propto \prod_{i=1}^n|\bsSig|^{-1/2}\exp\left(-\frac{1}{2}\trace\left(X_i^t\bsSig^{-1}X_i\right)\right)\\
    &=\prod_{i=1}^n \exp\left(-\frac{1}{2}\log|\bsSig|-\frac{1}{2}\trace\left(X_iX_i^t\bsSig^{-1}\right)\right)\\
    &=\exp\left(-\frac{n}{2}\log|\bsSig|-\frac{1}{2}\trace\left(\sum_{i=1}^nX_iX_i^t\bsSig^{-1}\right) \right)\\
    &=\exp\left(-\frac{n}{2}\log|\bsSig|-\frac{n}{2}\trace\left(\bfS\bsSig^{-1}\right)\right),
\end{split}
\end{align} 
 where $\bfS=\frac{1}{n}\bfX_n^t\bfX_n=(s_{ij})$, which is the sample covariance matrix. 
Note 
\begin{align}
\pi\left(\bsSig|\calZ\right)\pi^u\left(\calZ\right)&\propto (1-q)^{p(p-1)/2-\#\calZ}\prod_{\zij=1}q\N(\sgij|0,v^2)\prod_{i=1}^p\Exp(\sgii|\frac{\lambda}{2})\dsone\left(\bsSig\in\calU_{\calZ}(\tau)\right)\nonumber\\
&\propto\left(\frac{q}{1-q}\frac{1}{\sqrt{2\pi}v}\right)^{\# \calZ}\exp\left(-\frac{1}{2v^2}\sum_{\zij=1}\sgij^2-\frac{\lambda}{2}\sum_{i=1}^p \sgii\right)\dsone\left(\bsSig\in\calU_{\calZ}(\tau)\right)\nonumber\\
&=\left(\frac{q}{1-q}\frac{1}{\sqrt{2\pi}v}\right)^{\# \calZ}\exp\left(-\frac{n}{2}p(\bsSig,\calZ)\right)\dsone\left(\bsSig\in\calU_{\calZ}(\tau)\right),
\end{align}
where $p(\bsSig,\calZ)=1/nv^2\sum_{\zij=1}\sgij^2+\lambda/n\sum_{i=1}^p \sgii$ and
\begin{align*}
    \calU_{\calZ}(\tau)=\{\bsSig=(\sgij)\in\calU(\tau):\sgij=0 \text{ if } \zij=0\}.
\end{align*}

Now, we obtain the marginal posterior of $\calZ$ under prior (7). By Bayes' rule, together with (8) and (9), we have the following conditional joint probability density function of $\calZ,\bsSig|\bfX_n$. 
\begin{align}
\pi\left(\calZ,\bsSig|\bfX_n\right)&\propto \pi\left(\bfX_n|\bsSig,\calZ\right)\pi\left(\bsSig|\calZ\right)\pi^u\left(\calZ\right) \nonumber \\
&\propto \left(\frac{q}{1-q}\frac{1}{\sqrt{2\pi}v}\right)^{\# \calZ}\exp\left(-\frac{n}{2}\log|\bsSig|-\frac{n}{2}\trace\left(\bfS\bsSig^{-1}\right)-\frac{n}{2}p(\bsSig,\calZ)\right)\dsone\left(\bsSig\in\calU_{\calZ}(\tau)\right)\nonumber \\
&= \left(\frac{q}{1-q}\frac{1}{\sqrt{2\pi}v}\right)^{\# \calZ}\exp\left(-\frac{n}{2}r_{\calZ}(\bsSig,\bfX_n)\right)\dsone\left(\bsSig\in\calU_{\calZ}(\tau)\right),
\end{align}
where $r_{\calZ}(\bsSig,\bfX_n)=\log|\bsSig|+\trace\left(\bfS\bsSig^{-1}\right)+p(\bsSig,\calZ)$. Let $\bar{E}_{\calZ}=\{(i,j):1\leq i=j\leq p \text{ or } z_{ij}=1\}$. For the notational simplicity, denote $\prod_{(i,j)\in\bar{E}_{\calZ}}d\sigma_{ij}$ by $d\bsSig_{\calZ}$. By (10), we have
\begin{align}
\begin{split}
    \pi(\calZ|\bfX_n)&\propto \int_{\bsSig_{\calZ}\in\calU_{\calZ}(\tau)}\pi\left(\calZ,\bsSig_{\calZ}|\bfX_n\right)d\bsSig_{\calZ}\\
    &\propto  \left(\frac{q}{1-q}\frac{1}{\sqrt{2\pi}v}\right)^{\# \calZ}\int_{\bsSig_{\calZ}\in\calU_{\calZ}(\tau)}\exp\left(-\frac{n}{2}r_{\calZ}\left(\bsSig_{\calZ},\bfX_n\right)\right)d\bsSig_{\calZ}.
\end{split}
\end{align}
Note that the posterior of $\calZ$ is very intractable. But as $r_{\calZ}(\bsSig,\bfX_n)$ is twice continuously differentiable and has a minimizer on the domain $\calU_{\calZ}(\tau)$, if we choose $\tau$ so that $\calU_{\calZ}(\tau)$ is broad enough to contain the minimizer, we approximate $\pi(\calZ|\bfX_n)$ using Laplace approximation in Section \ref{laplace}.

\subsection{Laplace Approximation}\label{laplace}
In this section, we approximate $\pi(\calZ|\bfX_n)$ in (11) using Laplace approximation. Suppose $\bsSig^*_{\calZ}=(\sigma_{\calZ,ij}^*)\in \calU_{\calZ}(\tau)$ is the minimizer of $r_{\calZ}(\bsSig_{\calZ},\bfX_n)$ on $\calU_{\calZ}(\tau)$, where $\bsSig_{\calZ}=(\sigma_{\calZ,ij})\in\calU_{\calZ}(\tau)$ and $\bfX_n$ is given. Then, the Laplace approximation is done around $\bsSig^*_{\calZ}$. So we have to find $\bsSig^*_{\calZ}$ first. Observe that $\bsSig^*_{\calZ}$ is the solution of optimization problem (12). 
\begin{align}
\minimize_{\bsSig_{\calZ}=(\sigma_{\calZ,ij})\in \calU_{\calZ}(\tau)} \log|\bsSig_{\calZ}|+\trace\left(\bfS\bsSig_{\calZ}^{-1}\right)+\frac{1}{nv^2}\sum_{\zij=1}\sigma_{\calZ,ij}^2+\frac{\lambda}{n}\sum_{i=1}^p \sigma_{\calZ,ii}
\end{align}
Objective function $r_{\calZ}(\bsSig_{\calZ},\bfX_n)$ in (12) can be seen as a regularized negative log likelihood by putting $\ell_2$-type penalty on off-diagonal entries and $\ell_1$-type penalty on diagonal entries. Note that $r_{\calZ}(\cdot,\bfX_n)$ is not convex on $\calU_{\calZ}(\tau)$ and $\bsSig^*_{\calZ}$ cannot be obtained in the closed form. Hence, the optimization problem (12) can be reduced to convex optimization problem if we consider the set $\calQ_{\calZ}(\tau)$ instead of $\calU_{\calZ}(\tau)$ in the optimization problem (12).

 Observe that $r_{\calZ}(\cdot,\bfX_n)$ is convex on the set $\calQ_{\calZ}(\tau)=\{W\in \calU_{\calZ}(\tau): W\prec 2\bfS\}$. Provided that $\bsSig_{\calZ}^*$ belongs to the set $\calQ_{\calZ}(\tau)$, solving the reduced optimization problem is more desired, since the convexity makes it easier to deriving algorithm for finding the solution $\bsSig_{\calZ}^*$ of the optimization problem (12). We have to resort to some numerical algorithm, as $\bsSig_{\calZ}^*$ cannot be obtained in the closed form. If such algorithm converges to the stationary point of the reduced optimization problem, we obtain a local minimum of $r_{\calZ}(\cdot,\bfX_n)$ but the local minimum can be global minimum due to the convexity.

 But if $p$ depends on $n$, $\bsSig^*_{\calZ}\prec 2\bfS$ does not necessarily hold. Note that we assume the dependency between $p$ and $n$ to derive asymptotic statistical properties for our proposed method in this paper, which are to be discussed in Section \ref{errorlap}. Thus, we pose assumptions on parameters so that $\bsSig^*_{\calZ}\prec 2\bfS$. Consider the following assumptions : 
 \begin{enumerate}
\item[\bf (A1)] $p \asymp n^{\beta}$ for some constant $0<\beta<1$.  
\item[\bf (A2)] $0<v$ is some constant, $1<\tau$, $\tau=\bigO(1)$, and $\lambda=\bigO(1)$.
\end{enumerate}

Assuming (A1) and (A2), we show that $\bsSig_{\calZ}^*\prec 2\bfS$ with probability tending to one. For simplicity, we may consider when $\calZ=\mathbf{1}_{p(p-1)/2}$ , since the similar argument can be applied to general $\calZ$, where $\mathbf{1}_{p(p-1)/2}$ is a $p(p-1)/2-$vector with all entries being $1$. Then, $\bsSig^*_{\calZ}$ must satisfy the normal equation 
\begin{align*}
    -\bsSig_{\calZ}^{-1}+\bsSig_{\calZ}^{-1}\bfS\bsSig_{\calZ}^{-1}+\bsLambda\circ\bsSig_{\calZ}+\frac{\lambda}{n}\mathbf{I}_p=\bfzero_{p\times p},
\end{align*}
or equivalently, 
\begin{align}
       -\bsSig_{\calZ}+\bsSig_{\calZ}\bfS\bsSig_{\calZ}+\bsSig_{\calZ}\left(\bsLambda\circ\bsSig_{\calZ}\right)\bsSig_{\calZ}+\frac{\lambda}{n}\bsSig_{\calZ}^2=\bfzero_{p\times p}, 
\end{align}
where $\bsLambda\in\calM$ is the matrix with 0 on diagonal entries and $1/nv^2$ on off-diagonal entries and $\mathbf{I}_p$ and $\bfzero_{p\times p}\in\calM$ is identity matrix and zero matrix, respectively. Since $\lambda=\bigO(1)$ and $\tau=\bigO(1)$, 
\begin{align}
\begin{split}
    ||\frac{\lambda}{n}\bsSig_{\calZ}^2||_2&\leq \frac{\lambda}{n}||\bsSig_{\calZ}||_2^2\\
    &\leq \frac{\lambda}{n}\tau^2 \rightarrow 0,
\end{split}
\end{align}
as $n\rightarrow\infty$. Also, 
\begin{align}
\begin{split}
    ||\bsSig_{\calZ}\left(\bsLambda\circ\bsSig_{\calZ}\right)\bsSig_{\calZ}||_2&\leq ||\bsSig_{\calZ}||_2^2 ||\bsLambda\circ\bsSig_{\calZ}||_2\\
    &\leq p\tau^2 ||\bsLambda\circ\bsSig_{\calZ}||_{\infty}\\
    &\leq p\tau^2 \max_{\zij=1}\{1/nv^2\cdot|\sigma_{\calZ,ij}|\}\\
    &\leq \tau^3 \frac{p}{nv^2}\rightarrow 0,
\end{split}
\end{align}
as $n\rightarrow \infty$. Here we used (1) in the second inequality and note that the last inequality holds by (A1) and $\tau=\bigO(1)$. Since the equation (13) is equivalent to equation (16)
\begin{align}
    2\bfS-\bsSig_{\calZ}=\bfS-[\bsSig_{\calZ}\left(\bsLambda\circ\bsSig_{\calZ}\right)\bsSig_{\calZ}+\frac{\lambda}{n}\bsSig_{\calZ}^2],
\end{align}
(14) and (15) imply that $\bsSig_{\calZ}^*$ belongs to $\calQ_{\calZ}(\tau)$ for all sufficiently large $n$. 

Thus, assuming (A1) and (A2), the optimization problem (12) can be reduced to the following optimization problem (17) 
\begin{align}
\minimize_{\bsSig_{\calZ}=(\sigma_{\calZ,ij})\in \calQ_{\calZ}(\tau)} \log|\bsSig_{\calZ}|+\trace\left(\bfS\bsSig_{\calZ}^{-1}\right)+\frac{1}{nv^2}\sum_{\zij=1}\sigma_{\calZ,ij}^2+\frac{\lambda}{n}\sum_{i=1}^p \sigma_{\calZ,ii},
\end{align}
where $\calQ_{\calZ}(\tau)=\{\calG\in\calU_{\calZ}(\tau):\calG\prec2\bfS\}$, which is a convex optimization problem. We consider the optimization problem (17) instead of (12). 

Now it remains to solve (17). To solve (17) as to obtain $\bsSig_{\calZ}^*$, we resort to a numerical algorithm that solves (17) as $\bsSig_{\calZ}^*$ cannot be obtained in the closed form. We provide a block coordinate descent algorithm for solving (17) in Section \ref{bcd}.  

Provided that $\bsSig_{\calZ}^*$ is in the hand, we apply Laplace approximation to (11). Let $\bsSig_{\calZ}=\bsSig^*_{\calZ}+\bsDelta_{\calZ}$, where $\bsDelta_{\calZ}=(\deltaij)$. Then we have 
\begin{align}
    r_{\calZ}(\bsSig_{\calZ},\bfX_n)=r_{\calZ}(\bsSig^*_{\calZ},\bfX_n)+k_{\calZ}\left(\bsDelta_{\calZ},\bfX_n\right)-\log\left|\bsSig_{\calZ}^*\right|-\trace\left(\bfS\bsOmega_{\calZ}^*\right),
\end{align}
where $k_{\calZ}\left(\bsDelta_{\calZ},\bfX_n\right)=\log\left|\bsSig^*_{\calZ}+\bsDelta_{\calZ}\right|+\trace\left(\bfS\left(\bsSig^*_{\calZ}+\bsDelta_{\calZ}\right)^{-1}\right)+\frac{1}{nv^2}\sum_{\zij=1}(2\sgij^*\deltaij+\deltaij^2)+\frac{\lambda}{n}\sum_{i=1}^p\deltaii$ and  $\bsOmega_{\calZ}^*=\left(\bsSig^*_{\calZ}\right)^{-1}=(\omegaij^*)$. Substituting (18) into (11), 
\begin{align}
    \pi\left(\calZ|\bfX_n\right)&\propto  \left(\frac{\pi}{(1-\pi)v}\frac{1}{\sqrt{2\pi}}\right)^{\# \calZ}\exp\left(-\frac{n}{2}r_{\calZ}\left(\bsSig_{\calZ}^*,\bfX_n\right)\right)\left|\bsSig_{\calZ}^*\right|^{\frac{n}{2}}\exp\left(\frac{n}{2}\trace\left(\bfS\bsOmega_{\calZ}^*\right)\right) \nonumber \\
    &\quad \int_{\bsSig_{\calZ}^*+\bsDelta_{\calZ} \in\calU_{\calZ}(\tau)}\exp\left(-\frac{n}{2}k_{\calZ}\left(\bsDelta_{\calZ},\bfX_n\right)\right)d\bsDelta_{\calZ},
\end{align}
where $d \bsDelta_{\calZ}=\prod_{(i,j)\in\bar{E}_{\calZ}}d \Delta_{\calZ,ij}$. Note that $k_{\calZ}$ is uniquely minimized at $\bsDelta_{\calZ}=\bfzero_{p\times p}$. Hence we have $(p+\#\calZ)\times(p+\#\calZ)$ Hessian matrix  $\hessian_{\bsSig_\calZ^*}=(h_{\{(i,j),(l,m)\}})$ of $k_{\calZ}$ at $\bsDelta_{\calZ}=\bfzero$, where $(i,j), (l,m)\in \bar{\edge}_{\calZ}$, and apply Laplace approximation to the integral in (19). Let $\bfU_{\calZ}=\bsOmega_{\calZ}^*\bfS\bsOmega_{\calZ}^*=(u_{\calZ,ij})$. Suppressing dependency on $\calZ$, write $u_{\calZ,{ij}}$ as $u_{ij}$ and $\omega_{\calZ,ij}^*$ as $\omega_{ij}$ for simplicity. By the supplementary material, we see that 
\begin{align}
 h_{\{(i,j),(l,m)\}}=\begin{cases}
2(-\omega_{ii}^*\omega_{jj}^*-(\omega_{ij}^*)^2+2u_{ji}\omega_{ji}^*+u_{jj}\omega_{ii}^*+u_{ii}\omega_{jj}^*+\frac{1}{nv^2})&\text{ , } i<j, \text{ } l<m, \text{ } (i,j)=(l,m)\\
 2(-\omega_{il}^*\omega_{jm}^*-\omega_{im}^*\omega_{jl}^*+u_{jl}\omega_{mi}^*+u_{jm}\omega_{li}^*+u_{li}\omega_{jm}^*+u_{mi}\omega_{jl}^*)&\text{ , } i<j, \text{ } l<m, \text{ } (i,j)\neq(l,m)\\
 2(-\omega_{il}^*\omega_{jl}^*+u_{jl}\omega_{li}^*+u_{li}\omega_{jl}^*)&\text{ , } i<j, \text{ } l=m \\
 -(\omega_{il}^*)^2+2u_{il}\omega_{li}^*&\text{ , } i=j, \text{ } l=m 
 \end{cases}.
\end{align}
 By (20), we approximate $\pi\left(\calZ|\bfX_n\right)$ by $\pi^*\left(\calZ|\bfX_n\right)$ as follows.
\begin{align}
\begin{split}
\pi^*\left(\calZ|\bfX_n\right)&\propto \left(\frac{\pi}{(1-\pi)v}\frac{1}{\sqrt{2\pi}}\right)^{\# \calZ}\exp\left(-\frac{n}{2}r_{\calZ}(\bsSig_{\calZ}^*,\bfX_n)\right)\left|\bsSig_{\calZ}^*\right|^{\frac{n}{2}}\exp\left(\frac{n}{2}\trace\left(\bfS\bsOmega_{\calZ}^*\right)\right)\\
&\qquad \exp\left(-\frac{n}{2}k_{\calZ}(\bfzero_{p\times p},\bfX_n)\right)\left(\frac{4\pi}{n}\right)^{(p+\#\calZ)/2}\left|\hessian_{\bsSig_{\calZ}^*}\right|^{-\frac{1}{2}}\\
&=\left(\frac{\pi}{(1-\pi)v}\frac{1}{\sqrt{2\pi}}\right)^{\# \calZ}\exp\left(-\frac{n}{2}r_{\calZ}(\bsSig_{\calZ}^*,\bfX_n)\right)\left(\frac{4\pi}{n}\right)^{(p+\#\calZ)/2}\left|\hessian_{\bsSig_{\calZ}^*}\right|^{-\frac{1}{2}}.
\end{split}
\end{align}

 Using this approximated posterior model probability of $\calZ$, which is a graphical structure indicator, we use Metropolis-Hastings algorithm proposed by \cite{liu2019empirical} to generate MCMC samples of $\calZ$. \cite{liu2019empirical} considered a symmetric proposal distribution $u$, which samples $\calZ'$ uniformly from $\calZ$ that differs from $\calZ$ in only one entry. The specific description of $u$ can be found in Section 5.1 of \cite{liu2019empirical}. This gives the following Metropolis-Hastings algorithm to generate MCMC samples of $\calZ$.  
\begin{algorithm}
    \caption{Metropolis-Hastings algorithm for generating MCMC samples of $\calZ$}
    \label{algorithm1}
    \begin{algorithmic}[1]
    \State Initial model structure indicator : $\calZ^{(0)}$, Given data : $\bfX_n$
    \For{$i=0,1,\dots,k-1$}
    \State $\mathcal{Z}^{\text{cand}}\sim u(\mathcal{Z}|\mathcal{Z}^{(i)})$
    \State $\alpha_i=\min\{1,\frac{\pi^*(\calZ^{\text{cand}}|\bfX_n)}{\pi^*(\calZ^{(i)}|\bfX_n)}\}$
    \State $U_i\sim U(0,1)$
    \State If $U_i\leq \alpha_i$, $\calZ^{(i+1)}=\calZ^{\text{cand}}$. Else, $\calZ^{(i+1)}=\calZ^{(i)}$. 
    \EndFor
    \end{algorithmic}
\end{algorithm}

Here $\pi^*(\calZ|\bfX_n)$ is defined in (21). We choose the final model by either MPM or MAP. Suppose $\tilde{\calZ}$ is chosen as the final model. With $\tilde{\calZ}$ and $\bfX_n$ given, by Bayes' rule, the conditional posterior of $\bsSig$ is given as following:
\begin{align*}
    \pi\left(\bsSig|\bfX_n,\tilde{\calZ}\right)&\propto \pi\left(\bfX_n|\bsSig,\tilde{\calZ}\right)\pi\left(\bsSig|\tilde{\calZ}\right)\pi^u\left(\tilde{\calZ}\right) \nonumber \\
&\propto\exp\left(-\frac{n}{2}\log|\bsSig|-\frac{n}{2}\trace\left(\bfS\bsSig^{-1}\right)-\frac{n}{2}p(\bsSig,\tilde{\calZ})\right)\dsone\left(\bsSig\in\calU_{\tilde{\calZ}}(\tau)\right)\nonumber \\
&=\exp\left(-\frac{n}{2}r_{\tilde{\calZ}}(\bsSig,\bfX_n)\right)\dsone\left(\bsSig\in\calU_{\tilde{\calZ}}(\tau)\right).
\end{align*}
Here $p(\cdot,\cdot)$ and $r_{\calZ}(\cdot,\cdot)$ are defined in (9) and (10). In this paper, we estimate $\bsSig$ by the mode of $\pi\left(\bsSig|\bfX_n,\tilde{\calZ}\right)$. Note that the mode of $\pi\left(\bsSig|\bfX_n,\tilde{\calZ}\right)$ is equivalent to the solution of the optimization problem (17) with $\calZ=\tilde{\calZ}$. Thus, the mode of $\pi\left(\bsSig|\bfX_n,\tilde{\calZ}\right)$ can be found using block coordinate descent algorithm provided in Section \ref{bcd}.

\subsection{Error by Laplace Approximation}\label{errorlap}
 In this section, we show that the error by Laplace approximation becomes asymptotically marginal as in (21) under regular conditions. Note that this can established if we show that
 \begin{align}
     \pi(\calZ|\bfX_n)/\pi^*(\calZ|\bfX_n)\rightarrow 1 
 \end{align}
with probability tending to one in $\bbP_0-$probability.
 
 Denote the true covariance matrix by $\bsSig_0$. Suppose that the number of nonzero off-diagonal entries in $\bsSig_0$ is controlled by a positive integer $s_0$, where $0<s_0<\binom{p}{2}/2$. For a covariance matrix $\bsSig=(\sgij)\in \calM^+$, let $s(\bsSig)$ be the number of edges in the covariance graph induced by $\bsSig$. Define the set 
\begin{align*}
    \calU(s_0,\tau_0)=\{\bsSig\in\calM^+: s(\bsSig)\leq s_0,1/\tau_0\leq \eigmin\left(\bsSig\right)\leq \eigmax \left(\bsSig\right)\leq \tau_0 \},
\end{align*}
\noindent where $\tau_0>1$ is some constant. With these notations, in addition to the assumptions (A1) and (A2), we pose additional assumptions on parameters and $\bsSig_0$ :
\begin{enumerate}
\item[\bf (A3)] $\bsSig_0 \in \calU(s_0,\tau_0)$.
\item[\bf (A4)] $0<\tau_0$ is some constant, $\tau_0<\tau$, $q\in(0,1)$, and $q\asymp \frac{\log p}{p^2}$.
\end{enumerate}

To explain (A3), the assumption (A3) was first considered in \cite{bickel2008covariance} and  often used in statistical inference of sparse covariances for either Frequentist literature or Bayesian literature. The positive integer $s_0$ controls the sparsity of true covariance matrix $\bsSig_0$ and the upper bound on eigenvalue of $\bsSig_0$, $\tau_0$, together with $\tau$, is often used in converting $||\bsSig_0^{-1}-\bsSig^{-1}||_{\frob}$ to $||\bsSig_0-\bsSig||_{\frob}$. (A4) is another assumption on parameters to attain posterior convergence rate of covariance matrix under the Frobenius norm, which turns out to be $\epsilon_n=\sqrt{(p+s_0)\log p/n}$. 
 
 Note that the result of posterior convergence rate is necessary to establish $||\bsSig_{\calZ}^*-\bsSig_{\calZ}||_{\frob}=\bigO_p(\eta_n)$, which is crucial to prove (22), where $\eta_n=\sqrt{p}\epsilon_n$. This implies that the posterior and true covariance matrix $\bsSig_0$ are concentrated around the projection of true covariance matrix $\bsSig_0$ onto the model $\calZ$ at rate $\eta_n$ similar to remark of (3.15) in \cite{banerjee2015bayesian}. A similar argument was also made in \cite{banerjee2015bayesian}. Using auxiliary results in the supplementary material, the error due to Laplace approximation tends to zero with probability tending to one if $(p+\#\calZ)^2\eta_n=o(1)$ under regular conditions. This gives Theorem \ref{thm:error_rate}.
 
 \begin{theorem}\label{thm:error_rate}
Assume that $0<\beta<1/2$ in (A1), (A2)-(A4). Also, suppose that $\tau^4\leq p$, $\max\{1/\tau,1/p\}<\lambda<\log p/\tau_0$, $\tau>3$, $\tau^2\tau_0^2\leq s_0\log p$, and $n\geq s_0\log p/[(1-\tau_0/\tau)^2\tau^4]$. Further suppose that $(p+\#\calZ)^2\eta_n=o(1)$, where $\eta_n=\sqrt{p}\epsilon_n$ and $\epsilon_n=\sqrt{(p+s_0)\log p/n}$.  Then, under prior (7),
\begin{align*}
    \pi(\calZ|\bfX_n)/\pi^*(\calZ|\bfX_n)\rightarrow 1
\end{align*}
in $\bbP_0-$ probability, which implies that the error by Laplace approximation in (21) tends to zero in $\bbP_0-$probability.
\end{theorem}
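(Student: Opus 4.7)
Starting from (19) and (21), the ratio to control equals
\begin{align*}
\frac{\pi(\calZ|\bfX_n)}{\pi^*(\calZ|\bfX_n)}
=\frac{\int_{\bsSig_\calZ^*+\bsDelta_\calZ\in\calU_\calZ(\tau)} \exp\!\bigl(-\tfrac{n}{2}[k_\calZ(\bsDelta_\calZ,\bfX_n)-k_\calZ(\bfzero,\bfX_n)]\bigr)\,d\bsDelta_\calZ}{(4\pi/n)^{(p+\#\calZ)/2}\,|\hessian_{\bsSig_\calZ^*}|^{-1/2}}.
\end{align*}
The strategy is the standard Laplace recipe: localize the integral to a Frobenius ball around $\bsDelta_\calZ=\bfzero$ using posterior concentration, then inside that ball replace $k_\calZ$ by its quadratic Taylor polynomial and control the cubic remainder. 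Set $d=p+\#\calZ$ and $B_n=\{\bsDelta_\calZ : \|\bsDelta_\calZ\|_\frob\le M\eta_n\}$ for a large constant $M$. First, I would invoke the auxiliary posterior-concentration bound (stated in the supplementary material) that $\|\bsSig-\bsSig_\calZ^*\|_\frob=O_{\bbP_0}(\eta_n)$ under the conditional posterior, which, because $\tau_0<\tau$ and $\eta_n\to 0$, implies both that the part of the numerator supported outside $B_n$ is $o_{\bbP_0}(1)$ times the full normalizing constant, and that inside $B_n$ the domain constraint $\bsSig_\calZ^*+\bsDelta_\calZ\in\calU_\calZ(\tau)$ is automatically satisfied for all large $n$.

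Next, on $B_n$, since $k_\calZ$ is smooth and $\bsDelta_\calZ=\bfzero$ is its unique minimizer with Hessian $\hessian_{\bsSig_\calZ^*}$ given by (20), I would write
\begin{align*}
k_\calZ(\bsDelta_\calZ,\bfX_n)-k_\calZ(\bfzero,\bfX_n)=\tfrac{1}{2}\vectorize(\bsDelta_\calZ)^\top \hessian_{\bsSig_\calZ^*}\vectorize(\bsDelta_\calZ)+R_n(\bsDelta_\calZ),
\end{align*}
where $R_n$ is the third-order Taylor remainder. Using the explicit derivative formulas for $\log|\bsSig|$ and $\trace(\bfS\bsSig^{-1})$ (each third derivative is a sum of products of entries of $\bsOmega_\calZ^\star$ with $\bsOmega_\calZ^\star=(\bsSig_\calZ^\star)^{-1}$ for some $\bsSig_\calZ^\star$ on the segment from $\bsSig_\calZ^*$ to $\bsSig_\calZ^*+\bsDelta_\calZ$), together with $\|\bsOmega_\calZ^\star\|_2\le\tau$ on $\calU_\calZ(\tau)$, I would derive the uniform bound $|R_n(\bsDelta_\calZ)|\lesssim \tau^4\|\bsDelta_\calZ\|_\frob^3$ on $B_n$. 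The change of variables $\bsu=\sqrt{n/2}\,\hessian_{\bsSig_\calZ^*}^{1/2}\vectorize(\bsDelta_\calZ)$ then rewrites the local integral as
\begin{align*}
(2\pi)^{-d/2}\int_{B_n'} e^{-\|\bsu\|^2/2}\,e^{-\frac{n}{2}R_n(\bsDelta_\calZ(\bsu))}\,d\bsu,
\end{align*}
and the Laplace approximation corresponds to dropping the second exponential. A separate but closely related step is to show $\eigmin(\hessian_{\bsSig_\calZ^*})$ is bounded away from zero in $\bbP_0$-probability; this follows from (20) using the regularization $2/(nv^2)$ on the diagonal blocks and the fact that $\bfU_\calZ\prec 2\bsOmega_\calZ^*$ because $\bsSig_\calZ^*\in\calQ_\calZ(\tau)$.

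Combining these pieces, on the typical Gaussian region $\{\|\bsu\|\lesssim \sqrt{d}\}$ one has $\|\bsDelta_\calZ(\bsu)\|_\frob=O(\sqrt{d/n}\cdot\|\hessian_{\bsSig_\calZ^*}^{-1/2}\|)$, whence $\tfrac{n}{2}|R_n|=O(\tau^4 d^{3/2}/\sqrt{n})$; the hypothesis $d^2\eta_n=o(1)$ is exactly what guarantees both that this quantity tends to zero and that the effective integration remains inside $B_n$ after the change of variables, so that $(2\pi)^{-d/2}\int e^{-\|\bsu\|^2/2}\cdot[e^{-\frac{n}{2}R_n}-1]\,d\bsu\to 0$ by dominated convergence with envelope $|e^{-x}-1|\le|x|e^{|x|}$. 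The hardest step will be Phase~3: keeping the tower of polynomial factors in $\tau,\tau_0,p,\#\calZ,s_0$ and $\epsilon_n$ under control simultaneously so that the cubic-remainder bound is genuinely $o(1)$ after integration against the high-dimensional Gaussian, rather than only pointwise. The eigenvalue lower bound on $\hessian_{\bsSig_\calZ^*}$, and the verification that the posterior-concentration rate $\eta_n$ for $\bsSig$ really transfers to concentration of $\bsDelta_\calZ$ in Frobenius norm over $\bar E_\calZ$-supported matrices, are the most delicate side computations.
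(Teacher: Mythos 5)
Your plan follows the same architecture as the paper's proof: localize the normalizing integral to a Frobenius ball of radius $O(\eta_n)$ around $\bsDelta_{\calZ}=\bfzero$ via the posterior concentration $\|\bsSig_{\calZ}-\bsSig_{\calZ}^*\|_{\frob}=O_p(\eta_n)$ (which the paper must itself establish through a Rothman-type convexity argument plus Theorem 3.2, not merely cite), expand $k_{\calZ}$ to second order, control the remainder, and lower-bound $\eigmin(\hessian_{\bsSig_{\calZ}^*})$. But your justification of the eigenvalue bound does not work as written. The regularization from the prior has diagonal entries $O(1/(nv^2))$, which vanish with $n$, so it only yields positive semidefiniteness and cannot keep the smallest eigenvalue away from zero. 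And the inequality you invoke, $\bfU_{\calZ}\prec 2\bsOmega_{\calZ}^*$, is neither what $\bsSig_{\calZ}^*\in\calQ_{\calZ}(\tau)$ gives nor what is needed: positive definiteness of the continuous block $(2\bfU_{\calZ}-\bsOmega_{\calZ}^*)\otimes\bsOmega_{\calZ}^*$ requires $2\bfU_{\calZ}-\bsOmega_{\calZ}^*=\bsOmega_{\calZ}^*(2\bfS-\bsSig_{\calZ}^*)\bsOmega_{\calZ}^*\succ 0$ \emph{quantitatively}, i.e.\ $\eigmin(2\bfS-\bsSig_{\calZ}^*)\geq\delta>0$, which the paper obtains from $2\bfS-\bsSig_{\calZ}^*\to\bsSig_0$ and $\eigmin(\bsSig_0)\geq 1/\tau_0$, giving $\eigmin(\hessian_{\bsSig_{\calZ}^*})\geq\delta/\tau^3$.

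The more substantive gap is the final integration step. Your additive scheme --- restrict to the typical region $\{\|\bsu\|\lesssim\sqrt{d}\}$, where $\tfrac{n}{2}|R_n|=O(\tau^4 d^{3/2}/\sqrt{n})=o(1)$, and apply dominated convergence --- leaves the annulus $\sqrt{d}\lesssim\|\bsu\|$ with $\|\bsDelta_{\calZ}\|_{\frob}\leq M\eta_n$ uncontrolled: there the only uniform bound is $\tfrac{n}{2}|R_n|\lesssim n\eta_n^3$, which under $(p+\#\calZ)^2\eta_n=o(1)$ need not be $o(1)$ (it can grow like $p\log p$ when $s_0\asymp p^2$), so $e^{-nR_n/2}$ may outrun the Gaussian tail for all you have shown. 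You flag this as the hardest step but do not resolve it. The resolution used in the paper (following Banerjee--Ghosal) is a \emph{multiplicative} absorption: from the remainder bound $|R_n|\leq(p+\#\calZ)\sum_{l=1}^{4}\theta_l\|\bsDelta_{\calZ}\|_{\frob}^{l+2}$ one gets, uniformly on the ball, $|R_n|\leq C(p+\#\calZ)\eta_n\,\eigmin(\hessian_{\bsSig_{\calZ}^*})^{-1}\cdot\tfrac{1}{2}\tilde{\bsDelta}_{\calZ}^t\hessian_{\bsSig_{\calZ}^*}\tilde{\bsDelta}_{\calZ}$, so the ratio of integrals is sandwiched between $[1\mp\eigmin(\hessian_{\bsSig_{\calZ}^*})^{-1}(p+\#\calZ)\eta_n]^{-(p+\#\calZ)/2}$, and both sides tend to $1$ precisely because $(p+\#\calZ)^2\eta_n=o(1)$. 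Without this (or an equivalent uniform-in-the-ball comparison with the quadratic form), your argument does not close. Your dimension-free cubic bound $|R_n|\lesssim\tau^4\|\bsDelta_{\calZ}\|_{\frob}^3$ via third derivatives is plausible and would in fact sharpen the paper's remainder lemma, but it must still be fed into a multiplicative comparison rather than pointwise dominated convergence.
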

 
The proof is provided in Appendix \ref{appendixA}. The proof uses the techniques considered by \cite{rothman2008} and \cite{banerjee2015bayesian}. Since the proof uses the result of posterior convergence rate, it remains to show that $\epsilon_n=\sqrt{(p+s_0)\log p/n}$ is the posterior convergence rate of the covariance matrix under the Frobenius norm. Under prior (7), this follows from Theorem \ref{thm:conv_rate}.

\begin{theorem}\label{thm:conv_rate}
Let $\mathbf{X}_n=(X_1,\dots,X_n)^t$ be the random sample from $\N_p(\bfzero,\bsSig)$ and consider the prior (7). Assume (A1)-(A4) and suppose that $\tau^4\leq p$, $\max\{1/\tau,1/p\}<\lambda<\log p/\tau_0$, $\tau>3$, $\tau^2\tau_0^2\leq s_0\log p$, and $n\geq s_0\log p/[(1-\tau_0/\tau)^2\tau^4]$. If $\epsilon_n=o(1)$, 
\begin{align*}
\pi(||\bsSig-\bsSig_0||_{\frob}\geq M\epsilon_n|\mathbf{X}_n)\rightarrow 0, 
\end{align*}
for some constant $M>0$ in $\bbP_0$-probability.
\end{theorem}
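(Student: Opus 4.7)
The plan is to apply the Ghosal--Ghosh--van der Vaart framework for posterior convergence rates, adapted to the spike-and-slab covariance structure prior (7). The three ingredients required are: (i) a sieve $\calF_n\subset\calU(\tau)$ whose prior complement has mass at most $\exp(-Cn\epsilon_n^2)$; (ii) an entropy bound $\log N(\epsilon_n,\calF_n,||\cdot||_{\frob})\lesssim n\epsilon_n^2$; and (iii) a lower bound $\pi(B_n(\bsSig_0,\epsilon_n))\geq \exp(-Cn\epsilon_n^2)$ on the prior mass in a Kullback--Leibler neighborhood of the truth. Once all three are verified, the master contraction theorem, applied together with exponentially consistent tests, delivers the rate $\epsilon_n=\sqrt{(p+s_0)\log p/n}$.

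For (i), I would take $\calF_n=\{\bsSig\in\calU(\tau): s(\bsSig)\leq R_n\}$ with $R_n$ a constant multiple of $s_0$. Under (7) the edge count satisfies $\#\calZ\sim\mathrm{Binomial}(\binom{p}{2},q)$ with $q\asymp\log p/p^2$ by (A4), so standard binomial tail bounds give $\pi(\#\calZ>R_n)\leq \exp(-C(p+s_0)\log p)=\exp(-Cn\epsilon_n^2)$. For (ii), each $\bsSig\in\calF_n$ has at most $p+R_n$ nonzero entries in a bounded region controlled by $\tau$, so a standard covering argument in the Frobenius norm yields $\log N(\epsilon_n,\calF_n,||\cdot||_{\frob})\lesssim (p+s_0)\log(p\tau/\epsilon_n)\lesssim n\epsilon_n^2$ under (A1)--(A4).

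For (iii) I would use the hierarchical form of (7). First, the event $\calZ=\calZ_0$, where $\calZ_0$ is the true support, has prior probability at least $(1-q)^{\binom{p}{2}-s_0}q^{s_0}\geq \exp(-Cs_0\log p)$. Second, conditional on $\calZ_0$, the probability of an $\epsilon_n$-Frobenius neighborhood of $\bsSig_0$ under the product of Gaussian slabs (with variance $v^2$) and $\Exp(\lambda/2)$ diagonals is bounded below by the minimum density on the $p+s_0$ free entries times $\epsilon_n^{p+s_0}$, which is $\geq \exp(-C(p+s_0)\log p)$. Because $\bsSig_0\in\calU(s_0,\tau_0)$ with $\tau_0<\tau$ strictly, a sufficiently small Frobenius ball around $\bsSig_0$ lies inside $\calU_{\calZ_0}(\tau)$, so the indicator $\dsone(\bsSig\in\calU(\tau))$ does not reduce the bound. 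Finally, because eigenvalues of $\bsSig,\bsSig_0$ are uniformly bounded away from $0$ and $\infty$, the Kullback--Leibler divergence and its second-moment analogue between $\N_p(\bfzero,\bsSig)$ and $\N_p(\bfzero,\bsSig_0)$ are comparable to $||\bsSig-\bsSig_0||_{\frob}^2$ up to constants depending on $\tau,\tau_0$, converting the Frobenius ball to a proper KL neighborhood.

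Exponential tests against $\{||\bsSig-\bsSig_0||_{\frob}\geq M\epsilon_n\}\cap\calF_n$ follow from the Le Cam--Birg\'e construction once one observes that on $\calU(\tau)$ the Hellinger distance between $\N_p(\bfzero,\bsSig)$ and $\N_p(\bfzero,\bsSig_0)$ is lower bounded by a constant multiple of $||\bsSig-\bsSig_0||_{\frob}$, so the entropy bound from (ii) supplies tests with type I and type II errors of order $\exp(-Cn\epsilon_n^2)$. Combining these three ingredients in the standard way concludes the proof. The main obstacle I anticipate is controlling the prior mass in step (iii) near $\bsSig_0$ in the presence of the $\calU(\tau)$ truncation: one must carefully exploit the strict gap $\tau_0<\tau$ and the lower bound on the Gaussian and exponential densities on a bounded interval, in the spirit of the arguments in \cite{banerjee2015bayesian} for the precision-matrix setting, and verify that the additional restrictions $\tau^4\leq p$, $\lambda<\log p/\tau_0$, and $n\geq s_0\log p/[(1-\tau_0/\tau)^2\tau^4]$ are exactly what makes the various error terms compatible with $n\epsilon_n^2=(p+s_0)\log p$.
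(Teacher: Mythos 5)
Your overall architecture is exactly the paper's: the proof in Appendix B verifies the three Ghosal--Ghosh--van der Vaart conditions (entropy of a sieve, prior mass of the sieve complement, prior mass of a KL neighborhood) via Lemma 5.1 of \cite{lee2021}, obtains contraction in the Hellinger metric, and then converts to the Frobenius norm using the eigenvalue bounds on $\calU(\tau)$ exactly as you propose. Your treatment of the KL-neighborhood mass (conditioning on $\calZ=\calZ_0$ and lower-bounding the slab and exponential densities on a compact set) is a legitimate variant of the paper's Theorem B.4, which instead splits the off-diagonal entries into true zeros (handled by the point mass, probability $1-q$ each) and true nonzeros (handled by the Gaussian slab); both give $\exp(-Cn\epsilon_n^2)$.

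There is, however, one concrete gap: your sieve cutoff $R_n\asymp s_0$ does not work. Under prior (7) the edge count is $\#\calZ\sim\mathrm{Binomial}\bigl(\binom{p}{2},q\bigr)$ with mean $\binom{p}{2}q\asymp\log p$ by (A4), while $s_0$ is only constrained by $\tau^2\tau_0^2\leq s_0\log p$ and may be as small as a constant; in that regime $\pi(\#\calZ>Cs_0)$ does not even tend to zero. Even when $s_0\gtrsim\log p$, the Chernoff bound gives only $\pi(\#\calZ>Cs_0)\leq\exp\bigl(-cs_0\log(s_0/\log p)\bigr)$, which is far larger than the required $\exp(-Cn\epsilon_n^2)=\exp(-C(p+s_0)\log p)$ whenever $s_0\ll p$. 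The sieve must be taken much larger than the true sparsity: the paper sets $s_n=c_1n\epsilon_n^2/\log p=c_1(p+s_0)$ with $c_1>6$, so that $\binom{p}{2}q\asymp\log p\ll s_n$ and the relative-entropy (Chernoff) bound in Theorem B.3 yields $\pi(s(\bsSig,\delta_n)>s_n)\leq\exp(-c_1n\epsilon_n^2/3)$. Your entropy bound survives this enlargement unchanged, since $\log N\lesssim(p+s_n)\log(p\tau/\epsilon_n)\asymp(p+s_0)\log p=n\epsilon_n^2$, so the repair is purely a matter of choosing the right cutoff. A second, minor, point: the claimed uniform lower bound of the Hellinger distance by a constant multiple of $\|\bsSig-\bsSig_0\|_{\frob}$ over all of $\calU(\tau)$ cannot hold literally (Hellinger distance is bounded by one while the Frobenius distance can be of order $\sqrt{p}\,\tau$); the correct statement, used in the paper via Lemma A.1(ii) of \cite{banerjee2015bayesian}, is the local implication that small Hellinger distance forces small Frobenius distance, which suffices to transfer the contraction rate.
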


\subsection{Block Coordinate Descent Algorithm}\label{bcd}
In this section, we propose a block coordinate descent algorithm that solves (17). For convenience, write $\bsSig_{\calZ}$ as $\bsSig$ suppressing dependency on $\calZ$.  Partition $\bsSig$ and $\bfS$ as follows.
\begin{align}
    \bsSig=\left(\begin{array}{cc}
        \bsSig_{11} & \bssig_{12} \\
         \bssig_{12}^t & \sigma_{22} 
    \end{array}
    \right), \quad \bfS=\left(\begin{array}{cc}
    \bfS_{11} & \bss_{12} \\
    \bss_{12}^t & s_{22}
    \end{array}\right),
\end{align}
where $\bsSig_{11}$ $(\bfS_{11})$ is a $(p-1)\times (p-1)$ matrix and $\sigma_{22}$ $(s_{22})$ is scalar. Let $\bsbeta=\bssig_{12}$ and $\bsgamma=\sigma_{22}-\bssig_{12}^t\bsSig_{11}^{-1}\bssig_{12}$. Note that by the constraint $\calQ_{\calZ}(\tau)$, some entries of $\bsbeta$ are fixed as 0. For simplicity, write $\bsbeta=(\bsbeta_1^t,\bsbeta_0^t)^t$, where $\bsbeta_1$ $(\bsbeta_0)$ is vector with entries being $\sgij$ corresponding to $\zij=1$ $(\zij=0)$, by rearranging rows/columns of $\bsSig$ and $\bfS$ in (23). Since $\bsbeta_0$ should be fixed due to $\calZ$, we update only $\bsbeta_1$ in $\bsbeta$. With fixed $\bsSig_{11}$, neglecting terms that do not depend on $\bsbeta_1$ or $\bsgamma$ in (17), we have 
\begin{align}
    \minimize_{(\bsbeta_1,\bsgamma)} \log \bsgamma+\frac{\bsbeta_1^t[\bsSig_{11}^{-1}\bfS_{11}\bsSig_{11}^{-1}]^1\bsbeta_1-2\bsbeta_1^t[\bsSig_{11}^{-1}\bfs_{12}]^1+s_{22}}{\bsgamma}+\bsbeta_1^t\bsTheta\bsbeta_1+\frac{\lambda}{n}(\bsgamma+\bsbeta_1^t[\bsSig_{11}^{-1}]^1\bsbeta_1),
\end{align}
where $[\bsSig_{11}^{-1}\bfS_{11}\bsSig_{11}^{-1}]^1$ and $[\bsSig_{11}^{-1}]^1$ denote some principal minor matrix of $\bsSig_{11}^{-1}\bfS_{11}\bsSig_{11}^{-1}$ and $\bsSig_{11}^{-1}]$, respectively, $[\bsSig_{11}^{-1}\bfs_{12}]^1$ is subvector of $\bsSig_{11}^{-1}\bfs_{12}$, and $\bsTheta=\diag(\frac{1}{nv^2})$. Let $f(\bsbeta_1,\bsgamma)$ be the objective function in (24). One can see that $f(\bsbeta_1,\bsgamma)$ is quite similar to (4) in \cite{wang14}, except for the term $\bsbeta_1^t\bsTheta\bsbeta_1$. Neglecting terms that do not depend on $\bsgamma$ in (24), (24) is reduced to 
\begin{align}
    \minimize_{\bsgamma} \log \bsgamma+\frac{\bsu}{\bsgamma}+\frac{\lambda}{n}\bsgamma,
\end{align}
where $\bsu=\bsbeta_1^t[\bsSig_{11}^{-1}\bfS_{11}\bsSig_{11}^{-1}]^1\bsbeta_1-2\bsbeta_1^t[\bsSig_{11}^{-1}\bfs_{12}]^1+s_{22}$. Note that $\bsu>0$ with probability tending to one, which can be shown in Proposition \ref{positivity}. 
\begin{prop}\label{positivity}
Let $\bfu=\bsbeta_1^t[\bsSig_{11}^{-1}\bfS_{11}\bsSig_{11}^{-1}]^1\bsbeta_1-2\bsbeta_1^t[\bsSig_{11}^{-1}\bfs_{12}]^1+s_{22}$ as in (24). Then $\bfu>0$ with probability tending to one. 
\end{prop}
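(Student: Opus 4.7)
The plan is to exhibit $\bfu$ as a sum of two manifestly non-negative quantities, the second of which is strictly positive whenever the sample covariance matrix is positive definite. First I would absorb the selection operator $[\cdot]^1$ by introducing the zero-padded $(p-1)$-vector $\tilde{\bsbeta}$ that agrees with $\bsbeta_1$ on coordinates where $\zij=1$ and vanishes on coordinates where $\zij=0$. Because $[M]^1$ is the principal minor of $M$ on the coordinates $\{\zij=1\}$ and $[v]^1$ is the corresponding subvector, the zero-padding yields the identities $\bsbeta_1^t[\bsSig_{11}^{-1}\bfS_{11}\bsSig_{11}^{-1}]^1\bsbeta_1=\tilde{\bsbeta}^t\bsSig_{11}^{-1}\bfS_{11}\bsSig_{11}^{-1}\tilde{\bsbeta}$ and $\bsbeta_1^t[\bsSig_{11}^{-1}\bfs_{12}]^1=\tilde{\bsbeta}^t\bsSig_{11}^{-1}\bfs_{12}$. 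Consequently,
\begin{align*}
\bfu=\tilde{\bsbeta}^t\bsSig_{11}^{-1}\bfS_{11}\bsSig_{11}^{-1}\tilde{\bsbeta}-2\tilde{\bsbeta}^t\bsSig_{11}^{-1}\bfs_{12}+s_{22}.
\end{align*}

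Next I would work on the event $\{\bfS\succ 0\}$. On this event, $\bfS_{11}$ is positive definite and admits a positive-definite square root $\bfS_{11}^{1/2}$; completing the square then gives the identity
\begin{align*}
\bfu=\bigl\|\bfS_{11}^{1/2}\bsSig_{11}^{-1}\tilde{\bsbeta}-\bfS_{11}^{-1/2}\bfs_{12}\bigr\|_2^2+\bigl(s_{22}-\bfs_{12}^t\bfS_{11}^{-1}\bfs_{12}\bigr).
\end{align*}
The first term is non-negative, and the parenthesized term is precisely the Schur complement of $\bfS_{11}$ in $\bfS$, which equals $\det(\bfS)/\det(\bfS_{11})$. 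Since both determinants are strictly positive on $\{\bfS\succ 0\}$, this Schur complement is strictly positive, forcing $\bfu>0$ on that event.

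It remains to verify that $\bbP(\bfS\succ 0)\to 1$. Assumption (A1) gives $p\asymp n^{\beta}$ with $0<\beta<1$, hence $p<n$ for all sufficiently large $n$. Since $X_1,\dots,X_n\iid \N_p(\bfzero,\bsSig)$ with $\bsSig\in\cM^+$, the matrix $n\bfS$ is Wishart with $n$ degrees of freedom and scale $\bsSig$; a standard fact is that $\bfS$ is positive definite almost surely whenever $n\ge p$. Combining this with the algebraic identity above yields $\bbP(\bfu>0)\to 1$. There is no substantive obstacle in this argument: the completion-of-squares step is the only non-trivial manipulation, and the probabilistic tail is handled by a classical Wishart-rank fact together with the dimension regime enforced by (A1).
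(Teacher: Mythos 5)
Your proof is correct and follows essentially the same route as the paper's: both arguments lower-bound the quadratic $\bfu$ by its minimum over the relevant vector, identify that minimum as the Schur complement $s_{22}-\bfs_{12}^t\bfS_{11}^{-1}\bfs_{12}$, and conclude positivity from $\bfS\succ 0$ (the paper finds the minimizer by setting the gradient of the convex quadratic to zero, which is the same computation as your completion of squares). You are in fact slightly more careful than the paper on two points it leaves implicit — the zero-padding identification that absorbs the selection operator $[\cdot]^1$, and the Wishart/dimension argument for why $\bbP(\bfS\succ 0)\to 1$ under (A1) — but the underlying idea is identical.
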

Substituting $\rho=\lambda/n$ and $a=\bfu$ in (5) of \cite{wang14}, one can see that the solution of (25), denoted by $\hat{\bsgamma}$, is 
\begin{align}
    \hat{\bsgamma}=\frac{-1+\sqrt{1+4\bfu\rho}}{2\rho}.
\end{align}
Note that the positivity of $\bfu$ ensures the existence of (26). Now let $l(\bsxi)=f(\bsxi,\hat{\bsgamma})$. Note $\hat{\bsgamma}$ depends on $\bsbeta_1$ and $\bsxi$ is a coordinate independent of $\bsbeta_!$. Neglecting terms that do depend on $\bsxi$ in $l$, we have
\begin{align*}
    l(\bsxi)=\bsxi^t\left(\bsTheta+\frac{\lambda}{n}[\bsSig_{11}^{-1}]^1+[\bsSig_{11}^{-1}\bfS_{11}\bsSig_{11}^{-1}]^1/\hat{\bsgamma}\right)\bsxi-2\bsxi^t[\bsSig^{-1}\bfs_{12}]^1/\hat{\bm{\gamma}}.
\end{align*}

\noindent Since $l$ is convex, the minimizer $\hat{\bsxi}$ can be found by solving $\frac{\partial h}{\partial \bsxi}=\bfzero$. So,
\begin{align}
    \frac{\partial l}{\partial \bsxi}&=2\left(\bsTheta+\frac{\lambda}{n}[\bsSig_{11}^{-1}]^1+[\bsSig_{11}^{-1}\bfS_{11}\bsSig_{11}^{-1}]^1/\hat{\bsgamma}\right)\bsxi-2[\mathbf{\Sigma}_{11}^{-1}\mathbf{s}_{12}]^1/\hat{\bsgamma}=\bfzero \nonumber \\
    &\Rightarrow \hat{\bsxi}=\left(\bsTheta+\frac{\lambda}{n}[\bsSig_{11}^{-1}]^1+[\bsSig_{11}^{-1}\bfS_{11}\bsSig_{11}^{-1}]^1/\hat{\bsgamma}\right)^{-1}[\mathbf{\Sigma}_{11}^{-1}\mathbf{s}_{12}]^1/\hat{\bm{\gamma}}.
\end{align}

Write $\bssig_{12}=((\bssig_{12})_1^t,(\bssig_{12})_0^t)^t$ as for $\bsbeta$. From (26) and (27), $\bssig_{12}$ and $\sigma_{22}$ can be updated by $\hat{\bssig}_{12}$ and $\hat{\sigma}_{22}$ respectively as follows: 
\begin{align}
    &(\hat{\bssig}_{12})_1=\left(\bsTheta+\frac{\lambda}{n}[\bsSig_{11}^{-1}]^1+[\bsSig_{11}^{-1}\bfS_{11}\bsSig_{11}^{-1}]^1/\hat{\bsgamma}\right)^{-1}[\mathbf{\Sigma}_{11}^{-1}\mathbf{s}_{12}]^1/\hat{\bm{\gamma}},\\
    &(\hat{\bssig}_{12})_0=\bfzero,\\
    &\hat{\sigma}_{22}=\hat{\bsgamma}+\hat{\bssig}_{12}^t\bsSig_{11}^{-1}\hat{\bssig}_{12}.
\end{align}

By (28)-(30), we propose a block coordinate descent algorithm for solving (17) as in Algorithm \ref{algorithm2}. 

Since we're estimating covariance matrices, the estimated matrix resulted from the given block coordinate descent algorithm should be positive definite. Also, we may wish the given algorithm converges to a stationary point of (17). This is because by the discussion in Section \ref{laplace}, if the objective function in (17) attains the stationary point, then the function is convex at such stationary point and so the stationary point becomes a local minimum point of the function. It is possible that a local minimum point becomes a global minimum point for sufficiently large $n$ by (13)-(16) and (A1)-(A2). We verify this desired property in Proposition \ref{convergence}.

\begin{prop}\label{convergence}
Updating a single row/column by the rule described in Algorithm \ref{algorithm2} results in a positive definite matrix. Furthermore, the proposed algorithm converges to a stationary point of the objective function in (17). 
\end{prop}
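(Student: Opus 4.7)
The plan is to establish the two claims separately, using the Schur complement criterion for positive definiteness and classical block coordinate descent convergence theory for the stationary-point claim.

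For the first claim, I would apply the Schur complement criterion to the block decomposition (23). Inductively assume the current iterate $\bsSig$ is positive definite, so in particular $\bsSig_{11} \succ 0$. After the updates (28)--(30), the new matrix $\hat{\bsSig}$ retains the block $\bsSig_{11}$ and has new entries $(\hat{\bssig}_{12}, \hat{\sigma}_{22})$ in the last row/column. By the Schur complement criterion, $\hat{\bsSig} \succ 0$ if and only if $\hat{\sigma}_{22} - \hat{\bssig}_{12}^t \bsSig_{11}^{-1} \hat{\bssig}_{12} > 0$. By (30), this Schur complement is exactly $\hat{\bsgamma}$, and the closed-form expression (26) together with Proposition \ref{positivity} (which gives $\bfu > 0$ with high probability) shows $\hat{\bsgamma} > 0$. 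Since the algorithm cycles through every row/column, the same argument applies after the appropriate permutation of indices, so positive definiteness is preserved throughout.

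For the second claim, I would appeal to the standard convergence theory for block coordinate descent. Two facts drive the argument: (i) by the discussion preceding (17), the objective $r_{\calZ}(\cdot, \bfX_n)$ is continuously differentiable and convex on the open convex set $\calQ_{\calZ}(\tau)$; (ii) the algorithm performs exact block minimization in the $(\bsbeta_1, \bsgamma)$ block, and the closed forms (26) and (27) show that the block minimum is unique, owing to the strict convexity contributed by the quadratic term $\bsbeta_1^t \bsTheta \bsbeta_1$ in $\bsbeta_1$ and by $\log \bsgamma$ in $\bsgamma$. Under exactly these conditions, Tseng's block coordinate descent theorem for smooth convex functions with uniquely attained block minima guarantees that every limit point of the iterates is a stationary point of the objective; by convexity on $\calQ_{\calZ}(\tau)$, any such stationary point is a global minimizer.

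The main obstacle will be handling the feasibility constraint $\bsSig \in \calQ_{\calZ}(\tau)$ rigorously, since the eigenvalue bounds $1/\tau \leq \eigmin(\bsSig) \leq \eigmax(\bsSig) \leq \tau$ and the matrix inequality $\bsSig \prec 2\bfS$ couple all entries and are not separable across blocks. My plan is to show that for sufficiently large $n$ under (A1)--(A2), both the stationary point and the iterates remain in the interior of $\calQ_{\calZ}(\tau)$: the bounds (14)--(16) already place the minimizer strictly inside the set, and a monotone-descent argument combined with coercivity of the objective on sublevel sets intersected with $\calU_{\calZ}(\tau)$ prevents the iterates from approaching the boundary. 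Once interior feasibility is secured, the unconstrained BCD machinery applies without modification and both claims follow.
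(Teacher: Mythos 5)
Your proposal follows essentially the same route as the paper's proof: positive definiteness is obtained by the Schur complement criterion and induction, with Proposition \ref{positivity} guaranteeing that the updated Schur complement $\hat{\bsgamma}>0$, and convergence to a stationary point is obtained by invoking Tseng's block coordinate descent theory as used by Wang (2014) and Breheny and Huang (2011). Your additional attention to keeping the iterates in the interior of $\calQ_{\calZ}(\tau)$ is a refinement the paper's own proof does not address, but it does not change the underlying approach.
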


\newpage
 \begin{algorithm}
    \caption{Block Coordinate Descent algorithm for solving (17)}
    \label{algorithm2}
     \begin{algorithmic}[1]
    \State Fix $\epsilon>0$ and initialize $\mathbf{\Sigma}^{(0)}=\diag(\mathbf{S})+\frac{\lambda}{n}\I_p$ 
    \For{$i=1,\dots,k$}
    \State $\mathbf{\Sigma}^{(i)}=\mathbf{\Sigma}^{(i-1)}$
    \For{$j=1,2,\dots,p$}
    \State $\mathbf{\Sigma}^{(i)}_j$ : Rearrange rows/columns in $\mathbf{\Sigma}^{(i)}$  so that $j$th diagonal entry of  $\mathbf{\Sigma}^{(i)}$ is placed on the last diagonal entry.
    \State Partition $\mathbf{\Sigma}^{(i)}_j$ and $\mathbf{S}$ as in (23).  
    \State Calculate $\hat{\bm{\gamma}}$ as in (26).
    \State Update $\bm{\sigma}_{12}$ and $\bm{\sigma}_{12}^t$ as in (28)-(29). 
    \State Update $\sigma_{22}$ as in (30). 
    \State Rearrange $\mathbf{\Sigma}^{(i)}_j$ so that the last diagonal entry is on $j$th diagonal entry and set $\mathbf{\Sigma}^{(i)}$=$\mathbf{\Sigma}^{(i)}_j$.
    \EndFor
    \If {$||\mathbf{\Sigma}^{(i)}-\mathbf{\Sigma}^{(i-1)}||_{\text{F}}<\epsilon$} 
    \State \Return $\mathbf{\Sigma}^{(i)}$.
    \EndIf
    \EndFor
    \end{algorithmic}
\end{algorithm}

\section{Simulation}\label{sec:simul}
\subsection{Numerical Study}
To evaluate the performance of the proposed estimator in this paper, we perform a simulation study. We consider following five covariance models with covariance matrix $\bsSig=(\sgij)$ or its inverse $\bsOmega=(\omega_{ij})$.

\begin{itemize}[leftmargin=*]
    \item \texttt{Model 1.} Random Structure \rom{1}: $\sgij=\sigma_{ji}$ is non-zero with probability 0.02 for $i<j$ independently of other off-diagonal entries. For a nonzero $\sgij=\sigma_{ji}$, assign $1$ or $-1$ randomly. Diagonal entries are chosen as a constant so that the condition number of $\bsSig$ under $||\cdot||_2$ is approximately $p$. 
    \item \texttt{Model 2.} Random Structure \rom{2}: $\bsSig=(\bfB+\delta\mathbf{I}_p)/(1+\delta)$, where $\bfB=(b_{ij})\in\calM$ with $b_{ii}=1$ for $i=1,2,\dots,p$ and $b_{ij}=0.5\times \text{Ber}(0.2)$ for $1\leq i<j\leq p$ , and $\mathbf{I}_p$ is a $p\times p$ identity matrix. $\delta=\max\{-\eigmin(\bfB),0\}+0.05$ so that $\bfB+\delta\mathbf{I}_p\in \calM^+$ and each entry of $\bfB+\delta\mathbf{I}_p$ is divided by $1+\delta$ to normalize diagonal entries.  
    \item \texttt{Model 3.} First-order Moving Average Model: For $1\leq i<j\leq p$, $\sgij=\sigma_{ji}=0.4$ if $j=i+1$ and $0$ otherwise. Diagonal entries are chosen as a constant so that the condition number of $\bsSig$ by $||\cdot||_2$ is approximately $p$.
    \item \texttt{Model 4.} Second-order Moving Average Model: $\sgii=1$. For $1\leq i<j\leq p$, $\sgij=\sigma_{ji}=0.5$ if $j=i+1$, $\sgij=\sigma_{ji}=0.25$ if $j=i+2$, and $\sgij=\sigma_{ji}=0$ otherwise. 
    \item \texttt{Model 5.} Inverse of Toeplitz matrix: $\omega_{ij}=0.75^{|i-j|}$.
\end{itemize}

\texttt{Model 1} and \texttt{Model 3} were considered by \cite{bien2011sparse}. Choice of diagonal entries ensures the positive definiteness of sample covariance matrix when $n>p$ as in \cite{bien2011sparse} and \cite{rothman2008}. \texttt{Model 2} is similar to models considered by \cite{fang2015}, \cite{cai2011}, and \cite{rothman2008}. Note that each covariance matrix is sparse.

For each model, we consider $p=50$ $(n=100)$ and $p=100$ $(n=200)$. We generate random sample sizes of $n$ from $p$-dimensional Gaussian distribution with mean $\bfzero_p$ and covariance in each model and run 100 replications for each model. We measure the performance of the estimator by specificity (\texttt{sp}), sensitivity (\texttt{se}), root mean square error (\texttt{rmse}), $||\hat{\bsSig}-\bsSig||_{\frob}/p$, max norm (\texttt{mnorm}), $||\hat{\bsSig}-\bsSig||_{\infty}$, and spectral norm (\texttt{2norm}), $||\hat{\bsSig}-\bsSig||_2$. Specificity and sensitivity denote the ratio of correctly estimated non-zero off-diagonal entries over total non-zero off-diagonal entries in the estimated covariance matrix and the ratio of correctly estimated zero off-diagonal entries over total zero off-diagonal entries in the estimated covariance matrix, respectively. We claim that the estimated off-diagonal entry is $0$ if its absolute value does not exceed $0.001$ following \cite{Fan2009} and \cite{wang12}. For each measure, we provide the mean and the standard deviation over 100 replications. 

Competing methods in the simulation study are covariance graphical lasso (GL) proposed by  \cite{bien2011sparse} and sample covariance matrix (Samp). For our proposed method, we obtain MCMC samples of $\calZ$ from the posterior using Algorithm \ref{algorithm1} described in Section \ref{laplace} and choose the final model either by MPM or MAP. We generate 12000 posterior samples after 3000 burn-in for our proposed method. Given the chosen model, we estimate $\bsSig$ by solving (17) using Algorithm \ref{algorithm2} by the argument in Section \ref{laplace}. The simulation results are provided in Table \ref{tb:sim1} and Table \ref{tb:sim2}. Note that for each measure,  the boldfaced value denotes the best performance compared to other competing methods in each dimension (sample size) and model. 

We discuss the results in Table \ref{tb:sim1} first. Our proposed method tends to perform better than GL and Samp in terms of \texttt{sp} for all models regardless of model choice MPM or MAP for the proposed method. This demonstrates that the prior we considered in this paper is indeed effective in introducing sparsity to the covariance structures. Note that Samp performs poorly in terms of both \texttt{se} and \texttt{sp} in most cases, especially \texttt{sp}. Samp showed extremely small \texttt{sp}. Though better than Samp, GL also showed poor \texttt{sp}. Note that all the methods we have considered in the simulation study showed good performance in terms of \texttt{se} for \texttt{Model 1, Model 3,} and \texttt{Model 5} with $p=50$ $(n=100)$. But in general, GL and Samp outperformed our proposed method in terms of \texttt{se}. However, since both GL and Samp are extremely ineffective in introducing sparsity to the model, which can be seen from the results for \texttt{sp} in each model, if \texttt{se} for the proposed method is not significantly lower than that for other methods, our proposed method may be more useful than other methods in the inference of sparse covariance structures.

Now we discuss the results in Table \ref{tb:sim2}. In terms of \texttt{rmse} and \texttt{2norm}, our proposed method outperforms other methods in most of the models except for \texttt{Model 5} with $p=100$ $(n=200)$. Also, in terms of \texttt{mnorm}, Samp tends to perform better than our proposed method and graphical lasso.

\begin{sidewaystable}
\centering
 \caption{\texttt{\small sp} and \texttt{\small se} under the covariance structures in  \texttt{\small Model 1, 2, 3, 4}, and \texttt{\small 5}.}
 \label{tb:sim1}
 \scalebox{0.78}{
 \begin{tabular}{cccccc cccc}
		\hline\hline \\ [-1.7em] 
		\multirowcell{2}{}& \multirowcell{2}{Measure} & \multicolumn{4}{c}{p=50\, (n=100)} &\multicolumn{4}{c}{p=100\, (n=200)}\\ [-1.7em]\\\cmidrule(lr){3-6}\cmidrule(lr){7-10}  \\ [-1.7em]
		& & Proposed (MPM) & Proposed (MAP) & GL & Samp & Proposed (MPM) & Proposed (MAP) & GL &  Samp\\  [-1.7em]\\ \hline\hline 
		\\[-1.2em]
		\multirowcell{2}{\texttt{Model 1}}& \texttt{sp} & \textbf{0.998 (0.014)} & 0.981 (0.017) & 0.583 (0.027) & 0.003 (0.002) & \textbf{0.999 (0.010)} & \textbf{0.999 (0.009)} & 0.900 (0.008) & 0.004 (0.009)  \\ 
		&\texttt{se}& \textbf{1.000 (0.005)} & 0.982 (0.005) & \textbf{1.000 (0.004)} & \textbf{1.000 (0.000)} & 0.989 (0.011) & \textbf{1.000 (0.002)} & \textbf{1.000 (0.002)} & \textbf{1.000 (0.000)} \\[-1.2em]\\ \hline
			\\[-1.2em] 
		\multirowcell{2}{\texttt{Model 2}}& \texttt{sp}& \textbf{0.986 (0.004)} & 0.984 (0.005) & 0.268 (0.009) & 0.003 (0.001) & \textbf{0.994 (0.004)} & \textbf{0.994 (0.004)} & 0.339 (0.009) & 0.011 (0.002)  \\
		& \texttt{se} & 0.818 (0.004) & 0.879 (0.003) & 0.791 (0.008) & \textbf{1.000 (0.000)} & 0.782 (0.002)& 0.779 (0.003) & 0.944 (0.007) & \textbf{0.998 (0.001)}\\ [-1.2em]\\ \hline \\[-1.2em] 
				\multirowcell{2}{\texttt{Model 3}}& \texttt{sp}& \textbf{1.000 (0.002)} & 0.991 (0.004) & 0.235 (0.014) & 0.010 (0.003) & 0.998 (0.001) & \textbf{1.000 (0.003)} & 0.658 (0.007) & 0.004 (0.001)  \\
		&\texttt{se} &\textbf{1.000 (0.001)} & \textbf{1.000 (0.000)} & \textbf{1.000 (0.000)} & \textbf{1.000 (0.000)} &  0.999 (0.001) & \textbf{1.000 (0.000)} & \textbf{1.000 (0.000)} &  \textbf{1.000 (0.000)}\\[-1.2em]\\ \hline\\ [-1.2em]
				\multirowcell{2}{\texttt{Model 4}}& \texttt{sp} &0.987 (0.007) & \textbf{0.998 (0.005)}& 0.248 (0.012) & 0.009 (0.003) & \textbf{0.994 (0.002)}& 0.993 (0.002) & 0.326 (0.008) & 0.011 (0.001)  \\
		&\texttt{se} & 0.693 (0.106) & 0.691 (0.103) & 0.995 (0.008) & \textbf{1.000 (0.001)} & 0.810 (0.003) & 0.807 (0.004) & \textbf{1.000 (0.001)} & \textbf{1.000 (0.000)}\\[-1.2em]\\ \hline \\[-1.2em]
				\multirowcell{2}{\texttt{Model 5}}& \texttt{sp}& \textbf{0.997 (0.006)} & 0.986 (0.007) & 0.589 (0.017) &  0.002 (0.001)&\textbf{0.981 (0.005)} & 0.980 (0.004)& 0.725 (0.009)&0.003 (0.001)  \\
		& \texttt{se}& \textbf{1.000 (0.001)} & \textbf{1.000 (0.001)} & \textbf{1.000 (0.000)} & \textbf{1.000 (0.000)} & 0.851 (0.002) & 0.849 (0.002)& \textbf{1.000 (0.000)} & \textbf{1.000 (0.000)}\\ [-1.2em]\\
		\hline\hline 
	\end{tabular}}
\end{sidewaystable}

\begin{sidewaystable}
\centering
  \caption{\texttt{\small rmse, mnorm,} and \texttt{\small 2norm} under the covariance structures in  \texttt{\small Model 1, 2, 3, 4}, and \texttt{\small 5}.}
 \label{tb:sim2}
 \scalebox{0.78}{
 \begin{tabular}{cccccc cccc}
		\hline\hline \\ [-1.7em] 
		\multirowcell{2}{}& \multirowcell{2}{Measure} & \multicolumn{4}{c}{p=50\, (n=100)} &\multicolumn{4}{c}{p=100\, (n=200)}\\ [-1.7em]\\\cmidrule(lr){3-6}\cmidrule(lr){7-10}  \\ [-1.7em]
		& & Proposed (MPM) & Proposed (MAP) & GL & Samp & Proposed (MPM) & Proposed (MAP) & GL &  Samp\\  [-1.7em]\\ \hline\hline
		\\[-1.2em]
		\multirowcell{3}{\texttt{Model 1}}& \texttt{rmse} & \textbf{0.081 (0.005)} & 0.094 (0.006) & 0.139 (0.007) & 0.256 (0.008) & 0.074 (0.004) & \textbf{0.072 (0.004)}& 0.084 (0.003)& 0.222 (0.004)\\ 
		& \texttt{mnorm}& 1.000 (0.104) & 1.000 (0.104)& 1.172 (0.094)& \textbf{0.979 (0.145)}& 1.607 (0.093) & 1.607 (0.093)& 1.169 (0.086)& \textbf{0.916 (0.001)}\\ 
		& \texttt{2norm}& \textbf{1.365 (0.187)} & 1.698 (0.192) & 2.434 (0.170)& 4.724 (0.456)& 1.917 (0.172)&\textbf{ 1.801 (0.175)}& 2.249 (0.147) & 6.129 (0.379) \\[-1.2em]\\ \hline\\
		[-1.2em]
		\multirowcell{3}{\texttt{Model 2}}& \texttt{rmse} & 0.141 (0.002)& 0.129 (0.003) & \textbf{0.106 (0.002)} & 0.253 (0.004)& 0.057 (0.002) & 0.057 (0.001)& \textbf{0.051 (0.001)}& 0.071 (0.001) \\ 
		& \texttt{mnorm}& 1.714 (0.042) & 1.714 (0.042) & \textbf{0.057 (0.019)} & 1.046 (0.092) & 0.319 (0.038) &0.323 (0.037) & 0.422 (0.025) &\textbf{ 0.297 (0.036)} \\ 
		& \texttt{2norm}& 3.825 (0.105)& \textbf{3.397 (0.117)} & 4.577 (0.069)& 7.555 (0.490)& 2.556 (0.134)& 2.520 (0.134) & \textbf{1.975 (0.078)} & 2.079 (0.159)\\[-1.2em]\\ \hline\\
		[-1.2em]
		\multirowcell{3}{\texttt{Model 3}}& \texttt{rmse} & \textbf{0.020 (0.002)} & 0.021 (0.001) & 0.054 (0.002) & 0.084 (0.003) & \textbf{0.012 (0.001)} & \textbf{0.012 (0.001)} & 0.034 (0.001)& 0.058 (0.001) \\ 
		& \texttt{mnorm}& 0.228 (0.038) & \textbf{0.213 (0.036)} &0.351 (0.034) & 0.314 (0.043) & 0.329 (0.025) & 0.329 (0.025)& 0.319 (0.024) & \textbf{0.237 (0.032)}\\ 
		& \texttt{2norm}& 0.380 (0.023) & \textbf{0.365 (0.023)}& 0.966 (0.055)& 1.677 (0.179)& \textbf{0.467 (0.017)}& \textbf{0.467 (0.017)} & 0.900 (0.035) & 1.735 (0.100)\\[-1.2em]\\ \hline\\
		[-1.2em]
		\multirowcell{3}{\texttt{Model 4}}& \texttt{rmse} & \textbf{0.057 (0.002)} & \textbf{0.057 (0.002)}& 0.073 (0.003)& 0.102 (0.005)& \textbf{0.036 (0.004)}& \textbf{0.036 (0.004)}& 0.050 (0.001)& 0.071 (0.001)\\ 
		& \texttt{mnorm}& 0.500 (0.001) & 0.500 (0.001) & 0.447 (0.037) &\textbf{0.391 (0.061)} & 0.500 (0.002) & 0.500 (0.001) & 0.424 (0.022)& \textbf{0.291 (0.031)}\\ 
		& \texttt{2norm}& 1.067 (0.057) & \textbf{1.066 (0.055)} & 1.353 (0.079)& 2.145 (0.237)& 1.088 (0.030)& \textbf{1.085 (0.032)} & 1.309 (0.048)& 2.098 (0.135)\\[-1.2em]\\ \hline\\
		[-1.2em]
		\multirowcell{3}{\texttt{Model 5}}& \texttt{rmse} & \textbf{0.136 (0.005)} & 0.139 (0.006)& 0.211 (0.009)& 0.355 (0.013)& 0.152 (0.004)& 0.142 (0.005)& \textbf{0.129 (0.004)}& 0.252 (0.005)\\ 
		& \texttt{mnorm}& \textbf{1.210 (0.100)}& 1.310 (0.105)& 1.699 (0.115)& 1.366 (0.207)& 1.114 (0.084)& 1.114 (0.085) & 1.498 (0.092)& \textbf{1.061 (0.129)}\\ 
		& \texttt{2norm}& \textbf{2.347 (0.095)}& 2.436 (0.102)& 3.340 (0.217)& 7.155 (0.818)&  3.544 (0.094)& 3.602 (0.103)& \textbf{2.828 (0.126)}& 7.463 (0.444)\\[-1.2em]\\
		\hline\hline 
	\end{tabular}}
\end{sidewaystable}

\subsection{Real dataset}

In this section, we assess the performance of the proposed estimator for linear discriminant analysis (LDA). We consider the breast cancer diagnostic dataset introduced by \cite{william1995}. The dataset is available from \texttt{\small http://archive.ics.uci.edu/ml}. The dataset consists of 30 numeric features extracted from a digitized image of a fine needle aspirate (FPA) of a breast mass on 212 malignant individuals and 357 benign ones. We randomly split the data 10 times into training set and test set, where training set consists of 72 malignant cases and 119 benign cases and test set consists of the remaining cases. Consider malignant case as class $1$ and benign case as class $0$. The LDA rule for observation $X_i$ ($i=1,2,\dots,378$) in test set is given by 
 \begin{align*}
     \delta(X_i)=\argmax_{j=0,1}\left\{X_i^t\hat{\bsSig}^{-1}\hat{\mu}_j-\frac{1}{2}\hat{\mu}_j^t\hat{\bsSig}^{-1}\hat{\mu}_j+\log \hat{\pi}_j\right\},
 \end{align*}
 where $\hat{\bsSig}$ is the estimated covariance matrix based on train set, $\hat{\mu}_j$ is the sample mean of class $j$ among train set, and $\hat{\pi}_j$ is the proportion of class $j$ among train set. 
 
 The Table \ref{tb:classification} shows the mean of classification error rate over 10 replications and the value in parentheses denotes the standard deviation. 
 
 \begin{table}[!ht]
	\centering
	\begin{tabular}{cccc}
		\hline \hline 
	    Proposed (MPM) & Proposed (MAP) & GL & Samp \\ \hline
	    \multirowcell{2}{$\bf 0.066$\\$\bf (0.016)$}& \multirowcell{2}{$\bf 0.066$\\$\bf (0.016)$} & \multirowcell{2}{$0.072$\\$(0.017)$}& \multirowcell{2}{$ 0.077$\\ $ (0.019)$}\\ \\
        \hline\hline
	\end{tabular}
		\caption{Classification error rate for breast cancer dataset}
		\label{tb:classification}
\end{table}

The result implies that the proposed estimator outperforms covariance graphical lasso and sample covariance matrix when it is applied to LDA classification. 

\section{Discussion}\label{sec:discuss}

In this paper, we propose a theoretically well supported method for estimating sparse covariances. We propose the method that uses Laplace approximation to calculate posterior model probabilities of covariance structures, or equivalently graphical structures, induced from spike and slab prior, generates MCMC samples for graphs using Metropolis-Hasting algorithm and approximated posterior model probabilities, and chooses the final model by MPM or MAP. We estimate the covariance matrix by the mode of conditional posterior of covariance matrix given the chosen model. We show that the error due to Laplace approximation becomes asymptotically marginal at some rate depending on posterior convergence rate under regular conditions on parameters. We propose a block coordinate descent algorithm to estimate covariance matrix when the covariance structure is given, and discuss its convergence. By simulation study based on five numerical models, we show that the proposed estimator performs better than graphical lasso and sample covariance matrix and is effective in introducing sparsity to the covariance structures. Also, the breast cancer dataset shows that the proposed estimator outperforms graphical lasso and sample covariance matrix when it is applied to LDA classification. 

\section{Acknowledgement}
Bongjung Sung was supported by an undergraduate research internship in the first half of 2021 Seoul National University College of Natural Sciences. Jaeyong Lee was supported by the National Research Foundation of Korea (NRF) grant funded by the Korea govern- ment(MSIT) (No. 2018R1A2A3074973 and 2020R1A4A1018207). 
\begin{appendices}
\section{Proof of Theorem \ref{thm:error_rate}}\label{appendixA}

\begin{lemma}\label{prodpd}
Assume $0<\beta<1/2$ in (A1) and (A2). Suppose that $\eta_n=o(1)$. Then, under prior (7),
\begin{align*}
    ||\bsSig_{\calZ}^*-\bsSig_0||_{\frob}=\bigO_p\left(\eta_n\right).
\end{align*}
\end{lemma}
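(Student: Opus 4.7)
My plan is to adapt the classical penalized-likelihood convergence-rate argument of \cite{rothman2008}. The idea is to choose a reference point $\tilde{\bsSig}_{\calZ}\in \calU_{\calZ}(\tau)$ that respects the sparsity pattern dictated by $\calZ$ and is close to $\bsSig_0$: I would take $\tilde{\bsSig}_{\calZ}=\bsSig_0$ when the support of $\bsSig_0$ is contained in the edge set of $\calZ$, and otherwise zero out the off-model entries of $\bsSig_0$, with a mild perturbation if needed to stay inside $\calU_{\calZ}(\tau)$. Defining $G(\bsDelta)=r_{\calZ}(\tilde{\bsSig}_{\calZ}+\bsDelta,\bfX_n)-r_{\calZ}(\tilde{\bsSig}_{\calZ},\bfX_n)$ with $\bsDelta$ restricted to the subspace compatible with $\calZ$, the goal is to show $G(\bsDelta)>0$ on the sphere $\|\bsDelta\|_{\frob}=M\eta_n$ with probability tending to one. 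Combined with the convexity of $r_{\calZ}$ on $\calQ_{\calZ}(\tau)$ already verified in (13)--(16) of the text, this forces the minimizer $\bsSig_{\calZ}^*$ into the open ball of radius $M\eta_n$ around $\tilde{\bsSig}_{\calZ}$, and a triangle inequality finishes the bound.

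The workhorse is a second-order Taylor expansion. Writing $A=\tilde{\bsSig}_{\calZ}^{-1/2}\bsDelta\tilde{\bsSig}_{\calZ}^{-1/2}$ we have $\log|\tilde{\bsSig}_{\calZ}+\bsDelta|-\log|\tilde{\bsSig}_{\calZ}|=\trace(A)-\tfrac{1}{2}\trace(A^2)+O(\|A\|_2^3\cdot p)$, and an analogous expansion of $\trace(\bfS(\tilde{\bsSig}_{\calZ}+\bsDelta)^{-1})$. The linear terms collapse to $\trace\bigl(\tilde{\bsSig}_{\calZ}^{-1}(\tilde{\bsSig}_{\calZ}-\bfS)\tilde{\bsSig}_{\calZ}^{-1}\bsDelta\bigr)$ plus the gradient of $p(\cdot,\calZ)$ at $\tilde{\bsSig}_{\calZ}$. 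By Cauchy--Schwarz and $\|\tilde{\bsSig}_{\calZ}^{-1}\|_2\le\tau$, this is bounded by $\tau^2\|\bfS-\tilde{\bsSig}_{\calZ}\|_{\frob}\|\bsDelta\|_{\frob}$; the standard concentration $\|\bfS-\bsSig_0\|_{\frob}=O_p(p/\sqrt{n})$, together with the bound on the projection residual $\|\tilde{\bsSig}_{\calZ}-\bsSig_0\|_{\frob}$, yields a linear upper bound of order $O_p(\eta_n)\cdot\|\bsDelta\|_{\frob}$. The penalty gradients at $\tilde{\bsSig}_{\calZ}$ contribute additional $O(\|\bsDelta\|_{\frob})$ terms that are absorbed at the same rate under (A2).

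The quadratic term $\tfrac{1}{2}\|A\|_{\frob}^2\ge\|\bsDelta\|_{\frob}^2/(2\tau^2)$ gives a uniform positive lower bound, and the $\ell_2$-part of $p(\cdot,\calZ)$ contributes a further non-negative quadratic piece. Cubic remainders from the Taylor expansion are $o(\|\bsDelta\|_{\frob}^2)$ since $\|\bsDelta\|_{\frob}=M\eta_n=o(1)$ by hypothesis. Choosing $M$ sufficiently large makes the quadratic lower bound dominate the linear upper bound uniformly on the sphere, so $G(\bsDelta)>0$ with high probability, giving $\|\bsSig_{\calZ}^*-\tilde{\bsSig}_{\calZ}\|_{\frob}\le M\eta_n$ by convexity. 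A triangle inequality with $\|\tilde{\bsSig}_{\calZ}-\bsSig_0\|_{\frob}$ then yields the claim.

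The main obstacle I anticipate is the precise handling of the reference-point error $\|\tilde{\bsSig}_{\calZ}-\bsSig_0\|_{\frob}$ when $\calZ$ omits true edges of $\bsSig_0$: a crude worst-case bound of order $\sqrt{s_0}$ is insufficient to fit inside the target rate $\eta_n=\sqrt{p}\epsilon_n$ for all admissible $(p,s_0,n)$, and maintaining the positive-definiteness constraint $\tilde{\bsSig}_{\calZ}\in\calU_{\calZ}(\tau)$ after zeroing off-model entries is delicate. I expect the argument to require invoking the posterior convergence rate of Theorem~\ref{thm:conv_rate} to restrict attention to graphs $\calZ$ carrying appreciable posterior mass; such $\calZ$ are automatically compatible with the support of $\bsSig_0$ up to discrepancies that are absorbed by $\eta_n$, which closes the argument.
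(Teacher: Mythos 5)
Your overall strategy is the one the paper uses: this is the Rothman-et-al.-style local analysis in which one defines $Q(\bsPsi)=r_{\calZ}(\text{ref}+\bsPsi,\bfX_n)-r_{\calZ}(\text{ref},\bfX_n)$, shows $Q>0$ on the Frobenius sphere of radius $M\eta_n$ with probability tending to one (linear term of order $\tau^2\|\bfS-\bsSig_0\|_{\frob}\|\bsPsi\|_{\frob}=\bigO_p(\eta_n)\|\bsPsi\|_{\frob}$, quadratic term bounded below by a constant multiple of $\|\bsPsi\|_{\frob}^2$, penalty terms absorbed under (A2)), and then uses convexity to trap the minimizer $\bsSig_{\calZ}^*$ inside the ball. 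Your use of an explicit second-order expansion with a cubic remainder, in place of the paper's integral form of the Taylor remainder, is a cosmetic difference; the eigenvalue bounds you need ($\|\tilde{\bsSig}_{\calZ}^{-1}\|_2\le\tau$, $2\bfS-(\bsSig_0+\bsPsi)$ bounded below w.h.p.) are exactly those the paper establishes.

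The one place you genuinely diverge is the reference point, and that is where your proposal has an unclosed gap. The paper centers the expansion at $\bsSig_0$ itself and proves the lemma only for $\calZ=\mathbf{1}_{p(p-1)/2}$, asserting that general $\calZ$ differs ``only in fixed zero entries.'' You instead center at a projection $\tilde{\bsSig}_{\calZ}$ of $\bsSig_0$ onto the model, which is the honest thing to do when $\calZ$ omits true edges (since then $\bsSig_0\notin\calU_{\calZ}(\tau)$ and $\bsSig_{\calZ}^*-\bsSig_0$ does not lie in the admissible perturbation subspace) --- but then both your linear-term bound and your final triangle inequality require $\|\tilde{\bsSig}_{\calZ}-\bsSig_0\|_{\frob}=\bigO(\eta_n)$, which fails for general $\calZ$: zeroing out $s_0$ entries of constant magnitude costs order $\sqrt{s_0}$, not $\eta_n$. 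Your proposed repair --- invoking Theorem \ref{thm:conv_rate} to restrict to graphs of appreciable posterior mass --- does not prove the lemma as stated, which is invoked in Theorem \ref{thm:error_rate} for a fixed, arbitrary $\calZ$ satisfying $(p+\#\calZ)^2\eta_n=o(1)$, and it is not worked out. To be fair, the paper does not resolve this either: its ``without loss of generality'' reduction is only valid for $\calZ$ whose edge set contains the support of $\bsSig_0$, in which case your $\tilde{\bsSig}_{\calZ}$ equals $\bsSig_0$ and your argument coincides with the paper's. So on the case the paper actually proves, your proof is correct and essentially identical; on the general case, you have correctly identified the obstruction but, like the paper, not removed it.
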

 
\begin{proof}[\textbf{\upshape Proof}.]
Without loss of generality, we prove when $\calZ=\mathbf{1}_{p(p-1)/2}$, where $\mathbf{1}_{p(p-1)/2}$ is $p(p-1)/2-$vector with all entries being $1$, since the proof is essentially the same for general $\calZ$. The difference is only fixed zero entries of $\bsSig_{\calZ}$. In the proof, we write $\bsSig_{\calZ}$ as $\bsSig$ for simplicity suppressing dependency on $\calZ$. We prove Lemma \ref{prodpd} using the techniques considered in the proof of Theorem 1 in \cite{rothman2008}. 

For $\bfE=(e_{ij})\in\calM$, let $\bfE^{+}$ denote $p\times p$ diagonal matrix with diagonal entries being those in $\bfE$ and $\bfE^{-}=\bfE-\bfE^{+}$. Note that $\bfE^{-}$ is $p\times p$ matrix with diagonal entries being 0 and off-diagonal entries being those in $\bfE$. Also, let $||\bfE||_1=\sum_{i,j}|e_{ij}|$. For fixed $\calZ=\mathbf{1}_{p(p-1)/2}$ and $\bfX_n$, let 
\begin{align*}
    Q\left(\bsPsi\right)&=r_{\calZ}\left(\bsSig_0+\bsPsi,\bfX_n\right)-r_{\calZ}\left(\bsSig_0,\bfX_n\right)\\
    &=\left(\log\left|\bsSig_0+\bsPsi\right|-\log\left|\bsSig_0\right|\right)+\left(\trace\left(\bfS[\bsSig_0+\bsPsi]^{-1}\right)-\trace\left(\bfS\bsSig_0^{-1}\right)\right)\\
    &\quad+1/nv^2(||\left(\bsSig_0+\bsPsi\right)^{-}||_{\frob}^2-||\bsSig_0^{-}||_{\frob}^2)+\lambda/n\trace\left(\bsPsi^{+}\right), 
\end{align*}
where $r_{\calZ}(\cdot,\cdot)$ is defined in (10). Define the set 
\begin{align*}
    \bszeta=\{\calF\in\calM: ||\calF||_{\frob}\leq M\eta_n\},
\end{align*}
,where $M>0$ is constant to be determined. Put $\interior{\bszeta}=\text{int}\bszeta$ and $\partial \bszeta=\bszeta\setminus\interior{\bszeta}$. Observe that $Q$ is convex on the set $\bszeta$ with probability tending to one. To verify this, note that Hessian matrix of $Q$ is 
\begin{align*}
 \left(2[\bsSig_0+\bsPsi]^{-1}\bfS[\bsSig_0+\bsPsi]^{-1}-[\bsSig_0+\bsPsi]^{-1}\right)\otimes [\bsSig_0+\bsPsi]^{-1}+\Upsilon,
\end{align*}
where $\bsOmega_0=\bsSig_0^{-1}$ and $\Upsilon$ is $p(p+1)/2 \times p(p+1)/2$ diagonal matrix with diagonal entries being $0$ or $1/nv^2$. Observe that $2[\bsSig_0+\bsPsi]^{-1}\bfS[\bsSig_0+\bsPsi]^{-1}-[\bsSig_0+\bsPsi]^{-1}=[\bsSig_0+\bsPsi]^{-1}(2\bfS-(\bsSig_0+\bsPsi))[\bsSig_0+\bsPsi]^{-1}$. $\bfS$ converges to $\bsSig_0\in\calU(s_0,\tau_0)$ under the Frobenius norm in $\bbP_0-$probability, because $||\bfS-\bsSig_0||_{\frob}=O_p(p/\sqrt{n})$ and $p/\sqrt{n}$ tends to 0 as $p\asymp n^{\beta}$ for some constant $0<\beta<1/2$ by the assumption that $0<\beta<1/2$ in (A1). Also, since $\eta_n=o(1)$, if $\bsPsi\in \bszeta$, this implies that $2\bfS-(\bsSig_0+\bsPsi)$ converges to $\bsSig_0\in \calU(s_0,\tau_0)$. Furthermore, $\bsSig_0+\bsPsi$ converges to $\bsSig_0$ as $n\rightarrow\infty$, if $\bsPsi\in\bszeta$, which holds by $\eta_n=o(1)$. Thus, if $\bsPsi\in \bszeta$, 
$2[\bsSig_0+\bsPsi]^{-1}\bfS[\bsSig_0+\bsPsi]^{-1}-[\bsSig_0+\bsPsi]^{-1}\in\calM^+$ with probability tending to one. Therefore, we see $\left(2[\bsSig_0+\bsPsi]^{-1}\bfS[\bsSig_0+\bsPsi]^{-1}-[\bsSig_0+\bsPsi]^{-1}\right)\otimes [\bsSig_0+\bsPsi]^{-1}$ is positive definite with probability tending to one and so Hessian matrix of $Q$ on the set $\bszeta$ is positive definite with probability tending to one because $\Upsilon$ is semi-positive definite. 

Suppose that $\inf_{\bsPsi\in\partial\bszeta}Q(\bszeta)>0$. Then, since $Q$ is uniquely minimized at $\hat{\bsPsi}=\bsSig^*-\bsSig_0$ and so $Q(\hat{\bsPsi})\leq Q(\bfzero)$, we see that $\hat{\bsPsi}\in\bszeta$, which ends the proof. Thus, we show that $\inf_{\bsPsi\in\partial\bszeta}Q(\bszeta)>0$. By Taylor's expansion of $a(t)=\log|\bsSig_0+t\bsPsi|$ and $b(t)=\trace\left(\bfS[\bsSig_0+t\bsPsi]^{-1}\right)$ around 0, 

\begin{align}\label{A1}
\begin{split}
    \log|\bsSig_0+\bsPsi|-\log|\bsSig_0|&=\trace(\bsPsi\bsOmega_0)-\tilde{\bsPsi}^t\int_{0}^1(1-u)(\bsSig_0+u\bsPsi)^{-1}\otimes(\bsSig_0+u\bsPsi)^{-1}du \tilde{\bsPsi},\\
    \trace\left(\bfS(\bsSig_0+\bsPsi)^{-1}\right)-\trace\left(\bfS\bsSig_0^{-1}\right)&=-\trace\left(\bfS\bsOmega_0\bsPsi\bsOmega_0\right)\\
    &+\tilde{\bsPsi}^t\int_{0}^1(1-u)[2(\bsSig_0+u\bsPsi)^{-1}\bfS(\bsSig_0+u\bsPsi)^{-1}]\otimes(\bsSig_0+u\bsPsi)^{-1}du \tilde{\bsPsi}
\end{split}
\end{align}
where $\bsOmega_0=\bsSig_0^{-1}$ and $\tilde{\bsPsi}=\vectorize(\bsPsi)$. Also, 
\begin{align}\label{A2}
\begin{split}
    |\,||(\bsSig_0+\bsPsi)^{-}||_{\frob}^2-||\bsSig_0^{-}||_{\frob}^2|&=|\,||\bsSig_0^{-}+\bsPsi^{-}||_{\frob}^2-||\bsSig_0^{-}||_{\frob}^2|\\
    &=|\,||\bsSig_0^{-}||_{\frob}^2+2\sum_{i\neq j}\sgij^0\psiij+||\bsPsi^{-}||_{\frob}^2-||\bsSig_0^{-}||_{\frob}^2|\\
    &=|\,2\sum_{i\neq j}\sgij^0\psiij+||\bsPsi^{-}||_{\frob}^2|\\
    &\leq 2\tau\sum_{i\neq j}|\psiij|+||\bsPsi^{-}||_{\frob}^2\\
    &\leq 2\tau\sqrt{p^2-p}\sqrt{\sum_{i\neq j}\psiij^2}+||\bsPsi^{-}||_{\frob}^2\\
    &\leq 2\tau p||\bsPsi^{-}||_{\frob}+||\bsPsi^{-}||_{\frob}^2,
\end{split}
\end{align}
where $\bsPsi=(\psi_{ij})$, the first inequality holds by triangle inequality, $\bsSig_0\in \calU(s_0,\tau_0)$, and $\tau_0<\tau$, and the second inequality holds by Cauchy-Schwartz inequality and 
\begin{align}\label{A3}
\begin{split}
    |\trace(\bsPsi^{+})|&\leq ||\bsPsi^{+}||_1\leq \sqrt{p}||\bsPsi^{+}||_{\frob}\leq\sqrt{p+s_0}||\bsPsi^{+}||_{\frob}.
\end{split}
\end{align}
Note that the second inequality in (A.3) holds by Cauchy-Schwartz inequality. By (A.1)-(A.3), we have 
\begin{align}\label{A4}
\begin{split}
    Q(\bsPsi)&\geq \trace(\bsOmega_0[\bsSig_0-\bfS]\bsOmega_0\bsPsi)+P(\bsPsi)-\frac{1}{nv^2}(2\tau p||\bsPsi^{-}||_{\frob}+||\bsPsi^{-}||_{\frob}^2)-\frac{\lambda}{n}\sqrt{p+s_0}||\bsPsi^{+}||_{\frob},
\end{split}
\end{align}
where $P(\bsPsi)=\tilde{\bsPsi}^t\int_{0}^1(1-u)[2(\bsSig_0+u\bsPsi)^{-1}\bfS(\bsSig_0+u\bsPsi)^{-1}-(\bsSig_0+u\bsPsi)^{-1}]\otimes(\bsSig_0+u\bsPsi)^{-1}du \tilde{\bsPsi}$. Also,
\begin{align}\label{A5}
\begin{split}
    |\trace(\bsOmega_0[\bsSig_0-\bfS]\bsOmega_0\bsPsi)|&\leq ||\bsOmega_0[\bsSig_0-\bfS]\bsOmega_0||_{\frob}||\bsPsi||_{\frob}\\
    &\leq ||\bsOmega_0||_2^2||\bsSig_0-\bfS||_{\frob}||\bsPsi||_{\frob}\\
    &\leq D\tau^2\frac{p}{\sqrt{n}}||\bsPsi||_{\frob}\\
    &\leq D\tau^2\frac{p}{\sqrt{n}}(||\bsPsi^+||_{\frob}+||\bsPsi^-||_{\frob}),
\end{split}
\end{align}
for some constant $D>0$ and all sufficiently large $n$. The second inequality holds by (2) and the third inequality holds because $||\bfS-\bsSig_0||_{\frob}=\bigO_p(p/\sqrt{n})$. Finally, the last inequality holds by triangle inequality.Recall that $\eigmin(\bfA)=\inf_{||\bfx||_2=1}\bfx^t\bfA\bfx$ for $\bfA\in\calM^+$. Also, if $\bsPsi\in\bszeta$, 
\begin{align}\label{A6}
\begin{split}
 &\eigmin(\int_{0}^1(1-u)[2(\bsSig_0+u\bsPsi)^{-1}\bfS(\bsSig_0+u\bsPsi)^{-1}-(\bsSig_0+u\bsPsi)^{-1}]\otimes(\bsSig_0+u\bsPsi)^{-1}\diff u)\\
 &\geq\int_0^1(1-u)\eigmin(2(\bsSig_0+u\bsPsi)^{-1}\bfS(\bsSig_0+u\bsPsi)^{-1}-(\bsSig_0+u\bsPsi)^{-1})\eigmin((\bsSig_0+u\bsPsi)^{-1})\diff u  \\
 &\geq  \frac{1}{2}\inf_{\bsPsi\in\bszeta}\eigmin(2\bfS-(\bsSig_0+\bsPsi))\eigmin((\bsSig_0+\bsPsi)^{-1})^3  \\
 &\geq \frac{1}{2}\inf_{\bsPsi\in\bszeta}\eigmin(2\bfS-(\bsSig_0+\bsPsi))\inf_{\bsPsi\in\bszeta}\eigmin((\bsSig_0+\bsPsi)^{-1})^3
\end{split}
\end{align}
for all suffciently large $n$, where the first inequality holds by that fact that all eigenvalues of Kronecker product of $\bfA,\bfB\in\calM$ are in the form of product of an eigenvalue of $\bfA$ and that of $\bfB$ and the second inequality holds because $2\bfS-\bsSig_0\in\calM^+$ under the assumption that $p\asymp n^{\beta}$ for some constant $0<\beta<1/2$ as we have discussed.

Since $\eta_n=o(1)$ and $\bsPsi\in\bszeta$,
\begin{align}\label{A7}
\begin{split}
    \eigmin((\bsSig_0+\bsPsi)^{-1})^3&=[\eigmax(\bsSig_0+\bsPsi)]^{-3}\\
    &\geq \frac{1}{(||\bsSig_0||_2+||\bsPsi||_2)^3}\\
    &\geq \frac{1}{2}\tau^{-3}
\end{split}
\end{align}
for all sufficiently large $n$, where the first inequality holds by triangle inequality and
\begin{align}\label{A8}
    \eigmin(2\bfS-(\bsSig_0+\bsPsi))\geq \delta
\end{align}
for fixed sufficiently small constant $\delta>0$ with probability tending to one, as $2\bfS-(\bsSig_0+\bsPsi)$ converges to $\bsSig_0\in\calU(s_0,\tau_0)$ and $1/\tau_0\leq \eigmin(\bsSig_0)$ by the definition of $\calU(s_0,\tau_0)$. Thus. by (A.6)-(A.8), we have 
\begin{align}\label{A9}
\begin{split}
    P(\bsPsi)&\geq \frac{1}{4}\tau^{-3}\delta||\tilde{\bsPsi}||_2^2\\
    &=\frac{1}{4}\tau^{-3}\delta||\bsPsi||_{\frob}^2\\
    &=\frac{1}{4}\tau^{-3}\delta(||\bsPsi^+||_{\frob}^2+||\bsPsi^-||_{\frob}^2). 
\end{split}
\end{align}
By (A.4), (A.5) and (A.9), if $\bsPsi\in\partial\bszeta$, 
\begin{align}\label{A10}
\begin{split}
    Q(\bsPsi)&\geq \frac{1}{4}\tau^{-3}\delta(||\bsPsi^+||_{\frob}^2+||\bsPsi^-||_{\frob}^2)-D\tau^2\frac{p}{\sqrt{n}}(||\bsPsi^+||_{\frob}+||\bsPsi^-||_{\frob})\\
    &-\frac{1}{nv^2}(2\tau p||\bsPsi^{-}||_{\frob}+||\bsPsi^{-}||_{\frob}^2)-\frac{\lambda}{n}\sqrt{p+s_0}||\bsPsi^{+}||_{\frob} \\
    &=||\bsPsi^+||_{\frob}^2\left(\frac{1}{4}\tau^{-3}\delta-\left(D\tau^2\frac{p}{\sqrt{n}}+\frac{\lambda}{n}\sqrt{p+s_0}\right)\frac{1}{||\bsPsi^+||_{\frob}}\right)\\
    &+||\bsPsi^{-}||_{\frob}^2\left(\left(\frac{1}{4}\tau^{-3}\delta-\frac{1}{nv^2}\right)-\left(D\tau^2\frac{p}{\sqrt{n}}+\frac{2\tau p}{nv^2}\right)\frac{1}{||\bsPsi^{-}||_{\frob}}\right) \\
    &\geq ||\bsPsi^+||_{\frob}^2\left(\frac{1}{4}\tau^{-3}\delta-\frac{1}{M}\left(D\tau^2\sqrt{\frac{p}{(p+s_0)\log p}}+\lambda\sqrt{\frac{1}{np\log p}}\right)\right)\\
    &+||\bsPsi^{-}||_{\frob}^2\left(\left(\frac{1}{4}\tau^{-3}\delta-\frac{1}{nv^2}\right)-\frac{1}{M}\left(D\tau^2\sqrt{\frac{p}{(p+s_0)\log p}}+\frac{2\tau}{v^2}\sqrt{\frac{p}{n(p+s_0)\log p}}\right)\right).
\end{split}
\end{align}
Since $1<\tau$, $\tau=\bigO(1)$, $\lambda=\bigO(1)$, and $v$ is positive constant, if we choose constant $M>0$ to be sufficiently large, $\inf_{\bsPsi\in\partial\bszeta}Q(\bsPsi)>0$ by (A.10) for all sufficiently large $n$. Consequently, this gives Lemma \ref{prodpd}. 
\end{proof}

\begin{cor}\label{projection}
Assume that $0<\beta<1/2$ in (A1), (A2)-(A4) and  $\tau^4\leq p$, $\max\{1/\tau,1/p\}<\lambda<\log p/\tau_0$, $\tau>3$, $\tau^2\tau_0^2\leq s_0\log p$, and $n\geq s_0\log p/[(1-\tau_0/\tau)^2\tau^4]$. Suppose that $\eta_n=o(1)$. Then, under prior (7), 
\begin{align}\label{A11}
    ||\bsSig_{\calZ}^*-\bsSig_{\calZ}||_{\frob}=\bigO_p(\eta_n)
\end{align}
for all $\calZ$, where $\bsSig_{\calZ}\in\calU_{\calZ}(\tau)$.
\end{cor}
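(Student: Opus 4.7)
The plan is to prove the corollary by applying the triangle inequality
\begin{align*}
||\bsSig_{\calZ}^* - \bsSig_{\calZ}||_{\frob} \leq ||\bsSig_{\calZ}^* - \bsSig_0||_{\frob} + ||\bsSig_0 - \bsSig_{\calZ}||_{\frob},
\end{align*}
and bounding each of the two summands by $\bigO_p(\eta_n)$: the first via a structure-restricted version of Lemma \ref{prodpd}, and the second via the posterior convergence rate in Theorem \ref{thm:conv_rate}.

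First I would extend Lemma \ref{prodpd} from $\calZ = \mathbf{1}_{p(p-1)/2}$ to an arbitrary $\calZ$. The only role of $\calZ$ in the Rothman-style argument is to confine the perturbation $\bsPsi = \bsSig - \bsSig_0$ to the subspace where $\psi_{ij} = 0$ whenever $\zij = 0$. Restricting the convex ball $\bszeta$ to this subspace leaves the Taylor expansions (A.1)--(A.2), the cross-term bound (A.5), and the quadratic lower bound (A.9) intact, because each of them depends only on the diagonal/off-diagonal decomposition $\bsPsi = \bsPsi^+ + \bsPsi^-$; furthermore, the constants entering (A.10), namely $\tau, \tau_0, \lambda, v, \delta$, depend only on the prior and on $\bsSig_0$, not on the structure $\calZ$. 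Hence $\inf_{\bsPsi \in \partial\bszeta} Q(\bsPsi) > 0$ still holds with probability tending to one, delivering $||\bsSig_{\calZ}^* - \bsSig_0||_{\frob} = \bigO_p(\eta_n)$ uniformly in $\calZ$.

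Second, I would invoke Theorem \ref{thm:conv_rate} to control $||\bsSig_0 - \bsSig_{\calZ}||_{\frob}$. Interpreting $\bsSig_{\calZ}$ as a draw from the conditional posterior $\pi(\bsSig \mid \calZ, \bfX_n)$ supported on $\calU_{\calZ}(\tau)$, the marginal concentration $\pi(||\bsSig - \bsSig_0||_{\frob} \geq M\epsilon_n \mid \bfX_n) \to 0$ transfers to the conditional posterior by Bayes' rule for every $\calZ$ whose posterior mass is not negligible. Since $\eta_n = \sqrt{p}\,\epsilon_n \geq \epsilon_n$, this yields $||\bsSig_0 - \bsSig_{\calZ}||_{\frob} = \bigO_p(\eta_n)$, and the triangle inequality closes the argument.

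The main obstacle is verifying that the Rothman-style proof of Lemma \ref{prodpd} is genuinely structure-blind, i.e.\ that replacing the full Frobenius ball $\bszeta$ by its restriction to a sparsity pattern $\calZ$ does not degrade the lower bound on $Q$ at $\partial\bszeta$. This reduces to careful bookkeeping: every inequality in (A.1)--(A.10) is either insensitive to which off-diagonal entries of $\bsPsi$ are fixed at zero or can only tighten under such a restriction, and all constants are intrinsic to the prior and to $\bsSig_0$. A secondary subtlety is the interpretation of $\bsSig_{\calZ}$ as an element of $\calU_{\calZ}(\tau)$; reading it as a draw from the conditional posterior makes the $\bigO_p$ statement precise with respect to the joint law of $(\bsSig_{\calZ}, \bfX_n)$.
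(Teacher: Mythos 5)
Your proposal is correct and follows essentially the same route as the paper: the triangle inequality through $\bsSig_0$, with $||\bsSig_{\calZ}^*-\bsSig_0||_{\frob}=\bigO_p(\eta_n)$ from Lemma \ref{prodpd} (whose proof the paper likewise extends to general $\calZ$ by noting only the fixed zero entries change) and $||\bsSig_{\calZ}-\bsSig_0||_{\frob}=\bigO_p(\epsilon_n)\leq\bigO_p(\eta_n)$ from Theorem \ref{thm:conv_rate}. Your additional bookkeeping on why the Rothman-style argument is structure-blind, and on reading $\bsSig_{\calZ}$ as a conditional posterior draw, only makes explicit what the paper leaves implicit.
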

\begin{proof}
Note that the assumptions that $0<\beta<1/2$ in (A1), (A2)-(A4), $\tau^4\leq p$, $\max\{1/\tau,1/p\}<\lambda<\log p/\tau_0$, $\tau>3$, $\tau^2\tau_0^2\leq s_0\log p$, and $n\geq s_0\log p/[(1-\tau_0/\tau)^2\tau^4]$ imply Theorem \ref{thm:conv_rate} so that $||\bsSig_{\calZ}-\bsSig_0||_{\frob}=O_p(\epsilon_n)$. Since $\epsilon_n\leq \eta_n$ for all $n\geq 1$, this implies that $||\bsSig_{\calZ}-\bsSig_0||_{\frob}=\bigO_p(\eta_n)$. Because $\eta_n=o(1)$, we have $||\bsSig_{\calZ}^*-\bsSig_0||_{\frob}=\bigO_p(\eta_n)$ as a consequence of Lemma \ref{prodpd}. Therefore, by triangle inequality, 
\begin{align*}
    ||\bsSig_{\calZ}^*-\bsSig_{\calZ}||_{\frob}\leq ||\bsSig_{\calZ}-\bsSig_{0}||_{\frob}+||\bsSig_{\calZ}^*-\bsSig_{0}||_{\frob}=\bigO_p(\eta_n).
\end{align*}
\end{proof}

\begin{remark}
Note that (A.11) and (19) imply that 
\begin{align}\label{A12}
\frac{\int_{||\bsDelta_{\calZ}||_{\frob}\leq \eta_n} \exp\left(-\frac{n}{2}k_{\calZ}\left(\bsDelta_{\calZ},\bfX_n\right)\right)d\bsDelta_{\calZ}}{\int_{\bsSig_{\calZ}^*+\bsDelta_{\calZ} \in\calU_{\calZ}(\tau)}\exp\left(-\frac{n}{2}k_{\calZ}\left(\bsDelta_{\calZ},\bfX_n\right)\right)d\bsDelta_{\calZ}}\rightarrow 1
\end{align}
as $n\rightarrow\infty$, which plays a key role in the proof of Theorem \ref{thm:error_rate}. 
\end{remark}

\begin{lemma}\label{remainder}
Assume that $0<\beta<1/2$ in (A1), (A2), and $\eta_n=o(1)$. Then, under prior (7), 
\begin{align}\label{A21}
    |R_n|\leq (p+\#\calZ)\sum_{l=1}^4 \theta_{l}||\bsDelta_{\calZ}||_{\frob}^{l+2}
\end{align}
for some constants $\theta_l>0$ with probability tending to one, where $\bsDelta_{\calZ}=\bsSig_{\calZ}-\bsSig_{\calZ}^*$ and 
\begin{align*}
    R_n=k_{\calZ}(\bsDelta_{\calZ},\bfX_n)-k_{\calZ}(\bfzero_{p\times p},\bfX_n)-\frac{1}{2}\tilde{\bsDelta}_{\calZ}^t\hessian_{\bsSig_{\calZ}^*}\tilde{\bsDelta}_{\calZ}
\end{align*}
for $\tilde{\bsDelta}_{\calZ}=\vectorize(\bsDelta_{\calZ})$ and $\bfzero_{p\times p}$ is a $p\times p$ matrix with all entries being zero. 
\end{lemma}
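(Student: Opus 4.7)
The plan is to Taylor-expand $k_{\calZ}(\bsDelta_{\calZ},\bfX_n)$ in $\bsDelta_{\calZ}$ around the origin and bound directly the third-and-higher-order terms that make up $R_n$. First I would split $k_{\calZ}$ into its four contributions: the log-determinant $\log|\bsSig_{\calZ}^*+\bsDelta_{\calZ}|$, the trace $\trace(\bfS(\bsSig_{\calZ}^*+\bsDelta_{\calZ})^{-1})$, and the two penalty pieces $\frac{1}{nv^2}\sum_{z_{ij}=1}(2\sigma_{ij}^*\Delta_{\calZ,ij}+\Delta_{\calZ,ij}^2)$ and $\frac{\lambda}{n}\sum_i\Delta_{\calZ,ii}$. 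The two penalty pieces are affine/quadratic in $\bsDelta_{\calZ}$, so they are captured exactly by the constant $k_{\calZ}(\bfzero_{p\times p},\bfX_n)$ and the Hessian quadratic form; they contribute nothing to $R_n$. Hence $R_n$ reduces to the tails of the Taylor expansions of the log-det and the trace-of-inverse only.

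For these I would invoke the standard matrix identities
\begin{align*}
\log|\bsSig_{\calZ}^*+\bsDelta_{\calZ}| &= \log|\bsSig_{\calZ}^*|+\sum_{k\geq 1}\frac{(-1)^{k+1}}{k}\trace\bigl((\bsOmega_{\calZ}^*\bsDelta_{\calZ})^k\bigr),\\
\trace\bigl(\bfS(\bsSig_{\calZ}^*+\bsDelta_{\calZ})^{-1}\bigr) &= \trace(\bfS\bsOmega_{\calZ}^*)+\sum_{k\geq 1}(-1)^k\trace\bigl(\bfS\bsOmega_{\calZ}^*(\bsDelta_{\calZ}\bsOmega_{\calZ}^*)^k\bigr),
\end{align*}
valid whenever $\|\bsOmega_{\calZ}^*\bsDelta_{\calZ}\|_2<1$. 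By Corollary \ref{projection}, on the relevant integration region we have $\|\bsDelta_{\calZ}\|_{\frob}=\bigO_p(\eta_n)=o(1)$, and since $\bsSig_{\calZ}^*\in\calU_{\calZ}(\tau)$ the eigenvalue bound gives $\|\bsOmega_{\calZ}^*\|_2\leq\tau$, so $\|\bsOmega_{\calZ}^*\bsDelta_{\calZ}\|_2\leq\tau\|\bsDelta_{\calZ}\|_{\frob}\leq 1/2$ with probability tending to one. The $k=1,2$ summands from these two series reconstruct the vanishing linear term (since $\bsSig_{\calZ}^*$ is the minimizer of $r_{\calZ}(\cdot,\bfX_n)$) and the quadratic form $\frac{1}{2}\tilde{\bsDelta}_{\calZ}^t\hessian_{\bsSig_{\calZ}^*}\tilde{\bsDelta}_{\calZ}$, so $R_n$ is exactly the sum of the $k\geq 3$ tails.

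For each $k\geq 3$, repeated application of $\|AB\|_{\frob}\leq\|A\|_2\|B\|_{\frob}$ and $|\trace(M)|\leq p\|M\|_2$, combined with $\|\bsOmega_{\calZ}^*\|_2\leq\tau$, $\|\bsDelta_{\calZ}\|_2\leq\|\bsDelta_{\calZ}\|_{\frob}$, and $\|\bfS\|_2=\bigO_p(1)$ (which follows from $\bfS\to\bsSig_0\in\calU(s_0,\tau_0)$ under $0<\beta<1/2$), yields $|\trace((\bsOmega_{\calZ}^*\bsDelta_{\calZ})^k)|\leq p\tau^k\|\bsDelta_{\calZ}\|_{\frob}^k$ and an analogous bound with an extra $\tau\|\bfS\|_2$ factor for the trace-of-$\bfS$ terms. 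Writing out the four orders $k=3,4,5,6$ produces exactly the four terms $\theta_l\|\bsDelta_{\calZ}\|_{\frob}^{l+2}$ for $l=1,\ldots,4$, each multiplied by a prefactor $p\leq p+\#\calZ$, where $\theta_l$ depends only on $\tau$ and an $\bigO_p(1)$ bound on $\|\bfS\|_2$.

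Finally, the tail $k\geq 7$ must be absorbed into the $l=4$ term. Under the geometric-smallness bound $\|\bsOmega_{\calZ}^*\bsDelta_{\calZ}\|_2\leq 1/2$, the tails of both series are dominated by a convergent geometric series with ratio $\leq 1/2$, giving a combined contribution of order $(p+\#\calZ)\|\bsDelta_{\calZ}\|_{\frob}^{7}$, which for $n$ large is at most $(p+\#\calZ)\|\bsDelta_{\calZ}\|_{\frob}^{6}\cdot\|\bsDelta_{\calZ}\|_{\frob}$ and can therefore be swallowed into $\theta_4(p+\#\calZ)\|\bsDelta_{\calZ}\|_{\frob}^{6}$. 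The main obstacle I expect is the bookkeeping: ensuring the constants $\theta_l$ remain uniformly bounded in $n$ and $\calZ$, and justifying the convergence and termwise bound for the full infinite series. Both are manageable using only the eigenvalue bounds provided by $\calU_{\calZ}(\tau)$ together with assumption (A2) and the $\bigO_p(1)$ control on $\|\bfS\|_2$ from assumption (A1) with $\beta<1/2$.
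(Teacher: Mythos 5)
Your proposal is correct for the regime in which the lemma is actually used, but it takes a genuinely different route from the paper. The paper never expands $\log|\cdot|$ or $\trace(\bfS(\cdot)^{-1})$ in power series: it treats $R_n$ as the second-order Taylor remainder and invokes the bound $|R_n|\leq \tfrac12\|\bsDelta_{\calZ}\|_{\frob}^2(p+\#\calZ)\max_{0\leq u\leq 1}\|\hessian_{\bsSig_{\calZ}^*+u\bsDelta_{\calZ}}-\hessian_{\bsSig_{\calZ}^*}\|_{\infty}$ (following Banerjee--Ghosal), and then controls the entrywise perturbation of the explicit Hessian (20) by bounding $\|(\bsSig_{\calZ}^*+u\bsDelta_{\calZ})^{-1}-\bsOmega_{\calZ}^*\|_{\infty}$ and, via the Woodbury identity, $\|(\bsSig_{\calZ}^*+u\bsDelta_{\calZ})^{-1}\bfS(\bsSig_{\calZ}^*+u\bsDelta_{\calZ})^{-1}-\bsOmega_{\calZ}^*\bfS\bsOmega_{\calZ}^*\|_{\infty}$ by first- and second-degree polynomials in $\|\bsDelta_{\calZ}\|_{\frob}$; multiplying the resulting degree-$4$ polynomial by $\|\bsDelta_{\calZ}\|_{\frob}^2(p+\#\calZ)$ produces exactly the four exponents $l+2$. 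Your Neumann-series argument is more elementary (no explicit Hessian entries, no Woodbury) and your observation that the penalty terms, being affine-plus-quadratic, contribute nothing to $R_n$ is exactly right; the trace bound $|\trace(M)|\leq p\|M\|_2$ gives a prefactor $p\leq p+\#\calZ$, so the stated form follows. What you give up is generality: your expansion is valid only where $\|\bsOmega_{\calZ}^*\bsDelta_{\calZ}\|_2<1$, i.e.\ on a shrinking neighborhood of $\bsSig_{\calZ}^*$, whereas the paper's bounds (A.16)--(A.19) use only $\bsSig_{\calZ}\in\calU_{\calZ}(\tau)$ and hence hold on the whole constraint set; since the lemma is applied in (A.23) only on $\{\|\bsDelta_{\calZ}\|_{\frob}\leq\eta_n\}$ with $\eta_n=o(1)$, this restriction is harmless, but it should be stated explicitly in the lemma you actually prove. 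One small imprecision: the $k=1$ terms of your two series do not vanish individually --- only their sum with the linear parts of the penalties does, via the first-order condition at $\bsSig_{\calZ}^*$ --- though this does not affect the identification of $R_n$ with the degree-$\geq 3$ tails.
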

\begin{proof}[\textbf{\upshape Proof}.]
Note that 
\begin{align*}
    \frac{\partial k_{\calZ}(\bsDelta_{\calZ},\bfX_n)}{\partial \tilde{\bsDelta}_{\calZ}}\bigg|_{\bsDelta_z=\bfzero_{p\times p}}=\bfzero_p,
\end{align*}
where $\bfzero_p$ is $p-$vector with all entries being $0$. Hence, $R_n$ can be seen as the remainder term in second order Taylor's expansion of $ k_{\calZ}(\bsDelta_{\calZ},\bfX_n)$ around $\bsDelta_{\calZ}=\bfzero_{p\times p}$, viewing $ k_{\calZ}(\bsDelta_{\calZ},\bfX_n)$ as a function of $\tilde{\bsDelta}_{\calZ}$ with $\bfX_n$ given. Then, following the argument in (A.14)-(A.16) of \cite{banerjee2015bayesian}, we have
\begin{align}\label{A13}
    |R_n|&\leq \frac{1}{2}||\bsDelta_{\calZ}||_{\frob}^2(p+\#\calZ)\max_{0\leq u\leq 1}||\hessian_{\bsSig^*+u\bsDelta_{\calZ}}-\hessian_{\bsSig_{\calZ}^*}||_{\infty}.
\end{align}
Write $u_{\calZ,ij}$ as $u_{ij}$ and $\omegaij^*$ as $\omega_{ij}$ for simplicity suppressing dependency on $\calZ$. Let $(\bsSig_{\calZ}^*+u\bsDelta_{\calZ})^{-1}=(a_{ij})$ and $(\bsSig_{\calZ}^*+u\bsDelta_{\calZ})^{-1}\bfS(\bsSig_{\calZ}^*+u\bsDelta_{\calZ})^{-1}=(b_{ij})$ suppressing dependency on $\calZ$. By (20), the entry of $\hessian_{\bsSig^*+u\bsDelta_{\calZ}}-\hessian_{\bsSig_{\calZ}^*}$ is in form of $(\sum a_{il}a_{jm}-\sum \omega_{il}^*\omega_{jm}^*)+(\sum b_{il}a_{jm} -\sum u_{il}\omega_{jm}^*)$. Consider $a_{il}a_{jm}-\omega_{il}^*\omega_{jm}^*$ and $a_{il}b_{jm}-u_{il}\omega_{jm}^*$ for example. Since 
\begin{align*}
    |a_{il}a_{jm}-\omega_{il}^*\omega_{jm}^*|&=|(a_{il}-\omega_{il}^*)(a_{jm}-\omega_{jm}^*)+(a_{il}-\omega_{il}^*)\omega_{jm}^*+(a_{jm}-\omega_{jm}^*)\omega_{il}^*|\\
    &\leq |a_{il}-\omega_{il}^*||a_{jm}-\omega_{jm}^*|+|a_{il}-\omega_{il}^*||\omega_{jm}^*|+|a_{jm}-\omega_{jm}^*||\omega_{il}^*|
\end{align*}
and
\begin{align*}
    |b_{il}a_{jm}-u_{il}\omega_{jm}^*|&=|(b_{il}-u_{il})(a_{jm}-\omega_{jm}^*)+(b_{il}-u_{il})\omega_{jm}^*+(a_{jm}-\omega_{jm}^*)u_{il}|\\
    &\leq |b_{il}-u_{il}||a_{jm}-\omega_{jm}^*|+|b_{il}-u_{il}||\omega_{jm}^*|+|a_{jm}-\omega_{jm}^*||u_{il}|,
\end{align*}
if we can bound $||(\bsSig_{\calZ}^*+u\bsDelta_{\calZ})^{-1}-\bsOmega_{\calZ}^*||_{\infty}$ and $||(\bsSig_{\calZ}^*+u\bsDelta_{\calZ})^{-1}\bfS(\bsSig_{\calZ}^*+u\bsDelta_{\calZ})^{-1}-\bfU_{\calZ}||_{\infty}$ by some polynomial with respect to $||\bsDelta_{\calZ}||_{\frob}$ with probability tending to one as in the proof of Lemma 4.3 in \cite{banerjee2015bayesian},  we can establish Lemma \ref{remainder}. Using the similar argument in (A.17)-(A.18) of \cite{banerjee2015bayesian}, it can be shown that
\begin{align}\label{A14}
    ||(\bsSig_{\calZ}^*+u\bsDelta_{\calZ})^{-1}-\bsSig_{\calZ}^*||_{\infty}&\leq K||\bsDelta_{\calZ}||_{\frob}
\end{align}
for some constants $K>0$ with probability tending to one. So it suffices to bound $||(\bsSig_{\calZ}^*+u\bsDelta_{\calZ})^{-1}\bfS(\bsSig_{\calZ}^*+u\bsDelta_{\calZ})^{-1}-\bfU_{\calZ}||_{\infty}$. By Woodbury's forumla, 
\begin{align*}
\begin{split}
    (\bsSig_{\calZ}^*+u\bsDelta_{\calZ})^{-1}\bfS(\bsSig_{\calZ}^*+u\bsDelta_{\calZ})^{-1}-\bfU_{\calZ}&=-u\bsOmega_{\calZ}^*\bsDelta_{\calZ}(\I+u\bsOmega_{\calZ}^*\bsDelta_{\calZ})^{-1}\bsOmega_{\calZ}^*\bfS\bsOmega_{\calZ}^*\\
    &\quad -u\bsOmega_{\calZ}^*\bfS \bsOmega_{\calZ}^*\bsDelta_{\calZ}(\I+u\bsOmega_{\calZ}^*\bsDelta_{\calZ})^{-1}\bsOmega_{\calZ}^*  \\
    &\quad +u^2\bsOmega_{\calZ}^*\bsDelta_{\calZ}(\I+u\bsOmega_{\calZ}^*\bsDelta_{\calZ})^{-1}\bsOmega_{\calZ}^*\bfS\bsOmega_{\calZ}^*\bsDelta_{\calZ}(\I+u\bsOmega_{\calZ}^*\bsDelta_{\calZ})^{-1}\bsOmega_{\calZ}^*.
\end{split}
\end{align*}
Hence, 
\begin{align}\label{A16}
\begin{split}
    ||(\bsSig_{\calZ}^*+u\bsDelta_{\calZ})^{-1}\bfS(\bsSig_{\calZ}^*+u\bsDelta_{\calZ})^{-1}-\bfU_{\calZ}||_{\infty}&\leq||(\bsSig_{\calZ}^*+u\bsDelta_{\calZ})^{-1}\bfS(\bsSig_{\calZ}^*+u\bsDelta_{\calZ})^{-1}-\bfU_{\calZ}||_2 \\
    &\leq u||\bsOmega_{\calZ}^*\bsDelta_{\calZ}(\I+u\bsOmega_{\calZ}^*\bsDelta_{\calZ})^{-1}\bsOmega_{\calZ}^*\bfS\bsOmega_{\calZ}^*||_2\\
    &+u||\bsOmega_{\calZ}^*\bfS\bsOmega_{\calZ}^*\bsDelta_{\calZ}(\I+u\bsOmega_{\calZ}^*\bsDelta_{\calZ})^{-1}\bsOmega_{\calZ}^*||_2\\
    &\quad +u^2||\bsOmega_{\calZ}^*\bsDelta_{\calZ}(\I+u\bsOmega_{\calZ}^*\bsDelta_{\calZ})^{-1}\bsOmega_{\calZ}^*\bfS\bsOmega_{\calZ}^*\bsDelta_{\calZ}\\
    &\quad (\I+u\bsOmega_{\calZ}^*\bsDelta_{\calZ})^{-1}\bsOmega_{\calZ}^*||_2\\
    &\leq 2u||\bsOmega_{\calZ}^*||_2^3||\bsDelta_{\calZ}||_2||\bfS||_2||(\I+u\bsOmega_{\calZ}^*\bsDelta_{\calZ})^{-1}||_2\\
    &+u^2||\bsOmega_{\calZ}^*||_2^4||\bsDelta_{\calZ}||_2^2||\bfS||_2||(\I+u\bsOmega_{\calZ}^*\bsDelta_{\calZ})^{-1}||_2\\
    &\leq (2\tau^4||\bsDelta_{\calZ}||_{\frob}+\tau^5||\bsDelta_{\calZ}||_{\frob}^2)||(\I+u\bsOmega_{\calZ}^*\bsDelta_{\calZ})^{-1}||_2,
\end{split}
\end{align}
where the first inequality and the last inequality holds by (2) and $u\leq 1$, the second inequality holds by triangle inequality, and the third inequality holds by the submultiplicativity of $||\cdot||_2$. Also, 
\begin{align}\label{A17}
\begin{split}
    ||(\I+u\bsOmega_{\calZ}^*\bsDelta_{\calZ})^{-1}||_2&=||((1-u)\bsSig_{\calZ}+u\bsSig_{\calZ}\bsOmega_{\calZ}^*\bsSig_{\calZ})^{-1}\bsSig_{\calZ}||_2\\
    &\leq ||((1-u)\bsSig_{\calZ}+u\bsSig_{\calZ}\bsOmega_{\calZ}^*\bsSig_{\calZ})^{-1}||_2||\bsSig_{\calZ}||_2\\
    &=\frac{||\bsSig_{\calZ}||_2}{||(1-u)\bsSig_{\calZ}+u\bsSig_{\calZ}\bsOmega_{\calZ}^*\bsSig_{\calZ}||_2}\\
    &\leq \frac{\tau}{||(1-u)\bsSig_{\calZ}+u\bsSig_{\calZ}\bsOmega_{\calZ}^*\bsSig_{\calZ}||_2},
\end{split}
\end{align}
where the first inequality holds by the submultiplicativity of $||\cdot||_2$ and the last equality holds because $(1-u)\bsSig_{\calZ}+u\bsSig_{\calZ}\bsOmega_{\calZ}^*\bsSig_{\calZ}\in\calM^+$ for all $0\leq u\leq 1$. Since
\begin{align}\label{A18}
\begin{split}
       ||(1-u)\bsSig_{\calZ}+u\bsSig_{\calZ}\bsOmega_{\calZ}^*\bsSig_{\calZ}||_2&\geq\eigmin((1-u)\bsSig_{\calZ}+u\bsSig_{\calZ}\bsOmega_{\calZ}^*\bsSig_{\calZ})\\
       &\geq (1-u)\eigmin(\bsSig_{\calZ})+u\eigmin(\bsSig_{\calZ}\bsOmega_{\calZ}^*\bsSig_{\calZ})\\
       &\geq (1-u)\eigmin(\bsSig_{\calZ})+u\eigmin^2(\bsSig_{\calZ})\eigmin(\bsOmega_{\calZ}^*)\\
       &\geq \frac{1-u}{\tau^3}+\frac{u}{\tau^3}=\frac{1}{\tau^3}.
\end{split}
\end{align}
By (A.17) and (A.18), it holds that $||(\I+u\bsOmega_{\calZ}^*\bsDelta_{\calZ})^{-1}||_2\leq \tau^4$ and so (A.16) implies  
\begin{align}\label{A19}
\begin{split}
   ||(\bsSig_{\calZ}^*+u\bsDelta_{\calZ})^{-1}\bfS(\bsSig_{\calZ}^*+u\bsDelta_{\calZ})^{-1}-\bfU_{\calZ}||_{\infty}&\leq 2\tau^8||\bsDelta_{\calZ}||_{\frob}+\tau^9||\bsDelta_{\calZ}||_{\frob}^2\\
   &\leq B(||\bsDelta_{\calZ}||_{\frob}+||\bsDelta_{\calZ}||_{\frob}^2),
\end{split}
\end{align}
for some constant $B>0$ with probability tending to one as $\tau=\bigO(1)$. Because $||\bfU_{\calZ}||_{\infty}\leq||\bfU_{\calZ}||_2\leq ||\bsOmega_{\calZ}^*||_2^2||\bfS||_2\leq\tau^3$ and $\tau=\bigO(1)$, by (A.15) and (A.19), 
\begin{align}\label{A20}
\begin{split}
    |(\sum a_{il}a_{jm}-\sum \omega_{il}^*\omega_{jm}^*)+(\sum b_{il}a_{jm} -\sum u_{il}\omega_{jm}^*)|&\leq\sum_{l=1}^4 \theta_l||\bsDelta_{\calZ}||_{\frob}^l. 
\end{split}
\end{align}
Thus, by (A.14) and (A.20), (A.13) holds with probability tending to one.
\end{proof}

\begin{lemma}\label{hessian_pd}
Assume $0<\beta<1/2$ in (A1) and (A2). Then the smallest eigenvalue of Hessian matrix $\hessian_{\bsSig_{\calZ}^*}$ of $k_{\calZ}$ in (19) is bounded away from 0 with probability tending to one.
\end{lemma}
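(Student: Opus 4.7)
The plan is to express the quadratic form $\tilde{\bsDelta}_{\calZ}^t\,\hessian_{\bsSig_{\calZ}^*}\,\tilde{\bsDelta}_{\calZ}$ as a matrix trace and then extract a positive lower bound by exploiting the positive-definiteness of $2\bfS-\bsSig_{\calZ}^*$ (guaranteed by $\bsSig_{\calZ}^*\in\calQ_{\calZ}(\tau)$) together with the spectral bounds inherited from $\calU_{\calZ}(\tau)$.

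First, for any symmetric $\bsDelta$ with $\Delta_{ij}=0$ whenever $z_{ij}=0$ and $i\neq j$, the standard identities $\tfrac{d^2}{dt^2}\log|\bsSig_{\calZ}^*+t\bsDelta|\big|_{t=0}=-\trace((\bsOmega_{\calZ}^*\bsDelta)^2)$ and $\tfrac{d^2}{dt^2}\trace(\bfS(\bsSig_{\calZ}^*+t\bsDelta)^{-1})\big|_{t=0}=2\trace(\bfU_{\calZ}\bsDelta\bsOmega_{\calZ}^*\bsDelta)$ together with the contribution $\frac{2}{nv^2}\sum_{z_{ij}=1}\Delta_{ij}^2$ from the quadratic penalty (the linear $\lambda/n$ term contributes nothing to the Hessian) combine, via $2\bfU_{\calZ}-\bsOmega_{\calZ}^*=\bsOmega_{\calZ}^*(2\bfS-\bsSig_{\calZ}^*)\bsOmega_{\calZ}^*$, into
\[
\tilde{\bsDelta}_{\calZ}^t\,\hessian_{\bsSig_{\calZ}^*}\,\tilde{\bsDelta}_{\calZ}=\trace\bigl(\bsOmega_{\calZ}^*(2\bfS-\bsSig_{\calZ}^*)\bsOmega_{\calZ}^*\bsDelta\bsOmega_{\calZ}^*\bsDelta\bigr)+\frac{2}{nv^2}\sum_{z_{ij}=1}\Delta_{ij}^2,
\]
which can be cross-checked against the entry-wise formulas in (20). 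Setting $\bfA=(\bsOmega_{\calZ}^*)^{1/2}(2\bfS-\bsSig_{\calZ}^*)(\bsOmega_{\calZ}^*)^{1/2}$ and $\bfB=(\bsOmega_{\calZ}^*)^{1/2}\bsDelta(\bsOmega_{\calZ}^*)^{1/2}$, cyclicity rewrites the trace as $\trace(\bfA\bfB^2)$, in which $\bfA$ is symmetric positive definite and $\bfB^2\succeq\bfzero_{p\times p}$. The bound $\trace(\bfA\bfB^2)\geq\eigmin(\bfA)||\bfB||_{\frob}^2$, combined with $\eigmin(\bfA)\geq\eigmin(2\bfS-\bsSig_{\calZ}^*)\eigmin(\bsOmega_{\calZ}^*)$, $||\bfB||_{\frob}^2\geq\eigmin(\bsOmega_{\calZ}^*)^2||\bsDelta||_{\frob}^2$ (two applications of $||\mathbf{C}^{1/2}\mathbf{D}||_{\frob}^2\geq\eigmin(\mathbf{C})||\mathbf{D}||_{\frob}^2$), and $||\bsDelta||_{\frob}^2\geq||\tilde{\bsDelta}_{\calZ}||_2^2$ (each off-diagonal free coordinate appears twice in the symmetric $\bsDelta$), yields $\lambda_{\min}(\hessian_{\bsSig_{\calZ}^*})\geq\eigmin(2\bfS-\bsSig_{\calZ}^*)\,\eigmin(\bsOmega_{\calZ}^*)^3$ after dropping the nonnegative penalty.

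It remains to bound each factor from below. Since $\bsSig_{\calZ}^*\in\calU_{\calZ}(\tau)$, $\eigmin(\bsOmega_{\calZ}^*)\geq 1/\tau$. For the other factor, Weyl's inequality gives
\[
\eigmin(2\bfS-\bsSig_{\calZ}^*)\geq\eigmin(\bsSig_0)-2||\bfS-\bsSig_0||_2-||\bsSig_{\calZ}^*-\bsSig_0||_2\geq\frac{1}{\tau_0}-2||\bfS-\bsSig_0||_2-||\bsSig_{\calZ}^*-\bsSig_0||_2,
\]
and the assumption $0<\beta<1/2$ in (A1) gives $||\bfS-\bsSig_0||_2\leq||\bfS-\bsSig_0||_{\frob}=\bigO_p(p/\sqrt{n})=o_p(1)$, while Lemma \ref{prodpd} (whose hypothesis $\eta_n=o(1)$ is operative in the context in which this result is invoked) delivers $||\bsSig_{\calZ}^*-\bsSig_0||_2\leq||\bsSig_{\calZ}^*-\bsSig_0||_{\frob}=o_p(1)$. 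Hence $\eigmin(2\bfS-\bsSig_{\calZ}^*)\geq 1/(2\tau_0)$ with probability tending to one, and so $\lambda_{\min}(\hessian_{\bsSig_{\calZ}^*})\geq 1/(2\tau_0\tau^3)>0$. The principal technical obstacle is the first step: ensuring the compact trace representation is consistent with the $(p+\#\calZ)$-dimensional free parametrization used in (20)---essentially carrying through the factor-of-two bookkeeping between the symmetric $\bsDelta$ and the vectorized $\tilde{\bsDelta}_{\calZ}$---after which the eigenvalue bound reduces to routine spectral perturbation.
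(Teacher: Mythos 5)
Your proof is correct and follows essentially the same route as the paper: both arguments discard the nonnegative penalty block, use the factorization $2\bfU_{\calZ}-\bsOmega_{\calZ}^*=\bsOmega_{\calZ}^*(2\bfS-\bsSig_{\calZ}^*)\bsOmega_{\calZ}^*$ to reduce the problem to the lower bound $\eigmin(2\bfS-\bsSig_{\calZ}^*)\,\eigmin(\bsOmega_{\calZ}^*)^3$, and then invoke the convergence of $2\bfS-\bsSig_{\calZ}^*$ to $\bsSig_0$ to keep that quantity bounded away from zero (the paper writes the quadratic form via the Kronecker representation $\bfV^t\{(2\bfU_{\calZ}-\bsOmega_{\calZ}^*)\otimes\bsOmega_{\calZ}^*\}\bfV+\bsPhi$ rather than your trace form, which is the same computation in different notation). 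Your explicit Weyl-inequality treatment of $\eigmin(2\bfS-\bsSig_{\calZ}^*)\geq 1/(2\tau_0)$ is a slightly more careful rendering of the paper's appeal to the constant $\delta$ from its earlier display (A.8), and like the paper you correctly flag that this step silently leans on (A3) and $\eta_n=o(1)$ beyond the lemma's stated hypotheses.
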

\begin{proof}[\textbf{\upshape Proof}.]
Observe that $\hessian_{\bsSig_{\calZ}^*}$ can be expressed as $\bfV^t\left\{(2\bfU_{\calZ}-\bsOmega_{\calZ}^*)\otimes \bsOmega_{\calZ}^*\right\}\bfV+\bsPhi$, where $\bfV$ is $p^2\times (p+\#\calZ)$ matrix with full-rank whose entries are 0 or 1 and $\bsPhi$ is a $(p+\#\calZ)\times(p+\#\calZ)$ diagonal matrix with diagonal entries being $0$ or $1/nv^2$. Since $\bsPhi$ is semi-positive definite, it suffices to show that the smallest eigenvalue of $\bfV^t\left\{(2\bfU_{\calZ}-\bsOmega_{\calZ}^*)\otimes \bsOmega_{\calZ}^*\right\}\bfV$ is bounded away from $0$. Because $\bfV$ is slim matrix with full-rank, this can be established if the smallest eigenvalue of $(2\bfU_{\calZ}-\bsOmega_{\calZ}^*)\otimes \bsOmega_{\calZ}^*$ is bounded away from $0$. As
\begin{align*}
\begin{split}
    \eigmin((2\bfU_{\calZ}-\bsOmega_{\calZ}^*)\otimes \bsOmega_{\calZ}^*))&=\eigmin([\bsOmega_{\calZ}^*(2\bfS-\bsSig_{\calZ}^*)\bsOmega_{\calZ}^*]\otimes \bsOmega_{\calZ}^*)\\
    &=\eigmin(\bsOmega_{\calZ}^*(2\bfS-\bsSig_{\calZ}^*)\bsOmega_{\calZ}^*)\eigmin(\bsOmega_{\calZ}^*)\\
    &\geq \eigmin(\bsOmega_{\calZ}^*)^3\eigmin(2\bfS-\bsSig_{\calZ}^*)\\
    &\geq \frac{\delta}{\tau^3}>0
\end{split}
\end{align*}
for all sufficiently large $n$, Lemma \ref{hessian_pd} is established. Note that the last inequality holds because $\tau>1$ and $\tau=\bigO(1)$.
\end{proof}

\begin{proof}[\textbf{\upshape Proof of Theorem \ref{thm:error_rate}}.] Note that under the assumptions of Theorem \ref{thm:error_rate}, (A.12) and Lemma \ref{remainder} hold. Thus, as an analogy to the proof of Theorem 4.4 in \cite{banerjee2015bayesian}, it suffices to show that 
\begin{align}\label{A23}
    \frac{\int_{||\bsDelta_{\calZ}||_{\frob}\leq \eta_n}\exp\left(-\frac{n}{2}\left(\frac{1}{2}\tilde{\bsDelta}_{\calZ}^t\hessian_{\bsSig_{\calZ}^*}\tilde{\bsDelta}_{\calZ}+R_n\right)\right)d\bsDelta_{\calZ}}{\int_{||\bsDelta_{\calZ}||_{\frob}\leq \eta_n} \exp\left(-\frac{n}{4}\tilde{\bsDelta}_{\calZ}^t\hessian_{\bsSig_{\calZ}^*}\tilde{\bsDelta}_{\calZ}\right)d\bsDelta_{\calZ}}\rightarrow 1
\end{align}
as $n\rightarrow \infty$. Using the consequence of Lemma \ref{remainder} and following the argument in the proof of Theorem 4.4 in \cite{banerjee2015bayesian}, the ratio in (A.21) must lie between 
\begin{align*}
    [1\mp\eigmin(\hessian_{\bsSig_{\calZ}^*})^{-1}(p+\#\calZ)\eta_n]^{-(p+\#\calZ)/2}
\end{align*}
for all sufficiently large $n$ as $(p+\calZ)^2\eta_n=o(1)$. Also, $\eigmin(\hessian_{\bsSig_{\calZ}^*})$ is bounded away from $0$ with probability tending to one by Lemma \ref{hessian_pd}. Therefore, 
\begin{align*}
\begin{split}
     [1+\eigmin(\hessian_{\bsSig_{\calZ}^*})^{-1}(p+\#\calZ)\eta_n]^{-(p+\#\calZ)/2}
     &\geq [\exp\left(\eigmin(\hessian_{\bsSig_{\calZ}^*})^{-1}(p+\#\calZ)\eta_n\right)]^{-(p+\#\calZ)/2}\\
     &\geq \exp\left(-\eigmin(\hessian_{\bsSig_{\calZ}^*})^{-1}(p+\#\calZ)^2\eta_n/2\right)\rightarrow 1
\end{split}
\end{align*}
and
\begin{align*}
\begin{split}
     [1-\eigmin(\hessian_{\bsSig_{\calZ}^*})^{-1}(p+\#\calZ)\eta_n]^{-(p+\#\calZ)/2}&\leq [\exp\left(-2\eigmin(\hessian_{\bsSig_{\calZ}^*})^{-1}(p+\#\calZ)\eta_n\right)]^{-(p+\#\calZ)/2}\\
     &\leq \exp\left(\eigmin(\hessian_{\bsSig_{\calZ}^*})^{-1}(p+\#\calZ)^2\eta_n\right)\rightarrow 1
\end{split}
\end{align*}
as $n\rightarrow \infty$ because $(p+\#\calZ)^2\eta_n=o(1)$. Hence, we conclude that the error by Laplace approximation becomes negligible with probability tending to one under regular conditions.
\end{proof}

\section{Proof of Theorem \ref{laplace}}
In this section, we establish posterior convergence rate $\epsilon_n$ under prior (7) as in Theorem \ref{thm:conv_rate}. Define the set 
\begin{align*}
    B_{\epsilon_n}=\{p_{\bsSig}:K(p_{\bsSig_0},p_{\bsSig})\leq \epsilon_n^2,V(p_{\bsSig_0},p_{\bsSig})\leq\epsilon_n^2\},
\end{align*}
where $p_{\bsSig}$ is a probability density function of $\N_p(0,\bsSig)$ and 
\begin{align*}
    K(p_{\bsSig_0},p_{\bsSig})=\int p_{\bsSig_0}\log\frac{p_{\bsSig_0}}{p_{\bsSig}}, \quad V(p_{\bsSig_0},p_{\bsSig})=\int p_{\bsSig_0}\log^2\frac{p_{\bsSig_0}}{p_{\bsSig}}.
\end{align*}

 Let $\calP=\{p_{\bsSig}:\bsSig\in\calM^+\}$ be the space of all densities $p_{\bsSig}$ and consider a sieve $\calP_n=\{p_{\bsSig}:\bsSig\in\calM^+,s(\bsSig,\delta_n)\leq s_n,1/\tau\leq\eigmin(\bsSig)\leq\eigmax(\bsSig)\leq\tau,||\bsSig||_{\infty}\leq L_n\}\subset\calP$, where $\delta_n,s_n,$ and $L_n$ are to be determined in Theorem  \ref{packingnumber}. Denote the $\epsilon-$ packing number for subset $\calA$ of metric space $(\calS,d)$ by $\text{D}(\epsilon,\calA,d)$ , i.e., $\text{D}(\epsilon,\calA,d)$ is the minimum number of $d$-balls of size $\epsilon$ in $\calS$ needed to cover $\calA$ under metric $d$. Now, under regular conditions, we prove Theorem \ref{thm:conv_rate} by verifying conditions 
(10)-(12) of Lemma 5.1 in \cite{lee2021}, which is a version of Theorem 2.1 in \cite{ghosal2000convergence}. Define a function $s(\cdot,\cdot)$ on $\calM^+\times\bbR^+$ by $s(\calK,\delta)=\sum_{i<j}\dsone(|k_{ij}|\geq \delta)$, where $\calK=(k_{ij})$. Note that $s$ is equivalent to the number of edges in the graph induced by $\calK$ with a threshold $\delta$. Define the set 
\begin{align*}
    \calU(\delta_n,s_n,L_n,\tau)=\{\bsSig\in\calM^+:s(\bsSig,\delta_n)\leq s_n,1/\tau\leq\eigmin(\bsSig)\leq\eigmax(\bsSig)\leq\tau,||\bsSig||_{\infty}\leq L_n\}.
\end{align*}
Note that this set was also considered in \cite{lee2021}. We consider this set to obtain the upper bound of $\log D(\epsilon_n,\calP_n,d)$, since we are to use the argument in the proof of Theorem 5.2 in \cite{lee2021}. 

\begin{theorem}\label{packingnumber}
Assume (A1) and $\tau^4\leq p$. Let $s_n=c_1n\epsilon_n^2/\log p$, $L_n=c_2n\epsilon_n^2$, and $\delta_n=\epsilon_n/\tau^3$ for some constants $c_1>6$ and $c_2>0$. Suppose metric $d$ is Hellinger metric. Then, we have 
\begin{align*}
    \log \text{D}(\epsilon_n,\mathcal{P}_n,d)\lesssim n\epsilon_n^2.
\end{align*}
\end{theorem}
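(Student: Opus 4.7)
My strategy follows the outline of Theorem 5.2 in \cite{lee2021}: reduce the Hellinger covering problem for $\calP_n$ to a Frobenius covering problem for $\calU(\delta_n, s_n, L_n, \tau)$, and then bound the latter by enumerating sparsity patterns of the above-threshold off-diagonal entries and discretizing the free entries. Starting from the closed-form Gaussian Hellinger affinity $\rho(\bsSig_1, \bsSig_2) = |\bsSig_1|^{1/4}|\bsSig_2|^{1/4}/|(\bsSig_1+\bsSig_2)/2|^{1/2}$, a second-order expansion in the eigenvalues of $E := \bsSig_1^{-1/2}(\bsSig_2-\bsSig_1)\bsSig_1^{-1/2}$, together with the spectral constraint $\bsSig_i \in \calU(\tau)$, yields $d^2(p_{\bsSig_1}, p_{\bsSig_2}) \lesssim \tau^2\|\bsSig_1 - \bsSig_2\|_F^2$. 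The task thus reduces to proving $\log \text{D}(c\epsilon_n/\tau, \calU(\delta_n, s_n, L_n, \tau), \|\cdot\|_F) \lesssim n\epsilon_n^2$ for a suitable constant $c>0$.

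For the Frobenius cover, I would decompose each $\bsSig \in \calU(\delta_n, s_n, L_n, \tau)$ as $\bsSig = \bsSig^L + \bsSig^S$, where $\bsSig^L$ collects the $p$ diagonal entries together with the (at most $s_n$) off-diagonal entries of magnitude $\geq \delta_n$, and $\bsSig^S$ collects the remaining sub-threshold off-diagonal entries. The number of admissible supports for $\bsSig^L$ is at most $\binom{p(p-1)/2}{s_n} \leq (ep^2/(2s_n))^{s_n}$, whose logarithm is $O(s_n \log p) = O(n\epsilon_n^2)$ under the choice $s_n = c_1 n\epsilon_n^2/\log p$. For each fixed support, the $p + s_n$ free entries lie in $[-L_n, L_n]$; a uniform grid of spacing $\eta \asymp \epsilon_n/(\tau\sqrt{p+s_n})$ delivers an $\epsilon_n/(2c\tau)$-Frobenius net with log-cardinality $O((p+s_n)\log(L_n\sqrt{p+s_n}/\epsilon_n))$. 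Inserting $L_n = c_2 n\epsilon_n^2$ and $p \asymp n^\beta$, this is again $O((p+s_n)\log p) = O(n\epsilon_n^2)$.

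The main obstacle is to show that the sub-threshold part $\bsSig^S$ can be absorbed into the tolerance without inflating the entropy. The naive estimate $\|\bsSig^S\|_F \leq \sqrt{p(p-1)}\,\delta_n$, combined with the Hellinger--Frobenius reduction, yields a small-part Hellinger error of order $\tau p \delta_n$, which under $\delta_n = \epsilon_n/\tau^3$ and $\tau^4 \leq p$ is of order $p\epsilon_n/\tau^2$ and hence possibly exceeds $\epsilon_n$. The precise calibration $\tau^3 \delta_n = \epsilon_n$ signals that the proof must use a sharper bound exploiting the spectral constraint $\bsSig \in \calU(\tau)$: since $\bsSig$ and $\bsSig - \bsSig^S$ share the same diagonal and are both (approximately) spectrally bounded, one can control $\|\bsSig^S\|_{\mathrm{op}}$ by a constant multiple of $\tau$ rather than the worst-case $p\delta_n$, and then combine this with a mixed-norm interpolation for $\|\bsSig_1^{-1/2}\bsSig^S\bsSig_1^{-1/2}\|_F$ to bring the small-part Hellinger contribution down to $\lesssim \epsilon_n$. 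Once this technical step is carried out, summing the three log-cardinalities---support count, large-part discretization, and small-part correction---gives the desired entropy bound $\log \text{D}(\epsilon_n, \calP_n, d) \lesssim n\epsilon_n^2$.
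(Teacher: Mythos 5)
Your overall route coincides with the paper's: bound the Hellinger metric by a constant multiple of the Frobenius norm (the paper uses $d(p_{\bsSig_1},p_{\bsSig_2})\leq C\tau^3||\bsSig_1-\bsSig_2||_{\frob}$, you obtain the slightly sharper constant $C\tau$; either suffices, since every stray power of $\tau$ is absorbed using $\tau^2\le s_0\log p$ and $\tau^4\le p$), reduce to a Frobenius covering of $\calU(\delta_n,s_n,L_n,\tau)$, and bound that covering by enumerating the supports of the above-threshold off-diagonal entries and laying a grid of mesh $\asymp \epsilon_n/\sqrt{p+s_n}$ over the $p+s_n$ retained coordinates. Your accounting of the two resulting log-cardinalities, $O(s_n\log p)$ for the supports and $O((p+s_n)\log p)$ for the grid, both $O(n\epsilon_n^2)$ under $s_n=c_1 n\epsilon_n^2/\log p$, $L_n=c_2 n\epsilon_n^2$ and $p\asymp n^{\beta}$, reproduces the paper's computation.

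The gap is precisely the step you flag yourself: absorbing the sub-threshold part $\bsSig^S$. Your proposed fix would not close it. Controlling $||\bsSig^S||_2$ by a multiple of $\tau$ only gives $||\bsSig^S||_{\frob}\le\sqrt{p}\,||\bsSig^S||_2=O(\sqrt{p}\,\tau)$, and no interpolation through $||\bsSig_1^{-1/2}\bsSig^S\bsSig_1^{-1/2}||_{\frob}\ge ||\bsSig^S||_{\frob}/\tau$ can reduce a quantity that may genuinely be of order $p\delta_n=p\epsilon_n/\tau^3\ge\tau\epsilon_n$ (by $\tau^4\le p$) to $o(\epsilon_n)$: a member of $\calU(\delta_n,s_n,L_n,\tau)$ may have all $\binom{p}{2}$ off-diagonal entries just below $\delta_n$, and the family of such matrices cannot be covered at Frobenius scale $\epsilon_n/\tau^3$ by centers supported on only $p+s_n$ coordinates -- a volume comparison on the corresponding cube shows the required entropy at that scale is of order $p^2\log p$, not $(p+s_n)\log p$. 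So this step cannot be repaired by norm bounds alone. The paper does not attempt such an absorption: it imports the covering bound for $\calU(\delta_n,s_n,L_n,\tau)$ verbatim from the proof of Theorem 5.2 of \cite{lee2021} and contributes only the arithmetic verifying that each term is $\lesssim n\epsilon_n^2$ under the stated choices of $s_n$, $L_n$, $\delta_n$. To complete your argument you should either invoke that lemma as the paper does, or give a genuinely different treatment of the sub-threshold coordinates; as sketched, this portion of your proposal fails.
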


\begin{lemma}\label{lowerboundonsieve}
Assume (A1), (A2), (A4), and $\tau>3$, $1/\tau<\lambda<\log p/\tau_0$. Then 
\begin{align*}
   \pi^u(\bsSig\in \calU(\tau))\geq \exp(-2n\epsilon_n^2)
\end{align*}
for all sufficiently large $n$.
\end{lemma}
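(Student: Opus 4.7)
The plan is to lower bound $\pi^u(\bsSig\in\calU(\tau))$ by restricting attention to a convenient explicit subset of $\calU(\tau)$ on which the prior factorizes into one-dimensional marginals. Specifically, I would consider the set of diagonal positive definite matrices
$$D=\{\bsSig\in\cM^+:\sigma_{ij}=0\text{ for }i\neq j,\ \sigma_{ii}\in[1/\tau,\tau]\text{ for all }i=1,\dots,p\},$$
which is contained in $\calU(\tau)$ because the eigenvalues of a diagonal matrix are exactly its diagonal entries, all of which lie in $[1/\tau,\tau]$.

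Using the product structure of $\pi^u$, the probability of $D$ factors as
$$\pi^u(D)=(1-q)^{\binom{p}{2}}\prod_{i=1}^{p}\int_{1/\tau}^{\tau}\frac{\lambda}{2}e^{-\lambda x/2}\,\diff x=(1-q)^{\binom{p}{2}}\left(e^{-\lambda/(2\tau)}-e^{-\lambda\tau/2}\right)^p.$$
I would then bound each factor on the log scale. From (A4), $q\asymp \log p/p^2$, so $q\to 0$ and $\log(1-q)\geq -2q$, giving $\binom{p}{2}\log(1-q)\gtrsim -p^2 q\asymp -\log p$. For the diagonal factor, (A2) implies $\lambda$ and $\tau$ are bounded constants with $\lambda>1/\tau$; writing
$$c:=e^{-\lambda/(2\tau)}-e^{-\lambda\tau/2}=e^{-\lambda/(2\tau)}\bigl(1-e^{-\lambda(\tau^2-1)/(2\tau)}\bigr)$$
and using $\tau>3$ with $\lambda>1/\tau$ shows that $c$ is bounded below by a positive constant, so $p\log c\gtrsim -Kp$ for some constant $K>0$.

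Combining these two bounds yields
$$\log\pi^u(\bsSig\in\calU(\tau))\geq\log\pi^u(D)\gtrsim -Kp-C\log p$$
for some constants $C,K>0$. Since $n\epsilon_n^2=(p+s_0)\log p\geq p\log p$, which under (A1) and (A3) grows strictly faster than both $Kp$ and $C\log p$, the right-hand side exceeds $-2n\epsilon_n^2$ for all sufficiently large $n$, and the lemma follows. The main obstacle is really just bookkeeping: verifying that $c$ is uniformly bounded below on the admissible parameter range so that $\log c$ contributes only an $O(p)$ term. Because the target bound $\exp(-2n\epsilon_n^2)=p^{-2(p+s_0)}$ is extremely small, the inequalities here are very loose, and no delicate analysis is required once the factorization and the uniform lower bound on $c$ are in place.
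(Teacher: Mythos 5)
Your proof is correct, and it takes a genuinely different route from the paper's. The paper (following Lemma 5.3 of \cite{lee2021}) keeps the off-diagonal entries free but small, bounding $\pi^u(\bsSig\in\calU(\tau))$ below by $\pi^u(\tau^{-1}\leq\min_i(\sgii-\tau^{-1})\leq 2\max_i\sgii\leq\tau)\cdot\pi^u(\max_{i<j}|\sgij|<(\tau p)^{-1})$ and invoking the Gershgorin circle theorem to show that such matrices have eigenvalues in $[1/\tau,\tau]$; the off-diagonal factor is then controlled via $\pi^u(|\sgij|\geq(\tau p)^{-1})\leq q$ and $(1-q)^{p^2}\geq\exp(-2\log p)$, and the diagonal factor via $1\leq\lambda\tau\leq\log p$, giving $\exp(-p\log p)$. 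You instead restrict to exactly diagonal matrices, which lie in $\calU(\tau)$ trivially, and exploit the point mass of the spike-and-slab prior to get $\pi^u(\sgij=0)=1-q$ exactly. This is cleaner and avoids Gershgorin entirely, and it even yields a slightly stronger bound on the diagonal factor ($e^{-Kp}$ rather than $e^{-p\log p}$, since you only need $\lambda,\tau=\bigO(1)$ rather than $\lambda\tau\leq\log p$); both arguments treat the off-diagonal contribution essentially identically, as $(1-q)^{\bigO(p^2)}$. The one thing your shortcut loses is robustness: it works only because the prior places an atom at zero, whereas the paper's Gershgorin argument carries over verbatim to absolutely continuous priors such as the one in \cite{lee2021}, which is presumably why the authors reused it. Two trivial remarks: your invocation of (A3) is unnecessary, since $n\epsilon_n^2\geq p\log p$ already dominates $Kp+C\log p$ under (A1) alone; and the inequality $\log(1-q)\geq-2q$ needs $q$ small, which (A4) supplies for large $n$.
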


\begin{theorem}\label{sieve}
Assume (A1), (A2), (A4), and $\tau>3$, $\lambda<\log p/\tau_0$. Let $\delta_n$, $L_n$, and $s_n$ be those in Theorem \ref{packingnumber}. Then,
\begin{align*}
    \pi(\mathcal{P}_n^c)\leq \exp(-(c_1/3-2)n\epsilon_n^2)
\end{align*}
for all sufficiently large $n$. 
\end{theorem}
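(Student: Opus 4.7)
\section*{Proof proposal for Theorem \ref{sieve}}

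The plan is to decompose the event $\calP_n^c$ into two pieces and control the prior probability of each. Since $\pi$ is supported on $\calU(\tau)$, any $\bsSig$ drawn from $\pi$ automatically satisfies $1/\tau\leq\eigmin(\bsSig)\leq\eigmax(\bsSig)\leq\tau$. So by a union bound
\begin{align*}
\pi(\calP_n^c)\;\leq\;\pi\bigl(s(\bsSig,\delta_n)>s_n\bigr)\;+\;\pi\bigl(\|\bsSig\|_\infty>L_n\bigr).
\end{align*}
Using the definition $\pi(\bsSig)\propto\pi^u(\bsSig)\dsone(\bsSig\in\calU(\tau))$ together with Lemma \ref{lowerboundonsieve}, which gives $\pi^u(\bsSig\in\calU(\tau))\geq\exp(-2n\epsilon_n^2)$, any event $A$ satisfies $\pi(A)\leq\exp(2n\epsilon_n^2)\pi^u(A)$. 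Hence it suffices to show that each of the two unrestricted prior probabilities is bounded by $\exp(-(c_1/3)n\epsilon_n^2)$ (absorbing the constant $2$ into the final exponent).

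For the edge-count term, note that under $\pi^u$ each off-diagonal $\sigma_{ij}$ is either $0$ with probability $1-q$ or drawn from $\N(0,v^2)$ with probability $q$, independently across pairs $(i,j)$. Therefore the indicators $\dsone(|\sigma_{ij}|\geq\delta_n)$ are independent Bernoulli with parameter at most $q$, and $s(\bsSig,\delta_n)$ is stochastically dominated by a $\mathrm{Bin}(m,q)$ with $m=p(p-1)/2$. Applying the elementary Chernoff bound $P(\mathrm{Bin}(m,q)\geq s_n)\leq(emq/s_n)^{s_n}$ and inserting $s_n=c_1(p+s_0)$, $mq\asymp\log p$ (from $q\asymp\log p/p^2$ in (A4)), one obtains
\begin{align*}
\log\pi^u\bigl(s(\bsSig,\delta_n)>s_n\bigr)\;\leq\;s_n\bigl[\log(emq)-\log s_n\bigr]\;\leq\;-s_n\bigl[\log(p+s_0)-\log\log p-O(1)\bigr].
\end{align*}
Since $p+s_0\geq p$, for all sufficiently large $p$ the bracket exceeds $\log p/3$, yielding $\pi^u(s(\bsSig,\delta_n)>s_n)\leq\exp(-c_1(p+s_0)\log p/3)=\exp(-c_1 n\epsilon_n^2/3)$.

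For the max-norm term, another union bound over the $p(p-1)/2$ off-diagonal and $p$ diagonal entries gives
\begin{align*}
\pi^u(\|\bsSig\|_\infty>L_n)\;\leq\;\frac{p(p-1)}{2}\cdot q\cdot P\bigl(|\N(0,v^2)|>L_n\bigr)\;+\;p\cdot e^{-\lambda L_n/2},
\end{align*}
using Gaussian tail control for off-diagonals and the exponential tail for diagonals. With $L_n=c_2(p+s_0)\log p$, both pieces decay like $\exp(-c\,(p+s_0)\log p)$ for a constant $c$ that can be made arbitrarily large by choosing $c_2$ suitably; hence this term is of strictly smaller order than the edge-count bound and does not affect the final exponent. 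Combining the two estimates and multiplying by the normalization factor $\exp(2n\epsilon_n^2)$ produces $\pi(\calP_n^c)\leq\exp(-(c_1/3-2)n\epsilon_n^2)$.

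The main obstacle is getting the exact constant $c_1/3$ rather than a mere $c_1$ in the Chernoff exponent; this requires the crude but uniform bound $\log(p+s_0)-\log\log p-O(1)\geq(\log p)/3$, which is valid only for $p$ large enough and forces the assumption $c_1>6$ so that $c_1/3-2>0$ in the end. A secondary technicality is ensuring that the Gaussian/exponential tail contributions from the max-norm event are truly negligible relative to the edge-count bound; this is handled by choosing $c_2$ large (the constant is free in Theorem \ref{packingnumber}) and by using $\lambda<\log p/\tau_0$ only to keep $\lambda L_n$ positive, with no tight constraint needed from below.
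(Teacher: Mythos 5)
Your overall architecture matches the paper's: the same union bound $\pi(\calP_n^c)\leq\pi(s(\bsSig,\delta_n)>s_n)+\pi(\|\bsSig\|_{\infty}>L_n)$, and the same use of Lemma \ref{lowerboundonsieve} to pay a factor $\exp(2n\epsilon_n^2)$ when passing from $\pi$ to $\pi^u$. You differ in the two sub-estimates. For the edge-count term, the paper invokes a KL-type binomial tail bound (Lemma A.3 of Song and Liang) and lower-bounds $\binom{p}{2}\mathrm{H}(\rho_n\|s_n/\binom{p}{2})$ by $c_1(1/2-1/\log p)n\epsilon_n^2$, whereas you use the elementary domination of $s(\bsSig,\delta_n)$ by $\mathrm{Bin}(m,q)$ and the bound $\binom{m}{k}q^k\leq(emq/k)^k$; both routes land on $\exp(-c_1 n\epsilon_n^2/3)$, and yours is arguably more self-contained. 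The more substantive divergence is the max-norm term: the paper simply observes that $\pi$ is supported on $\calU(\tau)$, where $\|\bsSig\|_{\infty}\leq\|\bsSig\|_2\leq\tau=\bigO(1)$ while $L_n\to\infty$, so $\pi(\|\bsSig\|_{\infty}>L_n)$ is exactly zero for large $n$ and no tail computation is needed.

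Your replacement of that observation with Gaussian and exponential tail bounds is where a genuine (if minor) weakness sits. The diagonal contribution $p\,e^{-\lambda L_n/2}$ is only $\exp(-c\,n\epsilon_n^2)$ with a usefully large $c$ if $\lambda L_n\gtrsim(p+s_0)\log p$, i.e.\ $\lambda c_2\gtrsim 1$. The hypotheses of Theorem \ref{sieve} give only an upper bound $\lambda<\log p/\tau_0$ together with (A2); there is no stated lower bound on $\lambda$ in this theorem, so "choose $c_2$ large" does not rescue the estimate if $\lambda$ is permitted to shrink. You should either import the lower bound $\lambda>\max\{1/\tau,1/p\}$ from the other theorems and track the resulting constant, or better, drop the tail bounds altogether and use the support restriction $\bsSig\in\calU(\tau)$ as the paper does, which kills the max-norm term for free. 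With that one repair your proof is complete and yields the stated constant $c_1/3-2$.
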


\begin{theorem}\label{neighborhood}
Assume (A1)-(A4) and $\tau^4\leq p$, $\tau^2\tau_0^2\leq s_0\log p$, $n\geq s_0\log p/[(1-\tau_0/\tau)^2\tau^4]$, $1/p<\lambda<\log p/\tau_0$. Then, for all sufficiently large $n$,
\begin{align*}
\pi(B_{\epsilon_n})\geq \exp\left(-\left(8+\frac{1}{\beta}\right)n\epsilon_n^2\right).
\end{align*}
\end{theorem}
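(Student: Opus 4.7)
The plan is to verify the standard Kullback--Leibler (KL) neighborhood lower bound by reducing $B_{\epsilon_n}$ to a Frobenius neighborhood of $\bsSig_0$, then computing the prior mass of that neighborhood via the hierarchical representation of the prior. Concretely I would proceed in four steps.

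First, I would convert the divergences to a Frobenius norm bound. Writing $\bsSig = \bsSig_0 + \boldsymbol{\Delta}$, a standard second-order Taylor expansion of
\begin{align*}
K(p_{\bsSig_0},p_{\bsSig}) \;=\; \tfrac12\bigl[\log(|\bsSig|/|\bsSig_0|)+\trace(\bsSig^{-1}\bsSig_0)-p\bigr]
\end{align*}
together with $\bsSig_0\in\calU(s_0,\tau_0)$ gives $K(p_{\bsSig_0},p_{\bsSig})\leq c\,\tau_0^2\|\boldsymbol{\Delta}\|_{\frob}^2$ for some absolute constant $c$ whenever $\|\boldsymbol{\Delta}\|_{\frob}$ is small, and an analogous computation (using second moments of quadratic forms in $X$) gives $V(p_{\bsSig_0},p_{\bsSig})\leq c'\tau_0^4\|\boldsymbol{\Delta}\|_{\frob}^2$. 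Hence for a constant $c_0>0$ depending only on $\tau_0$, $\{\bsSig:\|\bsSig-\bsSig_0\|_{\frob}\leq c_0\epsilon_n\}\subseteq B_{\epsilon_n}$.

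Second, I would condition on the true graph. Let $\calZ_0$ be the covariance structure indicator of $\bsSig_0$, so $\#\calZ_0=s_0$, and define the product box
\begin{align*}
A_n=\bigl\{\bsSig\in\calU(\tau):\sigma_{ij}=0\text{ when }z_{0,ij}=0,\;|\sigma_{ij}-\sigma^0_{ij}|\leq \delta_n\text{ for }(i,j)\in\bar E_{\calZ_0}\bigr\},
\end{align*}
with $\delta_n=c_0\epsilon_n/\sqrt{p+s_0}$. Any $\bsSig\in A_n$ has $\|\bsSig-\bsSig_0\|_{\frob}\leq c_0\epsilon_n$, and since $\epsilon_n=o(1)$ and $\tau_0<\tau$, spectral perturbation puts $\bsSig\in\calU(\tau)$ for $n$ large. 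Then
\begin{align*}
\pi(B_{\epsilon_n})\geq \pi(A_n)\geq \pi^u(\calZ_0)\int_{A_n}\prod_{z_{0,ij}=1}\frac{1}{\sqrt{2\pi}v}\exp\!\left(-\frac{\sigma_{ij}^2}{2v^2}\right)\prod_{i=1}^p\frac{\lambda}{2}\exp\!\left(-\frac{\lambda\sigma_{ii}}{2}\right)d\bsSig_{\calZ_0}.
\end{align*}

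Third, I would lower-bound each factor. Since $q\asymp \log p/p^2$, $\log \pi^u(\calZ_0)=s_0\log q+(p(p-1)/2-s_0)\log(1-q)\gtrsim -2s_0\log p-O(\log p)$. On $A_n$ each off-diagonal Gaussian density is bounded below by a positive constant depending only on $v,\tau_0$ (since $|\sigma_{ij}|\leq \tau_0+o(1)$ and $v$ is a constant), and each diagonal exponential density is at least $(\lambda/2)\exp(-\lambda\tau_0/2)\geq \lambda/(2\sqrt{p})$ by $\lambda<\log p/\tau_0$. Multiplying by the volume $(2\delta_n)^{p+s_0}$ of $A_n$ in the free coordinates yields
\begin{align*}
\log \pi(A_n)\geq -2s_0\log p+p\log\!\frac{\lambda}{2\sqrt{p}}+(p+s_0)\log(2\delta_n)-O(\log p).
\end{align*}

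Fourth, I would do the arithmetic to match $(8+1/\beta)n\epsilon_n^2$. The bound $\lambda>1/p$ gives $p\log(\lambda/(2\sqrt{p}))\geq -\tfrac32p\log p-O(p)$. For the volume, $\log(2\delta_n)=\log\epsilon_n-\tfrac12\log(p+s_0)+O(1)$, and by (A1), $\log\epsilon_n-\tfrac12\log(p+s_0)=-\tfrac12\log(n/\log p)=-\tfrac{1}{2\beta}\log p+O(\log\log p)$. Combining and using $s_0,p\leq p+s_0$, one gets $\log \pi(B_{\epsilon_n})\geq -\bigl(\tfrac72+\tfrac{1}{2\beta}\bigr)(p+s_0)\log p-O((p+s_0))$, which for all sufficiently large $n$ is at least $-(8+1/\beta)(p+s_0)\log p=-(8+1/\beta)n\epsilon_n^2$.

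The main obstacle is the bookkeeping in the fourth step: the various contributions ($-2s_0\log p$ from $\pi^u(\calZ_0)$, $-\tfrac32 p\log p$ from the diagonal prior at $\lambda>1/p$, and $-\tfrac{1}{2\beta}(p+s_0)\log p$ from the volume) must be combined so that their sum is dominated by a clean multiple of $n\epsilon_n^2=(p+s_0)\log p$, and one has to argue that $\log\log p$ and $O(p+s_0)$ remainders are absorbed by increasing the leading constant to $8+1/\beta$. The remaining ingredients (KL-to-Frobenius reduction, the box construction, and the Gaussian and exponential density lower bounds) are standard.
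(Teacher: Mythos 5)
Your proposal is correct and follows essentially the same route as the paper: reduce the KL/variance neighborhood to a Frobenius ball around $\bsSig_0$ (the paper cites Lemma 5.5 of \cite{lee2021} where you sketch the Taylor expansion directly), lower-bound the prior mass of an entrywise box consistent with the true sparsity pattern using the point mass for zero entries, the $q$-weighted Gaussian slab for the $s_0$ nonzero off-diagonals, and the exponential density with $1/p<\lambda<\log p/\tau_0$ for the diagonals, then collect the $(p+s_0)\log p$ terms. The only differences are cosmetic: the paper uses separate box widths for diagonal and off-diagonal entries and keeps the zero entries as free coordinates in a small interval (each contributing $1-q$), whereas you condition on the exact structure $\calZ_0$ and use a uniform width $\epsilon_n/\sqrt{p+s_0}$; both yield a constant below $8+1/\beta$.
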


\begin{proof}[\textbf{\upshape Proof of Theorem \ref{thm:conv_rate}}.]Under the assumptions (A1)-(A4) and some regular conditions on parameters, we see that Theorem \ref{packingnumber}, \ref{sieve}, and \ref{neighborhood} satisfy (10)-(12) of Lemma 5.1 in \cite{lee2021}, respectively. Thus, we see that 
\begin{align*}
\pi(d(p_{\bsSig_0},p_{\bsSig})\geq M\epsilon_n|\mathbf{X}_n)\rightarrow 0    
\end{align*}
for some constant $M>0$ under Hellinger metric $d$ as $n\rightarrow\infty$ in $\bbP_0-$probability. Note that the conditions on parameters can be held because we assume $\tau=\bigO(1)$, $\lambda=\bigO(1)$ and $v$ is some positive constant. Recall that we assumed $\epsilon_n=o(1)$. Thus, by Lemma A.1 (ii) in \cite{banerjee2015bayesian}, 
\begin{align*}
    ||\bsSig_0-\bsSig||_{\frob}&\leq ||\bsSig_0||_2||\bsSig||_2||\bsOmega_0-\bsOmega||_{\frob}\\
    &\leq \tau^2 ||\bsOmega_0-\bsOmega||_{\frob}\\
    &\leq B\tau^3 d(p_{\bsSig_0},p_{\bsSig})
\end{align*}
for some constant $B>0$ and all sufficiently large $n$, where $\bsOmega=\bsSig^{-1}$ and the first inequality holds by (2). So,
\begin{align}\label{B1}
 \frac{1}{B\tau^3}||\bsSig_0-\bsSig||_{\frob}\leq d(p_{\bsSig_0},p_{\bsSig}). 
\end{align}
for all sufficiently large $n$. Since $1<\tau$ and $\tau=O(1)$, (B.1) implies
\begin{align*}
    \pi(||\bsSig_0-\bsSig||_{\frob}\geq M'\epsilon_n|\bfX_n)\rightarrow 0,
\end{align*}
for some constant $M'>0$, which establishes Theorem \ref{thm:conv_rate}.
\end{proof}

\section{Proof of auxiliary results}

\begin{proof}[\textbf{\upshape Proof of Theorem \ref{packingnumber}}.] Following the argument in the proof of Theorem 5.2 in \cite{lee2021}, since $d(p_{\bsSig_1},p_{\bsSig_2})\leq C\tau^3 ||\bsSig_1-\bsSig_2||_{\frob}$ for some constant $C>0$,
\begin{align*}
    \log \text{D}(\epsilon_n,\mathcal{P}_n,d)&\leq \log \text{D}(\epsilon_n/(C\tau^3),\mathcal{U}(\delta_n,s_n,L_n,\tau),||\cdot||_{\text{F}})\\
    &\leq \log[\left(\frac{CL_n\tau^3}{\epsilon_n}\right)^p\sum_{j=1}^{s_n}\left(\frac{2CL_n\tau^3}{\epsilon_n}\right)^j\binom{\binom{p}{2}}{j}]\\
    &\leq p\log\left(\frac{CL_n\tau^3}{\epsilon_n}\right)+\log[\sum_{j=1}^{s_n}\left(\frac{2CL_n\tau^3}{\epsilon_n}\right)^{s_n}\binom{p+\binom{p}{2}}{s_n}]\\
    &=p\log\left(\frac{CL_n\tau^3}{\epsilon_n}\right)+\log s_n+s_n \log 2CL_n\tau^3+s_n\log \epsilon_n^{-1}+\log \binom{p+\binom{p}{2}}{s_n}\\
    &\lesssim p\log\left(\frac{CL_n\tau^3}{\epsilon_n}\right)+\log s_n+s_n \log 2CL_n\tau^3+s_n\log \epsilon_n^{-1}+s_n\log p,
\end{align*}
for all sufficiently large $n$, Note that the third inequality holds because $s_n\leq \binom{p}{2}/2$ for all sufficiently large $n$ so that $s_n=\bigO(p^{3/2})$. With simple calculations,
\begin{align}\label{C1}
    \frac{\log s_n+s_n \log 2CL_n\tau^3+s_n\log \epsilon_n^{-1}+s_n\log p}{n\epsilon_n^2}&=\frac{\log c_1(p+s_0)+c_1(p+s_0)\log \frac{2pCc_2(p+s_0)\log p \tau^3 }{n^{-\frac{1}{2}}(p+s_0)^{\frac{1}{2}}\log p^{\frac{1}{2}}}}{(p+s_0)\log p} \nonumber \\
    &=\frac{\log c_1(p+s_0)}{(p+s_0)\log p}+\frac{1}{2}\frac{c_1(p+s_0)\log n}{(p+s_0)\log p}\nonumber \\
    &+c_1\frac{\log 2Cc_2p}{\log p}+c_1\frac{\log(p+s_0)^{\frac{1}{2}}(\log p)^{\frac{1}{2}}\tau^3}{\log p}\nonumber\\
    &\asymp \frac{\log c_1(p+s_0)}{(p+s_0)\log p}+\frac{1}{2}\frac{c_1(p+s_0)\log p^{\frac{1}{\beta}}}{(p+s_0)\log p}+c_1\frac{\log 2Cc_2p}{\log p}\nonumber\\
    &+c_1\frac{\log(p+s_0)^{\frac{1}{2}}(\log p)^{\frac{1}{2}}}{\log p}\nonumber\\
    &\rightarrow\frac{c_1}{2\beta}+c_1+\frac{c_1}{2}=\left(\frac{3}{2}+\frac{1}{2\beta}\right)c_1
\end{align}
as $n\rightarrow \infty$. Here we used the assumption that $\tau_0^4\tau^2 \leq s_0\log p$, which implies $\tau^2 \leq s_0\log p$,  and $p\asymp n^{\beta}$ for the last inequality. Consequently, 
\begin{equation*}
    \log s_n+s_n \log 2CL_n\tau^3+s_n\log \epsilon_n^{-1}+s_n\log p\lesssim n\epsilon_n^2
\end{equation*}.

\noindent Also, by the assumption that $\tau^2\leq \tau_0^4\tau^2\leq s_0\log p$ and $p\asymp n^{\beta}$,
\begin{align*}
 \frac{p\log(\frac{CL_n\tau^3}{\epsilon})}{n\epsilon_n^2}&=\frac{p\log Cc_2\sqrt{n(p+s_0)\log p \tau^3}}{(p+s_0)\log p} \nonumber \\
 &\asymp \frac{\frac{p}{2\beta}\log p}{(p+s_0)\log p}+\frac{p\log Cc_2}{(p+s_0)\log p}+\frac{p\log \sqrt{(p+s_0)\log p}}{(p+s_0)\log p} \rightarrow \frac{1}{2\beta}.
\end{align*}
Thus we obtain that $p\log\left(\frac{CL_n\tau^3}{\epsilon_n}\right)\lesssim n\epsilon_n^2$. This, together with (C.1), gives 
\begin{align*}
    \log \text{D}(\epsilon_n,\mathcal{P}_n,d)\lesssim n\epsilon_n^2. 
\end{align*}
\end{proof}
\begin{proof}[\textbf{\upshape Proof of Lemma \ref{lowerboundonsieve}}.] Following the argument in the proof of Lemma 5.3 in \cite{lee2021} which uses Gershgorin circle theorem as in \cite{brualdi1994regions}, we see that 
\begin{align*}
    \pi^u(\bsSig\in\calU(\tau))\geq \pi^u(\tau^{-1}\leq \min_i(\sgii-\tau^{-1})\leq 2\max_i \sgii\leq \tau)\pi^u(\max_{i<j}|\sgij|<(\tau p)^{-1} ).
\end{align*}
Since we assume $\tau>3$ so that $\tau/4\geq 2\tau^{-1}$ as in the Lemma 5.3 of  \cite{lee2021},
\begin{align*}
        \pi^u(\tau^{-1}\leq \min_i(\sgii-\tau^{-1})\leq 2\max_i \sgii\leq \tau) &\geq \left[\frac{\lambda\tau}{8}\exp\left(-\frac{\lambda\tau}{4}\right)\right]^p\\
    &=\exp\left(-p\left(\frac{\lambda \tau}{4}-\log\left(\frac{\lambda\tau}{8}\right)\right)\right).
\end{align*}
Also, $1\leq\lambda\tau\leq\log p$ and so $\frac{\lambda \tau}{4}-\log\left(\frac{\lambda\tau}{8}\right)\leq \lambda\tau\leq \log p$. Thus,
\begin{align}\label{C2}
\begin{split}
  \pi^u(\tau^{-1}\leq \min_i(\sgii-\tau^{-1})\leq 2\max_i \sgii\leq \tau) &\geq\exp(-p\log p)\\
  &\geq \exp(-n\epsilon_n^2). 
\end{split}
\end{align}
for all sufficiently large $n$. Furthermore, one can see that 
\begin{align}\label{C3}
\begin{split}
    \pi^u(|\sgij|\geq(\tau p)^{-1})&=\pi^u(|\sgij|\geq(\tau p)^{-1}|\zij=0)\pi^u(\zij=0)+\pi^u(|\sgij|\geq(\tau p)^{-1}|\zij=1)\pi^u(\zij=1)\\
    &= 0\cdot(1-q)+\pi^u(|\sgij|\geq(\tau p)^{-1}|\zij=1)q\\
    &=\pi^u(|\sgij|\geq(\tau p)^{-1}|\zij=1)q\\
    &\leq q
\end{split}
\end{align}
for any $i\neq j$. Therefore, 
\begin{align}\label{C4}
\begin{split}
    \pi^u(\max_{i<j}|\sgij|<(\tau p)^{-1})&=\prod_{i<j}(1-\pi^u(|\sgij|\geq(\tau p)^{-1}))\\
    &\geq (1-q)^{p^2}\\
    &\geq \exp\left(-2\log p\right)\\
    &\geq \exp(-n\epsilon_n^2)
\end{split}
\end{align}
for all sufficiently large $n$. By (C.2) and (C.4), 
\begin{align*}
    \pi^u(\bsSig\in\calU(\tau))\geq \exp(-n\epsilon_n^2)\exp(-n\epsilon_n^2)=\exp(-2n\epsilon_n^2)
\end{align*}
for all sufficiently large $n$.
\end{proof}

\begin{proof}[\textbf{\upshape Proof of Theorem \ref{sieve}}.] We use the techniques in the proof of Theorem 5.4 in \cite{lee2021} to prove Theorem \ref{sieve}. Observe that 
\begin{align*}
    \pi(\calP_n^c)\leq \pi(s(\bsSig,\delta_n)>s_n)+\pi(||\bsSig||_{\infty}>L_n).
\end{align*}
Since $||\bsSig||_{\infty}\leq ||\bsSig||_2\tau=\bigO(1)$ and $L_n$ tends to $\infty$ as $n\rightarrow \infty$, we see that $\pi(||\bsSig||_{\infty}>L_n)=0$ for all sufficiently large $n$. So it suffices to bound $ \pi(s(\bsSig,\delta_n)>s_n)$. Following the argument in (C.3),
\begin{align*}
    \rho_n\equiv \pi^u(|\sgij|>\delta_n)&\leq q\asymp \frac{\log p}{p^2}
\end{align*}
for any $i<j$ and sufficiently large $n$. Hence, 
\begin{align}\label{C5}
     \binom{p}{2}\rho_n\lesssim \frac{p-1}{2p}\log p
     &< c_1(p+s_0)=s_n
\end{align}
for all sufficiently large $n$. Since (C.5) holds for all sufficiently large $n$, by Lemma A.3 in  \cite{song2018nearly} and the argument in the proof of Theorem 5.4 in \cite{lee2021}, 
\begin{align*}
    \pi(s(\bsSig,\delta_n)>s_n)\leq \frac{\exp(-\binom{p}{2}\text{H}(\rho_n||s_n/\binom{p}{2}))}{\sqrt{2\pi}\sqrt{2\binom{p}{2}\text{H}(\rho_n||s_n/\binom{p}{2})}},
\end{align*}
where
\begin{align*}
\text{H}(a||p)=a\log \frac{a}{p}+(1-a)\log \frac{1-a}{1-p}.
\end{align*}
Note that 
\begin{align}\label{C6}
\binom{p}{2}\text{H}(\rho_n||s_n/\binom{p}{2})=s_n\log(\frac{s_n}{\binom{p}{2}\rho_n})+\{\binom{p}{2}-s_n\}\log(\frac{\binom{p}{2}-s_n}{\binom{p}{2}-\binom{p}{2}\rho_n}).
\end{align}
Since $\rho_n\lesssim \log p/p^2$, we obtain lower estimate of the first term in (C.6) as following : 
\begin{align*}
    s_n\log (\frac{s_n}{\binom{p}{2}\rho_n})&\geq s_n \log(\frac{2c_1p(p+s_0)}{(p-1)\log p})\\
    &\geq s_n \log \sqrt{p} \\
    &=c_1/2n\epsilon_n^2
\end{align*}
for all sufficiently large $n$. Observe that 
\begin{align*}
    \frac{s_n-\binom{p}{2}\rho_n}{\binom{p}{2}(1-\rho_n)}\lesssim \frac{1}{1-\rho_n}\frac{\log p}{p}\rightarrow 0
 \end{align*} 
as $n\rightarrow \infty$ and note that $\log(1-x)\geq -2x$ for all sufficiently small $x>0$. Thus, using the similar argument as in the proof of Theorem 5.4 in \cite{lee2021}, we obtain lower estimate of the second term in (C.6) as following :
\begin{align*}
    (\binom{p}{2}-s_n)\log (\frac{\binom{p}{2}-s_n}{\binom{p}{2}-\binom{p}{2}\rho_n})\gtrsim -\frac{c_1n\epsilon_n^2}{\log p}
\end{align*}
for all sufficiently large $n$. Then clearly, 
\begin{align*}
    \binom{p}{2}\text{H}(\rho_n||s_n/\binom{p}{2})\geq c_1/2n\epsilon_n^2-\frac{c_1n\epsilon_n^2}{\log p}=c_1(\frac{1}{2}-\frac{1}{\log p})n\epsilon_n^2\rightarrow \infty 
\end{align*}
as $n\rightarrow \infty$. Consequently, 
\begin{align*}
    \pi^u(s(\bsSig,\delta_n)>s_n)&\leq \exp\left(-c_1\left(\frac{1}{2}-\frac{1}{\log p}\right)n\epsilon_n^2\right)\\
    &\leq \exp(-c_1n\epsilon_n^2/3)
\end{align*}
and by Lemma \ref{lowerboundonsieve}, 
\begin{align*}
    \pi(s(\bsSig,\delta_n)>s_n)&\leq \pi^u(s(\bsSig,\delta_n)>s_n)/\pi^u(\bsSig\in\calU(\tau))\\
    &\leq \exp(-c_1n\epsilon_n^2/3)\exp(2n\epsilon_n^2)\\
    &=\exp(-(c_1/3-2)n\epsilon_n^2)
\end{align*}
for all sufficiently large $n$. Note that $c_1/3-2>0$, since we chose $c_1$ to be larger than $6$.
\end{proof}
\begin{proof}[\textbf{\upshape Proof of Theorem \ref{neighborhood}}.] By Lemma 5.5 in  \cite{lee2021}, it suffices to show that 
\begin{align*}
    \pi(||\bsSig-\bsSig_0||_{\frob}\leq \sqrt{\frac{2}{3\tau^4\tau_0^2}}\epsilon_n)\geq \exp(-n\epsilon_n^2).
\end{align*}
Following the argument in the proof of Theorem 5.7 in \cite{lee2021}, this can be established if we show that 
\begin{align*}
 \pi(\calD_{\bsSig_0})\geq \exp\left(-\left(8+\frac{1}{\beta}\right)n\epsilon_n^2\right),    
\end{align*}
where
\begin{align*}
\calD_{\bsSig_0}=\{\max_{i<j}|\sgij-\sgij^0|\leq \sqrt{\frac{2s_0\log p}{{3np(p-1)\tau^4\tau_0^2}}},\max_i|\sgii-\sgii^0|\leq \sqrt{\frac{2\log p}{{3n\tau^4\tau_0^2}}}\}.    
\end{align*}
Since we assume $\bsSig_0\in\calU(\tau_0,s_0),\tau_0<\tau$ and $\tau^4(1-\tau_0/\tau)^2n\geq s_0\log p$ as in  \cite{lee2021}, one can see that if $\bsSig\in\calD_{\bsSig_0}$, $\bsSig\in\calU(\tau)$. Hence, as we are to obtain lower estimate of $\pi(\calD_{\bsSig_0})$, we assume independence on entries of $\bsSig$, as the constraint $\calU(\tau)$ on $\bsSig$ will only increase the prior concentration of $\calD_{\bsSig_0}$. Thus, it suffices to show that 
\begin{align*}
    \pi^u(\calD_{\bsSig_0})&\geq \exp\left(-\left(8+\frac{1}{\beta}\right)n\epsilon_n^2\right). 
\end{align*}
Observe that 
\begin{align}\label{C7}
    \pi^u(\calD_{\bsSig_0})&=\pi^u(\max_{i<j}|\sgij-\sgij^0|\leq \sqrt{\frac{2s_0\log p}{{3np(p-1)\tau^4\tau_0^2}}})\pi^u(\max_i|\sgii-\sgii^0|\leq \sqrt{\frac{2\log p}{{3n\tau^4\tau_0^2}}})\nonumber \\
    &= \underbrace{\prod_i \pi^u(\sgii^0-\sqrt{\frac{2\log p}{{3n\tau^4\tau_0^2}}}\leq \sgii\leq \sgii^0+\sqrt{\frac{2\log p}{{3n\tau^4\tau_0^2}}})}_{\text{\rom{1}}}\nonumber \\
&\underbrace{\prod_{i<j}\pi^u(\sgij^0- \sqrt{\frac{2s_0\log p}{{3np(p-1)\tau^4\tau_0^2}}}\leq \sgij\leq \sgij^0+ \sqrt{\frac{2s_0\log p}{{3np(p-1)\tau^4\tau_0^2}}})}_{\text{\rom{2}}}.
\end{align}
Since we assume $\log p/(\tau^4\tau_0^2)\leq n,\tau^4\leq p$ and $p^{-1}<\lambda<\log p/\tau_0$ as in  \cite{lee2021}, one can see that 
\begin{align}\label{C8}
\begin{split}
    \text{\rom{1}}&\geq \exp\left(-\left(3+\frac{1}{2\beta}\right)p\log p\right)\\
    &\geq \exp\left(-\left(3+\frac{1}{2\beta}\right)n\epsilon_n^2\right)
\end{split}
\end{align}
for all sufficiently large $n$. Also,
\begin{align*}
    \text{\rom{2}}&=\prod_{\sgij^0=0}\pi^u(-\sqrt{ \frac{2s_0\log p}{{3np(p-1)\tau^4\tau_0^2}}}\leq \sgij \leq \sqrt{\frac{2s_0\log p}{{3np(p-1)\tau^4\tau_0^2}}})\\
    &\prod_{\sgij^0\neq 0}\pi^u(\sgij^0- \sqrt{\frac{2s_0\log p}{{3np(p-1)\tau^4\tau_0^2}}}\leq \sgij\leq \sgij^0+ \sqrt{\frac{2s_0\log p}{{3np(p-1)\tau^4\tau_0^2}}})\\
    &=\underbrace{\prod_{\sgij^0=0}\pi^u(|\sgij|\leq \sqrt{\frac{2s_0\log p}{{3np(p-1)\tau^4\tau_0^2}}})}_{\text{\rom{3}}}\underbrace{\prod_{\sgij^0\neq 0}\pi^u(|\sgij-\sgij^0|\leq \sqrt{\frac{2s_0\log p}{{3np(p-1)\tau^4\tau_0^2}}})}_{\text{\rom{4}}}.
\end{align*}
Following the argument in (C.3),
\begin{align*}
    \pi^u(|\sgij|>\sqrt{\frac{2s_0\log p}{{3np(p-1)\tau^4\tau_0^2}}})\leq q \asymp \frac{\log p}{p^2}.
\end{align*}
Therefore, 
\begin{align}\label{C9}
\begin{split}
    \text{\rom{3}}&=\prod_{\sgij^0=0}\left(1-\pi^u(|\sgij|> \sqrt{\frac{2s_0\log p}{{3np(p-1)\tau^4\tau_0^2}}})\right)\\
    &\gtrsim \left(1-\frac{\log p}{p^2}\right)^{p^2}\\
    &\geq  \exp(-n\epsilon_n^2).
\end{split}
\end{align}
Now we obtain lower estimate of \rom{4}. Following the argument in (C.3), observe that 
\begin{align*}
\pi^u(|\sgij-\sgij^0|\leq \sqrt{\frac{2s_0\log p}{{3np(p-1)\tau^4\tau_0^2}}})&=\pi^u(|\sgij-\sgij^0|\leq \sqrt{\frac{2s_0\log p}{{3np(p-1)\tau^4\tau_0^2}}}|\zij=1)q
\end{align*}
for any $i<j$ such that $\sgij^0\neq 0$ and sufficiently large $n$, where 
the second inequality holds because $0<|\sgij^0|\leq \tau=\bigO(1)$ for all sufficiently large $n$. Thus for all sufficiently large $n$, 

\begin{align}\label{C10}
\begin{split}
    \text{\rom{4}}&=\prod_{\sgij^0\neq 0}q\int_{\sgij^0-\sqrt{\frac{2s_0\log p}{{3np(p-1)\tau^4\tau_0^2}}}}^{\sgij^0+\sqrt{\frac{2s_0\log p}{{3np(p-1)\tau^4\tau_0^2}}}}\N(\sgij|0,v^2)d\sgij\\
    &\geq \prod_{\sgij^0\neq0} \exp\left(-\log\left[\frac{\sqrt{2\pi}v}{2q}\sqrt{\frac{3np(p-1)\tau^4\tau_0^2}{2s_0\log p}}\right]-\frac{2\tau_0^2}{v^2}\right)\\ 
    &=\exp\left(-s_0\log\left[\frac{\sqrt{2\pi}v}{2q}\sqrt{\frac{3np(p-1)\tau^4\tau_0^2}{2s_0\log p}}\right]-s_0\frac{2\tau_0^2}{v^2}\right)\\
    &\geq \exp\left(-s_0 \log \sqrt{np^6}-s_0\frac{2\tau_0^2}{v^2}\right)  \\ 
    &= \exp\left(-\left(3+\frac{1}{2\beta}\right)s_0\log p-s_0\frac{2\tau_0^2}{v^2}\right) \\
    &\geq \exp\left(-\left(3+\frac{1}{2\beta}\right)(p+s_0)\log p-(p+s_0)\log p\right)\\
    &=\exp\left(-\left(4+\frac{1}{2\beta}\right)n\epsilon_n^2\right)
\end{split}
\end{align}
for all sufficiently large $n$, because $q\asymp \log p/p^2$, $|\sgij|\leq \tau$, and $2s_0\log p/(3np(p-1)\tau^4\tau_0^2)$ tends to 0 as $n\rightarrow \infty$. 
By (C.9) and (C.10),
\begin{align}\label{C11}
\text{\rom{2}}&=\text{\rom{3}}\cdot\text{\rom{4}}\geq\exp(-n\epsilon_n^2)\exp\left(-\left(4+\frac{1}{2\beta}\right)n\epsilon_n^2\right)=\exp\left(-\left(5+\frac{1}{2\beta}\right)n\epsilon_n^2\right)
\end{align}
for all sufficiently large $n$. Consequently, by (C.7), (C.8), and (C.11),
\begin{align*}
    \pi^u(\calD_{\bsSig_0})=\text{\rom{1}}\cdot \text{\rom{2}}\geq \exp\left(-\left(3+\frac{1}{2\beta}\right)n\epsilon_n^2\right)\exp\left(-\left(5+\frac{1}{2\beta}\right)n\epsilon_n^2\right)=\exp\left(-\left(8+\frac{1}{\beta}\right)n\epsilon_n^2\right),
\end{align*}
which concludes the proof. 
\end{proof}

\section{Proofs of propositions from Section \ref{bcd}}
\begin{proof}[\textbf{\upshape Proof of Proposition \ref{positivity}}.]
Define a function $m(\cdot)$ on $\bbR^p$ by
\begin{align*}
m(\bfx)=\bfx^t\bsSig_{11}^{-1}\bfS_{11}\bsSig_{11}^{-1}\bfx-2\bfx^t\bsSig_{11}^{-1}\bfs_{12}+s_{22}.    
\end{align*}
Note that $\bsSig_{11}^{-1}\bfS_{11}\bsSig_{11}^{-1}$ is positive definite with probability tending to one and so $m(\cdot)$ is convex. Thus, solving the equation $\frac{d m}{d \bfx}=\bfzero$, one can see that the global minimizer of $m(\cdot)$ is $\bfx=\bsSig_{11}\bfS_{11}^{-1}\bfs_{12}$ and $m(\hat{\bfx})=s_{22}-\bfs_{12}^t\bfS_{11}^{-1}\bfs_{12}$, which is Schur complement of $\bfS_{11}$. Since $\bfS$ is positive definite if and only if $\bfS_{11}$ and $s_{22}-\bfs_{12}^t\bfS_{11}^{-1}\bfs_{12}$ are positive definite by Corollary 14.2.14 of  \cite{harville2008}, $m(\hat{\bfx})$ is positive with probability tending to one. Therefore, with probability tending to one, $0<m(\hat{\bfx})\leq m(\bsbeta)=\bfu$.
\end{proof}
\begin{proof}[\textbf{\upshape Proof of Proposition \ref{convergence}}.]
Clearly, the initial covariance matrix is positive definite. Denote the matrix resulted from updating a single row/column of a previous matrix $\bsSig$ by $\hat{\bsSig}$. Partition $\bsSig$ and $\hat{\bsSig}$ as in (23). Provided that a previous matrix is positive definite, $\hat{\bsSig}_{11}$ and its Shur complement $\hat{\bsSig}_{11\cdot2}=\hat{\bsSig}_{22}-\hat{\bsSig}_{12}^\top \hat{\bsSig}_{11}^{-1}\hat{\bsSig}_{12}$ are positive definite by Proposition \ref{positivity} and the fact that $\hat{\bsSig}_{11}$ is the same as $\bsSig_{11}$, which is positive definite. Consequently, by Corollary 14.2.14 of  \cite{harville2008}, $\hat{\bsSig}$ is positive definite. Thus, the estimated covariance matrix by Algorithm \ref{algorithm2} is positive definite by an induction. 
 
 The convergence of Algorithm \ref{algorithm2} to the stationary point of the objective function in (17) can be established using the arguments considered by \cite{wang14} and Breheny and Huang \cite{breheny2011}. Both  Wang \cite{wang14} and Breheny and Huang \cite{breheny2011} used the results of Lemma 3.1 and part (c) of Theorem 4.1 in  \cite{tseng2001}. Although  \cite{tseng2001} considered non-convex and non-differentiable objective function, the sufficient conditions for Lemma 3.1 and part (c) of Theorem 4.1 in  \cite{tseng2001} can be extended to convex and continuously differentiable function.  
\end{proof}

\end{appendices}
\newpage

\bibliographystyle{elsarticle-harv}
\bibliography{CovLap}

\end{document}